\documentclass[11pt]{article}
\pdftrailerid{}
\usepackage[margin=1in]{geometry}
\usepackage{amsmath,amssymb,amsthm,mathtools,bm,mathrsfs}
\usepackage{enumitem}
\usepackage{booktabs}
\usepackage{array}
\usepackage{tabularx}
\usepackage{graphicx}
\usepackage{subcaption}
\usepackage{placeins}
\usepackage{float}
\graphicspath{{figures/}{./}}
\usepackage{microtype}
\usepackage{url}
\usepackage[hidelinks]{hyperref}
\hypersetup{
  pdftitle={Bayesian Confidence Recalibration and Research-Equilibrium Criticality: Temporal Support in Robust Portfolios},
  pdfauthor={Han Yan\c{c}},
  pdfsubject={Mathematical Finance manuscript},
  pdfkeywords={confidence recalibration, Bayesian ambiguity, model release, validation latency, endogenous information, equilibrium criticality, Volterra causality}
}
\usepackage[nameinlink,capitalize,noabbrev]{cleveref}
\usepackage[round,authoryear]{natbib}
\allowdisplaybreaks

\usepackage{aliascnt}
\newtheorem{theorem}{Theorem}[section]
\newaliascnt{proposition}{theorem}
\newtheorem{proposition}[proposition]{Proposition}
\aliascntresetthe{proposition}
\newaliascnt{lemma}{theorem}
\newtheorem{lemma}[lemma]{Lemma}
\aliascntresetthe{lemma}
\newaliascnt{corollary}{theorem}
\newtheorem{corollary}[corollary]{Corollary}
\aliascntresetthe{corollary}
\theoremstyle{definition}
\newaliascnt{definition}{theorem}
\newtheorem{definition}[definition]{Definition}
\aliascntresetthe{definition}
\newaliascnt{assumption}{theorem}
\newtheorem{assumption}[assumption]{Assumption}
\aliascntresetthe{assumption}
\newaliascnt{example}{theorem}

\aliascntresetthe{example}
\theoremstyle{remark}
\newaliascnt{remark}{theorem}

\aliascntresetthe{remark}

\crefname{theorem}{Theorem}{Theorems}
\Crefname{theorem}{Theorem}{Theorems}
\crefname{proposition}{Proposition}{Propositions}
\Crefname{proposition}{Proposition}{Propositions}
\crefname{lemma}{Lemma}{Lemmas}
\Crefname{lemma}{Lemma}{Lemmas}
\crefname{corollary}{Corollary}{Corollaries}
\Crefname{corollary}{Corollary}{Corollaries}
\crefname{definition}{Definition}{Definitions}
\Crefname{definition}{Definition}{Definitions}
\crefname{assumption}{Assumption}{Assumptions}
\Crefname{assumption}{Assumption}{Assumptions}
\crefname{example}{Example}{Examples}
\Crefname{example}{Example}{Examples}
\crefname{remark}{Remark}{Remarks}
\Crefname{remark}{Remark}{Remarks}

\newcommand{\R}{\mathbb{R}}
\newcommand{\E}{\mathbb{E}}

\newcommand{\Hh}{\mathcal{H}}

\newcommand{\CE}{\operatorname{CE}}
\newcommand{\Tax}{\mathcal{T}^{\mathrm{rec}}}
\newcommand{\argmax}{\operatorname*{arg\,max}}

\newcommand{\sgn}{\operatorname{sgn}}
\newcommand{\diag}{\operatorname{diag}}
\newcommand{\pc}{\mathrm{PC}}
\newcommand{\spe}{\mathrm{SPE}}
\newcommand{\fresh}{\mathrm{F}}
\newcommand{\inh}{\mathrm{I}}
\newcommand{\dd}{\,\mathrm{d}}

\title{\textbf{Bayesian Confidence Recalibration and Research-Equilibrium Criticality}\\[0.25em]
\large Temporal Support in Robust Portfolios}
\author{Han Yan\c{c}\\[0.35em]
\small School of Physical and Mathematical Sciences,\\
\small Nanyang Technological University, Singapore}
\date{}

\begin{document}
\maketitle
\vspace{-1.5em}

\begin{abstract}
Robust portfolio rules that reconstruct confidence sets after learning need not preserve the
evaluator obtained by prior-by-prior Bayesian transport. In the Gaussian model, this discrepancy
is summarized by natural-coordinate displacement: inherited transport preserves it whereas fresh
reconstruction can replace it. We price evaluator replacement and trace the resulting optimized
curvature through endogenous research.
Optimized robust value represents protocol regret as a functional Bregman divergence, while a
within-vintage rectangular Gaussian benchmark with constant absolute risk aversion (CARA) yields a stopped recalibration tax. In a
versioned model-release economy, validated history propagates through a strictly causal network
and a same-cycle share of current optimized marginal value feeds back into research supply. The
capacity-constrained equilibrium reduces to a scalar equation with protocol-indexed gain
\(\mathfrak g_I^P=\lambda\beta_{R,I}(W_R^P)''\). Purely causal validation cannot create a
same-cycle unit mode; provenance changes criticality through optimized curvature. For a
scalar primitive supplier-score shock \(z\) in direction \(h_I\) and financial outcome \(\mathcal O\), sensitivity factors as
\(\omega_{\mathcal O,h,I}/(1-\mathfrak g_I^P)\). Conditional on a smooth equilibrium
state and active cell, completion-time information sharply bounds this multiplier when all
compatible timing laws are subcritical; no finite uniform bound exists when the timing set
reaches the pole.
\end{abstract}

\noindent\textbf{Keywords:} confidence recalibration; Bayesian ambiguity; quant model release; validation latency; endogenous information; equilibrium criticality; Volterra causality.

\medskip
\noindent\textbf{JEL:} G11, D81, D83, C61.

\section{Introduction}\label{sec:intro}

A posterior confidence set is a statistical object; the ambiguity evaluator
governing a portfolio is an intertemporal preference object. The two coincide
only when the reconstructed set is the Bayesian continuation of the earlier
one. In the Gaussian model this distinction has a natural-coordinate
representation: Bayesian transport preserves the displacement of every
alternative, whereas fresh reconstruction generally substitutes a new
displacement. This paper asks one financial question: \emph{how is protocol
replacement priced, how does release timing transmit its value through
endogenous research, and when does the resulting criticality become visible in
financial outcomes?} Throughout, ``equilibrium'' refers to the
research-supply/information-acquisition fixed point rather than asset-price
market clearing.

The construction begins with the evaluator. A common Gaussian likelihood
preserves every natural displacement \(q\):
\begin{equation}\label{eq:intro-natural-transport}
 V_t^{-1}(m_t^q-m_t)=q.
\end{equation}
An inherited vintage is \(m_t+V_t\mathcal Q_s\); fresh reconstruction uses
\(m_t+V_t\mathcal Q_t\). If \(h,H\) are support
functions and
\(\mathfrak V(h)=\sup_{a\in\mathcal A}\{L(a)-\ell_{Ba}(h)\}\), the loss from
forcing evaluator \(h\) to use the action selected by \(H\) is
\begin{equation}\label{eq:intro-bregman-price}
 \boxed{\mathfrak r(h,H)=\mathfrak V(h)-\mathfrak V(H)
 +\ell_{Ba_H}(h-H)\ge0.}
\end{equation}
The divergence prices evaluator replacement. In the within-vintage rectangular Gaussian specialization with constant absolute risk aversion (CARA), the dynamic performance identity yields a stopped recalibration tax. These welfare results
are derived before specifying research supply; the release economy then takes
the resulting protocol-indexed optimized value \(W_R^P\) as its endogenous
feedback object.

A versioned quant-model release cycle supplies the financial mechanism.
Research becomes reusable only after validation, so completed earlier-stage
output enters through a strictly lower-triangular pipeline \(Q\). Current model
value can matter before the batch closes only through a same-cycle settlement
share \(\lambda\); with stochastic completion time this share becomes the
value-weighted statistic \(\lambda_\rho(\Delta)\) defined below. On a smooth
active cell \(I\), the causal pipeline resolves into the exposure
\(\beta_{R,I}\), and protocol \(P\) has same-cycle gain
\begin{equation}\label{eq:intro-release-critical-gain}
 \boxed{\mathfrak g_I^P(y)=
 \lambda\beta_{R,I}(W_R^P)''(y).}
\end{equation}
Causal validation is therefore nonsingular at \(\lambda=0\) but exposure
amplifying: holding supplier-cost coefficients and release loadings fixed, an entrywise increase in the
nonnegative validation network \(Q\) cannot reduce the full-interior \(\beta_R\). For a primitive score
shock \(z\) and differentiable release-dependent outcome \(\mathcal O\), the
local equilibrium response takes the form
\begin{equation}\label{eq:intro-financial-multiplier}
 \boxed{
 \frac{\dd\mathcal O}{\dd z}
 =\frac{\omega_{\mathcal O,h,I}(y)}{1-\mathfrak g_I^P(y)}.}
\end{equation}
The numerator is the open-loop financial loading; the denominator is the
distance to the protocol-indexed unit-gain boundary. Comparing fresh and
inherited protocols therefore amounts to comparing their optimized curvatures,
not to assigning a stability sign to provenance itself.

Temporal support explains why the denominator separates in this way. At a
smooth research equilibrium write the derivative of the cost-normalized
response as \(D\mathcal G=N+\mathcal V\), where \(\mathcal V\) is strictly
causal and \(N\) contains same-date or otherwise noncausal feedback. Under the
fractional--Volterra estimate used below, define the resolved noncausal operator
\(K:=(I-\mathcal V)^{-1}N\). Then
\begin{equation}\label{eq:intro-temporal-support-factorization}
 \boxed{D_E\mathcal F=(I-\mathcal V)(I-K),\qquad
 D_E\mathcal F\text{ singular}\iff1\in\sigma(K).}
\end{equation}
The release primitive is the scalar active-cell realization of this operator:
its only possibly nonzero resolved eigenvalue is \(\mathfrak g_I^P\). Thus
delayed validation can magnify a same-cycle channel through its causal
resolvent but cannot create a unit mode by itself. The nonlinear hierarchy is
strict: a statewise unit mode need not lie on a branch fold, a fold need not be
feasible under capacity constraints, and a feasible fold can be financially
silent when the relevant loading vanishes.

Three neighboring literatures define the comparison. First, robust portfolio
choice, confidence sets, Bayesian ambiguity, and costly information acquisition
study the evaluator under learning
\citep{GuanJiaLiang2024,GuanLiangXia2024,LiangLiuMa2026,LiangWangXia2025,LiangYe2024};
our first step instead tracks whether a reconstructed confidence set preserves
or replaces the inherited evaluator. Second, endogenous information and timing
can create complementarity or multiplicity
\citep{HellwigVeldkamp2009,MyattWallace2012,GanguliYang2009}; here temporal
support is tied to a versioned release protocol in which validated history and
same-cycle settlement enter separately. Third, nonlocal equilibrium and
feedback methods provide the operator language
\citep{LeiPun2024,WangYuZhangZhou2026,Wu2026FeedbackCycles}; the release economy
maps the resolved noncausal operator into protocol-indexed portfolio curvature
and a financially loaded sensitivity multiplier. Machine-learning portfolios,
implementability, factor vintages, post-publication return decay, and model
validation motivate the institutional setting
\citep{GuKellyXiu2020,JensenKellyMalamudPedersen2026,AkeyRobertsonSimutin2026,McLeanPontiff2016,CumminsEtAl2023,FederalReserve2026ModelRisk}.
The 2026 interagency guidance generally places validation before first use but
allows constrained use before completion under urgent business need.
\emph{The Price of Relearning} \citep{YancPriceRelearning2026} develops
ambiguity provenance, evaluator vintages, and their directed welfare consequences
in general dynamic decisions. The Gaussian natural-coordinate transport identity is
a shared specialization. The analysis here begins from its protocol-specific
optimized financial curvature and traces that curvature through a versioned
research--validation--release equilibrium, with validation timing determining
financial sensitivity.

Sections~\ref{sec:inheritance}--\ref{sec:model} identify evaluator replacement
and its welfare price. Section~\ref{sec:endogenous} isolates the role of temporal
support in endogenous research. Section~\ref{sec:model-release} develops the
versioned release economy, its protocol-indexed critical boundary, and the
financial multiplier. The Electronic Companion (EC) contains proofs, hard-response
and vanishing-regularization boundaries, nonlinear-cost and finite-factor
extensions, and the worked scalar benchmarks.

\section{Preference transport and statistical coverage}\label{sec:inheritance}

\subsection{Gaussian learning and the compatibility coordinate}\label{subsec:financial-setup}

The money-market account is the numeraire. One risky asset satisfies
\begin{equation}\label{eq:stock}
 \frac{\dd S_t}{S_t}=\theta\,\dd t+\sigma\,\dd W_t^S,
 \qquad \sigma>0,
\end{equation}
where the constant excess drift \(\theta\) is unknown. In the exogenous-information benchmark the investor also observes
\begin{equation}\label{eq:private-signal}
 \dd Y_t=\sqrt{\iota(t)}\,\theta\,\dd t+\dd W_t^Y,
\end{equation}
with deterministic \(\iota(t)\ge0\). For a Gaussian prior,
\begin{align}
 \dd m_t
 &=v_t\left(\sigma^{-1}\dd\widehat W_t^S+\sqrt{\iota(t)}\dd\widehat W_t^Y\right),
 \label{eq:filter-m}\\
 \dd v_t
 &=-v_t^2\bigl(\sigma^{-2}+\iota(t)\bigr)\dd t.
 \label{eq:filter-v}
\end{align}
A dollar risky position \(\pi_t\) generates
\begin{equation}\label{eq:wealth-nominal}
 \dd X_t=\pi_tm_t\,\dd t+\pi_t\sigma\,\dd\widehat W_t^S,
\end{equation}
and utility is CARA, \(U(x)=-e^{-\gamma x}\).  The filter in \eqref{eq:filter-m}--\eqref{eq:filter-v} is the linear--Gaussian specialization of standard nonlinear filtering; see \citet{BainCrisan2009}.  Portfolio choice with an unknown drift under partial information is classical; see \citet{Lakner1998}.

\begin{definition}[Bayesian inheritance and fresh recalibration]\label{def:protocols}
At date \(s\), write a common-covariance ambiguity set of posterior means as
\(\mathcal M_s=m_s+V_s\mathcal Q_s\). Its \emph{Bayesian inheritance} at \(t>s\) is obtained by updating every member through the same realized likelihood. A \emph{freshly recalibrated} set is reconstructed from the date-\(t\) posterior. The two protocols are dynamically compatible when the two sets agree after every common update.
\end{definition}

Let the unknown drift be \(d\)-dimensional and suppose the likelihood increment from \(s\) to \(t\) is
\begin{equation}\label{eq:gaussian-likelihood}
 L_{s,t}(\theta)\propto
 \exp\!\left\{\theta^\top b_{s,t}-\frac12\theta^\top A_{s,t}\theta\right\},
 \qquad A_{s,t}\succeq0.
\end{equation}
Then
\begin{equation}\label{eq:matrix-posterior}
 V_t^{-1}=V_s^{-1}+A_{s,t},
 \qquad
 V_t^{-1}m_t=V_s^{-1}m_s+b_{s,t}.
\end{equation}
For an alternative mean \(m_s^q=m_s+V_sq\), the following identity is pathwise in the realized sufficient statistic.

\begin{proposition}[Gaussian natural-coordinate invariance]\label{thm:matrix-inheritance}
Prior-by-prior Bayesian updating preserves the natural displacement:
\begin{align}
 \boxed{m_t^q=m_t+V_tq}
 &\label{eq:matrix-q-invariance}\\
 \boxed{V_t^{-1}(m_t^q-m_t)=V_s^{-1}(m_s^q-m_s)=q}
 &\label{eq:q-invariance}
\end{align}
Consequently,
\begin{equation}\label{eq:matrix-inherited-set}
 \mathcal M^I_{s\to t}=m_t+V_t\mathcal Q_s,
\end{equation}
so a fresh set \(m_t+V_t\mathcal Q_t\) is compatible with the earlier one exactly when \(\mathcal Q_t=\mathcal Q_s\).
\end{proposition}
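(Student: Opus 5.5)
The plan is to compute the posterior of each alternative prior directly from the conjugate update \eqref{eq:matrix-posterior} and read off its displacement. The alternative prior at date \(s\) is Gaussian with mean \(m_s^q=m_s+V_sq\) and the common covariance \(V_s\). Updating it by the same likelihood \eqref{eq:gaussian-likelihood} gives, by \eqref{eq:matrix-posterior} applied prior by prior, the posterior precision \(V_s^{-1}+A_{s,t}=V_t^{-1}\). This precision is unchanged across members, so the family stays common-covariance. The posterior natural mean is
\begin{equation*}
 V_t^{-1}m_t^q=V_s^{-1}m_s^q+b_{s,t}=V_s^{-1}m_s+q+b_{s,t}=V_t^{-1}m_t+q .
\end{equation*}
Multiplying by \(V_t\) yields \eqref{eq:matrix-q-invariance}, and rearranging yields \eqref{eq:q-invariance}. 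The only point to check is invertibility of \(V_t^{-1}\): it equals \(V_s^{-1}+A_{s,t}\) with \(V_s\succ0\) and \(A_{s,t}\succeq0\), so it is positive definite. The argument uses only the realized \((b_{s,t},A_{s,t})\), which is the pathwise claim.

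Next I take the image of the set. Applying the map \(q\mapsto m_t+V_tq\) to each \(q\in\mathcal Q_s\) gives \(\mathcal M^I_{s\to t}=m_t+V_t\mathcal Q_s\), which is \eqref{eq:matrix-inherited-set}.

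For the compatibility statement I use Definition~\ref{def:protocols}. Dynamic compatibility asks that the fresh set \(m_t+V_t\mathcal Q_t\) equal the inherited set \(m_t+V_t\mathcal Q_s\), and that the two continue to agree after any further common update. The affine map \(q\mapsto m_t+V_tq\) is a bijection because \(V_t\) is invertible. Hence the two sets at date \(t\) coincide if and only if \(\mathcal Q_t=\mathcal Q_s\).

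It remains to show that agreement persists after later updates. Applying the result just proved on \([t,u]\) maps both sets to \(m_u+V_u\mathcal Q_t\) and \(m_u+V_u\mathcal Q_s\). These again agree exactly when \(\mathcal Q_t=\mathcal Q_s\). So agreement at \(t\) is equivalent to agreement after every common update.

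I expect no real obstacle; the computation is mechanical. The only subtlety is interpretive: compatibility must be stated for the mean sets with the covariance held common, and I rely on invertibility of \(V_t\) to pass from equality of sets to equality of the displacement sets \(\mathcal Q\).
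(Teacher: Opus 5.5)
Your proposal is correct and follows the paper's own route: the common likelihood adds the same \(A_{s,t}\) and \(b_{s,t}\) to both natural parameters, so the alternative's natural mean is shifted by exactly \(q\), and multiplying by \(V_t\) gives the identity; the inherited set and the compatibility criterion then follow by taking images under the invertible affine map \(q\mapsto m_t+V_tq\). Your explicit checks that \(V_t^{-1}=V_s^{-1}+A_{s,t}\) is positive definite and that agreement persists under later common updates are small additions that the paper leaves implicit.
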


\begin{proof}
The alternative prior has natural mean parameter \(V_s^{-1}m_s+q\). The common likelihood adds the same precision \(A_{s,t}\) and natural statistic \(b_{s,t}\) to both priors, so
\(V_t^{-1}m_t^q=V_t^{-1}m_t+q\). Multiplication by \(V_t\) proves the result.
\end{proof}

\subsection{What fresh confidence reconstruction changes}\label{subsec:fresh-finance-scope}

A nonempty compact convex body is uniquely determined by its support function
\citep{Schneider2013}. Equation~\eqref{eq:q-invariance} therefore reduces the
financial protocol choice to the natural-coordinate body: Bayesian inheritance
carries the earlier body $\mathcal Q_s$, whereas fresh reconstruction substitutes the
current body $\mathcal Q_t$. The remainder of the paper prices that replacement and
studies its propagation through research supply.

The distinction is already nontrivial for an ordinary Gaussian credible
interval.  In one dimension, with posterior precision $p_t=v_t^{-1}$, a
fixed-level fresh interval has natural-coordinate half-width
\begin{equation}\label{eq:main-fresh-natural-budget}
 k_t^{\fresh}=z_\alpha\sqrt{p_t},
\end{equation}
where $z_\alpha>0$ denotes the standard-normal critical value specified by the institution's confidence rule, whereas Bayesian inheritance preserves the earlier natural budget
$k_s$.  Thus nontrivial learning implies
\begin{equation}\label{eq:main-fresh-vs-inherited-budget}
 \boxed{k_t^{\fresh}\ne k_s^{\inh}}
\end{equation}
unless the confidence rule is deliberately changed to respect transport.
This is the primitive intervention priced in the next section.

\section{Welfare price of protocol replacement}\label{sec:model}

\subsection{Portfolio problem and global selector}

At a state $(m,v)$, a Gaussian alternative inherited from an earlier date has mean $m+vq$. Under its predictive law,
\begin{align}
\dd X_t&=\pi_t(m_t+v_tq_t)\,\dd t+\pi_t\sigma\,\dd W_t^{S,q},
\label{eq:wealth-q}\\
\dd m_t&=v_t^2I_tq_t\,\dd t+v_t\left(\sigma^{-1}\dd W_t^{S,q}+\sqrt{\iota(t)}\dd W_t^{Y,q}\right),
\label{eq:m-q}\\
\dd v_t&=-v_t^2I_t\,\dd t.
\label{eq:v-q}
\end{align}
For a single inherited prior, $q_t\equiv q$ by \cref{thm:matrix-inheritance}. To obtain a recursive value for one evaluator, we use a maxmin multiple-prior criterion \citep{GilboaSchmeidler1989} and impose a within-vintage rectangular distortion class, in the recursive-rectangularity sense of \citet{EpsteinSchneider2003}:
\[
\mathcal Q_k:=\{q:\ q\text{ is progressively measurable and }|q_u|\le k\text{ a.e.}\}.
\]
The particular progressively measurable drift-distortion class $\mathcal Q_k$ is our continuous-time model specialization; the cited multiple-priors work supplies the rectangularity principle rather than this Brownian implementation.
The dynamic implementation below is therefore within-vintage rectangular. The budget $k$ is frozen for that evaluator, while a future self acting at state $(m_u,v_u)$ uses the newly prescribed budget $\kappa(v_u)$. The generally nonrectangular prior-by-prior transported family is not rectangularized.

\begin{assumption}[Localized admissibility]\label{ass:admissibility}
Fix a bounded open state cylinder $\mathcal D\subset[0,T)\times\R\times(0,\infty)$ and localize at its first exit. Admissible portfolio controls are progressively measurable, locally bounded after localization, square integrable under every $q\in\mathcal Q_k$, and satisfy the uniform exponential-moment condition needed for the CARA criterion. Candidate feedbacks and bounded spike deviations are taken from this class. All equilibrium statements below are local to $\mathcal D$.
\end{assumption}

For an evaluator with frozen budget $k$, define the robust certainty equivalent of an admissible continuation control $\pi$ by
\begin{equation}\label{eq:frozen-ce-definition}
\mathfrak J_t^k(\pi)
:=-\frac1\gamma\log
\sup_{q\in\mathcal Q_k}
\E_t^q\!\left[
\exp\{-\gamma(X_T^\pi-X_t)\}
\right],
\end{equation}
with the localized continuation understood as in \cref{ass:admissibility}.

\begin{definition}[Current-vintage commitment]\label{def:commitment}
At state $(t,m,v)$, the current evaluator freezes $k=\kappa(v)$ and chooses the entire admissible continuation control to maximize \eqref{eq:frozen-ce-definition}.  A maximizer is denoted $\pi^{\pc,k}$ and the optimized commitment certainty equivalent is
\[
 F^k(t,m,v):=\sup_{\pi}\mathfrak J_t^k(\pi)
 =\mathfrak J_t^k(\pi^{\pc,k}).
\]
The superscript $\pc$ refers to commitment of the control and of the evaluator's ambiguity budget, not to a different statistical experiment.  We reserve $f^k$ below for the evaluator-$k$ policy value generated by following the future sophisticated feedback; $F^k$ and $f^k$ are therefore not identified off the current diagonal.
\end{definition}

\begin{definition}[Sophisticated recalibration equilibrium (SPE)]\label{def:spe}
Write $\beta^*=\pi^*$ in the exogenous-information model and
$\beta^*=(\pi^*,e^*)$ when research is controlled.  For $h>0$, let
$\eta=(\eta^\pi,\eta^e)$ be a bounded admissible observation-feedback spike on $[t,t+h)$, with $\eta^e\in[0,\bar e]$; in the exogenous model the second component is omitted.  Let $\beta^{h,\eta}$ use $\eta$ on $[t,t+h)$ and $\beta^*$ thereafter.  The feedback $\beta^*$ is a weak sophisticated equilibrium if, at every interior state and for the current budget $k=\kappa(v)$,
\begin{equation}\label{eq:spike-definition}
\limsup_{h\downarrow0}
\frac{\mathfrak J_t^k(\beta^{h,\eta})-\mathfrak J_t^k(\beta^*)}{h}
\le0
\qquad\text{for every bounded admissible spike }\eta.
\end{equation}
Here $\mathfrak J_t^k(\pi)$ is \eqref{eq:frozen-ce-definition} in the exogenous model; in the endogenous model it is the same robust certainty equivalent with the wealth drift reduced by $c(e)$ and with the controlled observation law described in \cref{subsec:controlled-filter} below.  A family $f^k$ is the associated frozen-vintage continuation family when
$V^k(t,x,m,v)=-\exp\{-\gamma[x+f^k(t,m,v)]\}$
represents the evaluator-$k$ utility from following the future equilibrium feedback.
\end{definition}

For a smooth continuation $f$, the CARA transform gives
\begin{align}
\Hh(k;\pi;f)
={}&\pi m-v^2I_tf_v+\frac12v^2I_tf_{mm}
-\frac\gamma2\left[\sigma^2\pi^2+2v\pi f_m+v^2I_tf_m^2\right]
\nonumber\\
&-kv\left|\pi+vI_tf_m\right|.
\label{eq:Hamiltonian}
\end{align}
The equilibrium PDE system corresponding to \cref{def:spe} is
\begin{align}
\partial_tf^k+\Hh(k;\pi^*;f^k)&=0,
&f^k(T,m,v)&=0,
\label{eq:frozen-pde}\\
\pi^*(t,m,v)&\in\argmax_\pi\Hh\!\left(\kappa(v);\pi;f^{\kappa(v)}\right).
\label{eq:diag-selector}
\end{align}
The derivative in \eqref{eq:diag-selector} is taken with the current evaluator's $k$ frozen. It is not the total derivative of the diagonal map $v\mapsto f^{\kappa(v)}$.

The Electronic Companion gives the stopped-spike verification showing that,
under its stated local regularity conditions, the displayed diagonal and
joint selectors coincide with the selectors of the dynamic spike problem.

\paragraph{Current selector.}
For the stock-only/no-extra-signal benchmark, set $\iota\equiv0$ and
suppress endogenous observation effort, $e\equiv0$, so
$I_t=\sigma^{-2}$.  Completing the robust quadratic then gives the exact
soft-threshold exposure.  For $a\ge0$, write
$\mathsf S_a(x):=\operatorname{sgn}(x)(|x|-a)_+$.  With the current diagonal
budget $k^\circ:=\kappa(v)$,
\[
 \pi^*+\frac v{\sigma^2}f_m^{k^\circ}
 =\frac{\mathsf S_{k^\circ v}(m)}{\gamma\sigma^2}.
\]
Thus the switching surfaces are \(m=\pm\kappa(v)v\).  This formula is used only
to interpret active faces; its coefficient derivation is given in the Electronic
Companion.

The local curvature of protocol regret is useful for asymptotics but is not
needed for the global comparison.  We proceed directly to the value-envelope
identity, which remains exact at active-set changes.

\subsection{Protocol regret beyond scalar ambiguity budgets}\label{subsec:full-shape-regret}

The scalar ambiguity budget is a ray inside a support-function space.  Let
\(\mathbb E=C(S^{d-1})\) with the supremum norm, let \(\mathscr H_d\subset
\mathbb E\) be the cone of support functions of nonempty compact convex
subsets of \(\R^d\), and define the homogeneous evaluation functional
\begin{equation}\label{eq:homogeneous-evaluation-main}
 \ell_z(h)=
 \begin{cases}
  \|z\|h(z/\|z\|),&z\ne0,\\
  0,&z=0.
 \end{cases}
\end{equation}
For a nonempty compact convex set $Q\subset\mathbb R^d$, write its support function as
$h_Q(u):=\sup_{q\in Q}q^\top u$ \citep{Schneider2013}.  Thus
\(\ell_z(h_Q)=h_Q(z)\).  In the dual measure representation,
\(\ell_z=\|z\|\delta_{z/\|z\|}\) for \(z\ne0\); the weight \(\|z\|\) is
part of the functional and cannot be suppressed.  Let
\(\mathcal A\subset\R^n\) be compact and
convex, let \(B:\R^n\to\R^d\) be linear, and let \(L\) be continuous and
strongly concave.  The ambient extension
\begin{equation}\label{eq:shape-value-functional-main}
 \mathfrak V(h)=\max_{a\in\mathcal A}\{L(a)-\ell_{Ba}(h)\},
 \qquad h\in\mathbb E,
\end{equation}
is useful because it permits ordinary Banach-space subgradients.  When
\(h\in\mathscr H_d\), the objective is strongly concave and its maximizer
\(a_h\) is unique.

\begin{theorem}[Protocol regret and Bregman representation]\label{thm:global-budget-regret}
Put \(R_B=\sup_{a\in\mathcal A}\|Ba\|\).  The functional
\(\mathfrak V:\mathbb E\to\R\) is convex and \(R_B\)-Lipschitz.  At every
\(H\in\mathbb E\) with a unique optimizer,
\begin{equation}\label{eq:shape-subgradient-main}
 \xi_H:=-\ell_{Ba_H}\in\partial\mathfrak V(H),
 \qquad
 \mathfrak V'(H;g)=-\ell_{Ba_H}(g),
\end{equation}
where the directional derivative is Hadamard.  For support functions
\(h,H\in\mathscr H_d\), the welfare loss when evaluator \(h\) is forced to
use the action selected by \(H\) is the generalized functional
Bregman divergence associated with \(\mathfrak V\) \citep{Bregman1967}
\begin{equation}\label{eq:functional-bregman-regret-main}
 \boxed{
 \mathfrak r(h,H)
 :=\mathfrak V(h)-[L(a_H)-\ell_{Ba_H}(h)]
 =\mathfrak V(h)-\mathfrak V(H)-\langle\xi_H,h-H\rangle
 \ge0.}
\end{equation}
It vanishes if and only if \(a_H\) is also optimal under \(h\), hence exactly
when \(a_H=a_h\) under strong concavity.

Localized noncompact extensions, the quadratic-finance face decomposition, and the
smooth-stratum differential geometry are collected in the Electronic Companion.
\end{theorem}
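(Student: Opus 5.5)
The plan is to read \(\mathfrak V\) as a pointwise maximum of affine functionals and let every claim follow from that. For fixed \(a\in\mathcal A\), the map \(h\mapsto L(a)-\ell_{Ba}(h)\) is continuous and affine on \(\mathbb E\). Indeed, \(\ell_z\) is linear in \(h\): it is a point evaluation scaled by \(\|z\|\). Its operator norm is \(|\ell_z(h)|\le\|z\|\,\|h\|_\infty\), so \(\|\ell_z\|=\|z\|\). Hence \(\mathfrak V\) is a supremum of affine maps and is convex. Each map in the family is \(\|Ba\|\)-Lipschitz and \(\|Ba\|\le R_B\), so the supremum is \(R_B\)-Lipschitz. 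Compactness of \(\mathcal A\), continuity of \(L\) and joint continuity of \((a,h)\mapsto\ell_{Ba}(h)\) show that the maximum is attained and \(\mathfrak V\) is finite. Joint continuity holds at \(Ba=0\) as well, since \(|\ell_z(h)|\le\|z\|\|h\|_\infty\).

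The subgradient claim is the standard support inequality. For any \(g\in\mathbb E\),
\[
\mathfrak V(g)\ge L(a_H)-\ell_{Ba_H}(g)=\mathfrak V(H)-\ell_{Ba_H}(g-H),
\]
which says exactly that \(\xi_H=-\ell_{Ba_H}\in\partial\mathfrak V(H)\). For the directional derivative I would use Danskin's theorem on the compact index set \(\mathcal A\). The one-sided derivative of a max of affine functions equals \(\max_{a\in\mathcal A^*(H)}\{-\ell_{Ba}(g)\}\), where \(\mathcal A^*(H)\) is the argmax set. When the optimizer is unique this reduces to \(-\ell_{Ba_H}(g)\), linear in \(g\).

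The upgrade from Gateaux to Hadamard is the main technical point. I would take the route that convex, locally Lipschitz functions on a normed space have Hadamard directional derivatives equal to their Gateaux ones; Lipschitz continuity makes the difference quotients stable under perturbing \(g_\tau\to g\). Alternatively one can argue directly. For \(g_\tau\to g\) and \(\tau\downarrow0\), optimizers \(a_\tau\) of \(H+\tau g_\tau\) converge to \(a_H\) by compactness and uniqueness. This gives the two-sided bounds
\[
-\ell_{Ba_H}(g_\tau)\le\frac{\mathfrak V(H+\tau g_\tau)-\mathfrak V(H)}{\tau}\le-\ell_{Ba_\tau}(g_\tau),
\]
and both sides tend to \(-\ell_{Ba_H}(g)\) by joint continuity.

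For the Bregman identity, take \(h,H\in\mathscr H_d\). On this cone the objective is strongly concave: \(L\) is strongly concave, and \(a\mapsto\ell_{Ba}(h)=h_Q(Ba)\) is convex because a support function is sublinear and composed with a linear map. So \(a_H\) and \(a_h\) exist and are unique. Linearity of \(\ell_{Ba_H}\) gives
\[
L(a_H)-\ell_{Ba_H}(h)=\mathfrak V(H)-\ell_{Ba_H}(h-H)=\mathfrak V(H)+\langle\xi_H,h-H\rangle,
\]
which is the stated identity. Nonnegativity is then just the subgradient inequality, or directly the fact that \(\mathfrak V(h)\) is a maximum over \(a\). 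Equality holds iff \(a_H\) attains the maximum for evaluator \(h\). By uniqueness of the maximizer under \(h\), this happens iff \(a_H=a_h\).

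I expect no real obstacle. The only subtle points are two checks. The first is that \(\ell_z\) is genuinely linear and continuous on all of \(\mathbb E\), not just on the cone, including \(z=0\). The second is the Hadamard upgrade, which needs the compactness-plus-uniqueness convergence of \(a_\tau\).
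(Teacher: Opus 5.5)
Your proposal is correct and follows essentially the same route as the paper. Convexity and the $R_B$-Lipschitz bound come from a supremum of affine maps, the subgradient from the affine minorant generated by $a_H$, the Hadamard derivative from the same compactness-plus-uniqueness squeeze along $t_j\downarrow0$, $g_j\to g$, and the Bregman identity and its equality case from direct subtraction plus uniqueness of $a_h$.

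One small caution: the usual Banach-space form of Danskin's theorem requires $a\mapsto-\ell_{Ba}$ to be norm-continuous into $\mathbb E^*$, and that fails here. For nonzero $z,z'$ pointing in different directions, $\|\ell_z-\ell_{z'}\|_{\mathbb E^*}=\|z\|+\|z'\|$. So the value $-\ell_{Ba_H}(g)$ is really supplied by your direct squeeze argument, which needs only joint continuity of $(a,g)\mapsto\ell_{Ba}(g)$.
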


The scalar ambiguity-budget model is the ray restriction of this representation.  For
$h=kh_Q$ and $H=\bar k h_Q$, define
\begin{equation}\label{eq:budget-value-envelope}
 \varphi_Q(k)=\mathfrak V(kh_Q).
\end{equation}
Then
\begin{equation}\label{eq:global-regret-bregman}
 \boxed{\mathfrak r(k,\bar k)=D_{\varphi_Q}(k,\bar k).}
\end{equation}
The Electronic Companion gives the corresponding action/face decomposition, the
smooth-stratum Hessian and its decision-visible null space, and the scalar
soft-threshold formulas used at active-face changes.

\subsection{Dynamic performance identity and recalibration tax}\label{sec:exogenous}

Let \(\iota(t)\) be bounded and deterministic and let \(v_t\) solve
\eqref{eq:filter-v}.  Put
\begin{equation}\label{eq:alpha-def}
 \tau=T-t,\qquad D(t,v)=\sigma^2+\tau v,\qquad
 \alpha(t,v)=\sigma^2/D(t,v).
\end{equation}

\paragraph{Gaussian--CARA coefficient closure.}\hypertarget{thm:no-hedge}{}
On any stopped region where the relevant evaluators remain on the positive
common exposed face, the sophisticated auxiliary value \(f^k\) has
\begin{equation}\label{eq:quadratic-f-v3}
 f^k(t,m)=A(t)m^2+k\beta(t)m+C^k(t),
\end{equation}
where
\begin{equation}\label{eq:universal-coeffs}
 \boxed{A(t)=\frac{T-t}{2\gamma[\sigma^2+(T-t)v_t]},\qquad
 1+\gamma\beta(t)=\alpha(t,v_t).}
\end{equation}
The optimized commitment value \(F^k\) has the same \(m\)-dependent
coefficients \(A\) and \(k\beta\) on this face, although its constant term is
generally different.  Hence \(F_m^k=f_m^k\), and at a fresh state
\begin{equation}\label{eq:pi-universal}
 \boxed{\pi_t^{\spe}=\frac{\alpha(t,v_t)}{\gamma\sigma^2}
 [m_t-\kappa(v_t)v_t]=\pi_t^{\pc}.}
\end{equation}
Moreover,
\begin{equation}\label{eq:fresh-sign-general-e}
 \sgn(\pi_t^{\spe}+v_tI_tf_m^{\kappa(v_t)})
 =\sgn[m_t-\kappa(v_t)v_t].
\end{equation}
Thus recalibration changes continuation welfare but not the current interior
risky position while the common face is maintained.

\paragraph{Common-face finance specialization.}\hypertarget{cor:cara-protocol-regret}{}
For evaluator budget \(k\) and selector budget \(\bar k\),
\begin{equation}\label{eq:portfolio-budget-response}
 \pi_u^{\bar k}-\pi_u^k
 =-\frac{\alpha(u,v_u)v_u}{\gamma\sigma^2}(\bar k-k),
\end{equation}
and
\begin{equation}\label{eq:instantaneous-cara-regret}
 \boxed{\mathfrak r_u(k,\bar k)
 =\frac{\gamma\sigma^2}{2}(\pi_u^{\bar k}-\pi_u^k)^2
 =\frac{\alpha(u,v_u)^2v_u^2}{2\gamma\sigma^2}(\bar k-k)^2.}
\end{equation}
The corresponding \(n\)-asset fixed-face identity is recorded in the EC; it
has the same quadratic protocol-gap structure on the free portfolio
coordinates.

For any bounded admissible feedback \(\widehat\pi\), let
\(g^{k,\widehat\pi}(t,m,v):=\mathfrak J_t^k(\widehat\pi)\) be the evaluator-\(k\)
certainty-equivalent policy value and define its dynamic performance gap
\begin{equation}\label{eq:dynamic-performance-gap-definition}
 \mathcal P_t^k[\widehat\pi]
 :=F^k(t,m,v)-g^{k,\widehat\pi}(t,m,v).
\end{equation}
The recalibration tax is the specialization to the current vintage and the
future sophisticated policy,
\begin{equation}\label{eq:recalibration-tax-definition}
 \boxed{\Tax_t=\mathcal P_t^{k_t}[\pi^{\spe}]
 =F^{k_t}(t,m_t,v_t)-g^{k_t,\pi^{\spe}}(t,m_t,v_t),
 \qquad k_t=\kappa(v_t).}
\end{equation}
Both terms are evaluated by the current vintage. Within-vintage
rectangularity isolates evaluator replacement from the distinct
commitment-versus-sophistication wedge of a nonrectangular prior family. At a
fixed state, minimizing over one \(q\in[-k_t,k_t]\) and using the pointwise
selector of \(\mathcal Q_{k_t}\) generate the same Hamiltonian support term
\(-k_tv\lvert\pi+vI_tF_m^{k_t}\rvert\); their difference is intertemporal.
Thus \(\Tax_t\) is the dynamic price of replacement in the
recursive-rectangular Gaussian implementation indexed by the statistical
natural budget \(k_t\). The Electronic Companion records the one-step
coincidence.

\begin{theorem}[Dynamic performance and common-face recalibration identities]
\label{thm:global-performance-identity}\label{cor:exact-por}
Let $F^k$ be the optimized current-vintage commitment certainty equivalent and
let $g^{k,\widehat\pi}$ be the evaluator-$k$ value of a bounded admissible
feedback $\widehat\pi$.  On every bounded localization cylinder assume that
both functions are classical $C^{1,2,1}$ solutions in $(t,m,v)$, with the
derivatives entering the Hamiltonian continuous, and that the comparison SDE
specified in the Electronic Companion is well posed and its stopped local
martingale is a true martingale.  Define
\begin{equation}\label{eq:hamiltonian-gap-definition-main}
 \mathfrak R^{k,\widehat\pi}
 :=\mathcal H(k;\pi^{\pc,k};F^k)
   -\mathcal H(k;\widehat\pi;F^k)\ge0,
\end{equation}
and put $\Delta=F^k-g^{k,\widehat\pi}$.

\textup{(i) Global identity.}
If an exhausting localization $\tau_n\uparrow T$ has a uniformly integrable
boundary family and the two evaluations share the terminal condition, then
\begin{equation}\label{eq:global-performance-FK}
 \boxed{F^k-g^{k,\widehat\pi}
 =\mathbb E_{t,m,v}^{k,\widehat\pi}\!\left[
 \int_t^T\mathfrak R_s^{k,\widehat\pi}\,\dd s\right]\ge0.}
\end{equation}
The growth conditions used to guarantee these hypotheses are recorded in the
Electronic Companion.

\textup{(ii) Stopped identity and switching source.}
For every admissible localization time $\tau$, the comparison law satisfies
\begin{equation}\label{eq:stopped-performance-FK}
 \boxed{\Delta(t,m,v)
 =\mathbb E_{t,m,v}^{k,\widehat\pi}\!\left[
 \Delta(\tau,M_\tau,V_\tau)
 +\int_t^\tau\mathfrak R_s^{k,\widehat\pi}\,\dd s\right].}
\end{equation}
Thus stopping at an exit boundary never removes its continuation value.  With
\[
 x_F=m+\gamma v(\sigma^2I_t-1)F_m^k,\qquad
 y_k=\mathsf S_{kv}(x_F)/(\gamma\sigma^2),\qquad
 \widehat y=\widehat\pi+vI_tF_m^k,
\]
and $\zeta_k\in\partial|y_k|$ chosen at the optimizer,
\begin{equation}\label{eq:dynamic-face-switching-source}
 \boxed{\mathfrak R^{k,\widehat\pi}
 =\frac{\gamma\sigma^2}{2}(\widehat y-y_k)^2
 +kv[|\widehat y|-|y_k|-\zeta_k(\widehat y-y_k)].}
\end{equation}

\textup{(iii) Common-face recalibration tax.}
Fix $k_t=\kappa(v_t)$, set $\widehat\pi=\pi^{\spe}$, and let
$\tau_{\mathrm{cf}}$ be the first exit from the common positive face, capped
at $T$.  Then
\begin{align}
 \Tax_t
 =\mathbb E_{t,m_t,v_t}^{k_t,\pi^{\spe}}\!\Bigg[
 &\Delta(\tau_{\mathrm{cf}},M_{\tau_{\mathrm{cf}}},
 V_{\tau_{\mathrm{cf}}}) \nonumber\\
 &+\int_t^{\tau_{\mathrm{cf}}}
 \frac{\alpha(u,v_u)^2v_u^2}{2\gamma\sigma^2}
 [\kappa(v_u)-\kappa(v_t)]^2\,\dd u\Bigg].
 \label{eq:stopped-common-face-tax}
\end{align}
If the boundary gap vanishes, only the first term disappears: the source is
still integrated to $\tau_{\mathrm{cf}}$.  The common-face source can be
integrated to $T$, and the closed form below can be used, only under the
no-exit condition $\tau_{\mathrm{cf}}=T$ almost surely. Accordingly, the
following formula is a facewise no-exit benchmark. In that case
\begin{equation}\label{eq:exact-tax-v3}
 \boxed{\Tax_t=\int_t^T
 \frac{\alpha(u,v_u)^2v_u^2}{2\gamma\sigma^2}
 [\kappa(v_u)-\kappa(v_t)]^2\,\dd u,}
\end{equation}
which is nonnegative and vanishes exactly when the natural ambiguity budget is
constant almost everywhere on $[t,T]$.  For the stock-only fixed-level credible rule
$\kappa(v)=z_\alpha/\sqrt v=z_\alpha\sqrt p$, the EC evaluates this no-exit
integral in closed form.
\end{theorem}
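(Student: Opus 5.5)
The plan is a Feynman--Kac/Dynkin comparison of the two PDEs satisfied by \(F^k\) and \(g^{k,\widehat\pi}\) under the comparison law.

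\textbf{PDE structure.} Since \(F^k\) is the optimized commitment value with frozen budget \(k\), it satisfies the HJB equation
\[
\partial_tF^k+\mathcal H(k;\pi^{\pc,k};F^k)=0 .
\]
The policy value \(g^{k,\widehat\pi}\) satisfies the linear-in-control equation
\[
\partial_tg+\mathcal H(k;\widehat\pi;g)=0 .
\]
Both hold classically on the localization cylinder, with common terminal value \(0\).

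\textbf{Comparison drift.} The comparison SDE is the \((m,v)\) dynamics \eqref{eq:m-q}--\eqref{eq:v-q} driven by \(\widehat\pi\) under the worst-case distortion attaining the support term in \(\mathcal H(k;\widehat\pi;\cdot)\). After the CARA transform, \(\mathcal H(k;\widehat\pi;\cdot)\) is concave in the gradient through the \(-\tfrac{\gamma}{2}(\cdot)^2\) and \(-kv|\cdot|\) terms. I would therefore write
\[
\mathcal H(k;\widehat\pi;F)-\mathcal H(k;\widehat\pi;g)=\mathcal L^{k,\widehat\pi}\Delta+(\text{nonneg.\ remainder})
\]
and choose the drift with the measurable selection \(q_s=-k\,\zeta\) evaluated at \(g\) together with the quadratic cross term \(-\gamma v\,(\widehat\pi+vI_t g_m)\). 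This makes the remainder vanish, i.e.\ it linearizes exactly. This choice of the comparison measure, equivalently an exponential change of measure under which \(e^{-\gamma(X+g)}\) is a martingale, is what the statement's ``comparison SDE'' encodes, and I would take it from the Electronic Companion hypotheses.

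\textbf{Stopped identity.} Subtracting the two equations gives
\[
\partial_t\Delta+\mathcal L^{k,\widehat\pi}\Delta=-\big[\mathcal H(k;\pi^{\pc,k};F^k)-\mathcal H(k;\widehat\pi;F^k)\big]=-\mathfrak R^{k,\widehat\pi}.
\]
Here \(\mathfrak R^{k,\widehat\pi}\ge0\) because \(\pi^{\pc,k}\) maximizes \(\mathcal H(k;\cdot;F^k)\); this uses the earlier coincidence of diagonal and dynamic selectors. Applying It\^o to \(\Delta(s,M_s,V_s)\) up to \(\tau\) and using the true-martingale hypothesis gives (ii), \eqref{eq:stopped-performance-FK}.

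\textbf{Global identity.} For (i), take \(\tau=\tau_n\uparrow T\). Uniform integrability of \(\Delta(\tau_n,\cdot)\) and the shared terminal condition send the boundary term to \(0\). Monotone convergence handles the nonnegative integral.

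\textbf{Switching source.} For \eqref{eq:dynamic-face-switching-source}, complete the square in \(y=\pi+vI_tF_m^k\). Collecting the \(\pi\)-dependent terms of \eqref{eq:Hamiltonian} gives
\[
\mathcal H=-\tfrac{\gamma\sigma^2}{2}y^2+\gamma\sigma^2 y\,\frac{x_F}{\gamma\sigma^2}-kv|y|+\text{const},
\]
and this is the step where the definition of \(x_F\) enters. The maximizer is the soft threshold \(y_k\), with \(x_F-\gamma\sigma^2y_k=kv\zeta_k\). Substituting \(\widehat y\) and subtracting yields the quadratic term plus the Bregman term of \(|\cdot|\). This is exactly the one-dimensional instance of \cref{thm:global-budget-regret}.

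\textbf{Recalibration tax.} For (iii), set \(\widehat\pi=\pi^{\spe}\) and \(k=k_t\), and stop at \(\tau_{\mathrm{cf}}\). On the common positive face, \(\zeta=1\) for both selectors and the absolute-value Bregman term is zero. By \eqref{eq:pi-universal} applied at the later state \(u\), \(\pi^{\spe}_u\) is the face selector with budget \(\kappa(v_u)\). Meanwhile \(\pi^{\pc,k_t}\) is the selector with budget \(k_t\), using \(F_m^k=f_m^k\) from the coefficient closure. Then \eqref{eq:instantaneous-cara-regret} with \(\bar k=\kappa(v_u)\) gives the integrand. The no-exit formula \eqref{eq:exact-tax-v3} follows with \(\tau_{\mathrm{cf}}=T\) and \(\Delta(T)=0\). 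Vanishing holds iff \(\alpha^2v_u^2>0\) multiplies a zero square a.e., and \(v_u>0\) on the cylinder.

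\textbf{Main obstacle.} I expect the hard step to be the exact linearization: choosing the comparison law so that the nonlinear (quadratic plus \(|\cdot|\)) dependence of \(\mathcal H\) on the gradient cancels when comparing \(F\) and \(g\), leaving only \(\mathfrak R\). I would try first the ``\(g\)-optimal distortion'' measure, under which \(g\)'s PDE is linear. I would then check that the residual from \(F\)'s nonlinearity is absorbed because \(F\) and \(g\) are evaluated at the same \(q\)-selector on the face. Off the face, where this fails, I would fall back on the stated EC hypothesis that the comparison SDE realizes the identity.
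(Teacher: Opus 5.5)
\textbf{Gap in the key step.} The step you flagged as the main obstacle does not go through as proposed: the derivation of $\partial_t\Delta+\mathcal L\Delta=-\mathfrak R^{k,\widehat\pi}$. Suppose you take the drift tangent at $g$, that is, the worst-case selector $q=-k\zeta_g$ with $\zeta_g\in\partial|A_g|$, $A_g=\widehat\pi+vI_tg_m$, plus the CARA shift $-\gamma v(\widehat\pi+vI_tg_m)$. This does not make the remainder vanish. Put $A_F=\widehat\pi+vI_tF^k_m$ and let $\mathcal L_g$ be the generator with drift $b_g=-\gamma v\widehat\pi-\gamma v^2I_tg_m-kv^2I_t\zeta_g$. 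Then exactly
\[
 \mathcal H(k;\widehat\pi;F^k)-\mathcal H(k;\widehat\pi;g)
 =\mathcal L_g\Delta-\frac{\gamma}{2}v^2I_t\Delta_m^2
 -kv\bigl[|A_F|-|A_g|-\zeta_g(A_F-A_g)\bigr].
\]
Both correction terms are nonpositive and generically nonzero whenever $\Delta_m\ne0$. That is the typical situation for an arbitrary bounded feedback $\widehat\pi$.

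Under your law, concavity of $\mathcal H$ in the gradient therefore gives only the one-sided bound $\Delta\le\mathbb E_g[\Delta(\tau)+\int_t^\tau\mathfrak R\,\dd s]$. It does not give the identity (ii), and hence not (i). Your proposed rescue, that $F$ and $g$ share the same $q$-selector on the face, removes at most the absolute-value Bregman term. It never removes the quadratic term $-\tfrac{\gamma}{2}v^2I_t\Delta_m^2$. Deferring to ``the EC hypothesis that the comparison SDE realizes the identity'' assumes exactly what has to be constructed. The theorem only hypothesizes well-posedness of a specific SDE; deriving the linear equation for $\Delta$ with that drift is the content of the proof.

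\textbf{The missing idea.} You need a secant (difference-quotient) linearization rather than a tangent one, with a drift that depends on both $F^k$ and $g$. Use $F_m^2-g_m^2=(F_m+g_m)\Delta_m$ and $|A_F|-|A_g|=\chi(A_F-A_g)=\chi vI_t\Delta_m$. Here $\chi=(|A_F|-|A_g|)/(A_F-A_g)$ is the measurable secant, set to $0$ when $A_F=A_g$, so that $|\chi|\le1$. The choice
\[
 b^{k,\widehat\pi}=-\gamma v\widehat\pi-\frac{\gamma}{2}v^2I_t(F_m+g_m)-kv^2I_t\chi
\]
gives the exactly linear equation $\Delta_t-v^2I_t\Delta_v+\tfrac12v^2I_t\Delta_{mm}+b^{k,\widehat\pi}\Delta_m+\mathfrak R^{k,\widehat\pi}=0$. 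Its coefficients are bounded on each localization cylinder. It\^o's formula under $\dd M=b^{k,\widehat\pi}\dd s+V\sqrt{I}\,\dd B$, $\dd V=-V^2I\,\dd s$ then yields (ii), and your limiting argument yields (i).

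This comparison law is not the $g$-worst-case Girsanov measure. Its quadratic part is the average of the two CARA shifts, and $-k\chi$ is an intermediate distortion rather than an extremal one.

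\textbf{The rest of the plan.} The completed square in $y$ with the stated $x_F$ is fine. So is the common-face evaluation of the source via $F^{k_t}_m=f^{k_t}_m$ and the two soft-threshold selectors. In (iii) your tangent version happens to be harmless, because $g^{k_t,\pi^{\spe}}=f^{k_t}$ gives $\Delta_m=0$ before $\tau_{\mathrm{cf}}$. Parts (i)--(ii), however, are asserted for arbitrary bounded $\widehat\pi$. In the no-exit case, you should also say that the integrand depends only on the deterministic variance path, so the expectation drops out.
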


\section{Temporal support in endogenous research}\label{sec:endogenous}

The preceding section compresses protocol replacement into optimized financial
value. We now ask which temporal components of research supply transmit that curvature
into equilibrium feedback and when the critical mode lies on a fold. Endogenous
research can also alter estimation or predictive risk, so its total value need not be
a literal support-only composition. Let
$E$ denote the research state and write the optimized financial value as
\begin{equation}\label{eq:parameterized-optimized-envelope-main}
 W(E):=\max_{a\in\mathcal A}J(a,E).
\end{equation}

\begin{theorem}[Optimized financial value and temporal decomposition]
\label{thm:welfare-curvature-transmission}
\textup{(i) Optimized-value curvature.}
Let $X$ be a real Hilbert space, $U\subset X$ open,
$\mathcal A\subset\mathbb R^n$ open, and
$J:\mathcal A\times U\to\mathbb R$ be $C^2$. Suppose there are a
neighborhood $U_0$ of $E_*$ and a relatively compact open
$A_0\Subset\mathcal A$ such that, for every $E\in U_0$, the unique global
maximizer $a(E)$ is interior and lies in $A_0$. If
$D^2_{aa}J(a(E_*),E_*)$ is negative definite, then, after shrinking $U_0$,
$a(\cdot)$ is $C^1$ and $W(E)=J(a(E),E)$ is twice Fr\'echet differentiable.
For $u,v\in X$,
\begin{align}
 DW(E)[u]&=D_EJ(a(E),E)[u],
 \label{eq:parameterized-envelope-first-main}\\
 D^2W(E)[u,v]
 &=D^2_{EE}J[u,v]
 -D^2_{Ea}J[u]\,[D^2_{aa}J]^{-1}D^2_{aE}J[v],
 \label{eq:parameterized-envelope-second-main}
\end{align}
with the derivatives on the second line evaluated at $(a(E),E)$. On a smooth
support-function stratum, if $h:U\to\mathbb E$ is $C^2$ and
$J(a,E)=L(a)-\ell_{Ba}(h(E))$, then
\begin{align}
 D^2W(E)[u,v]
 ={}&D^2\mathfrak V(h(E))[Dh(E)u,Dh(E)v]\nonumber\\
 &+D\mathfrak V(h(E))[D^2h(E)[u,v]].
 \label{eq:welfare-curvature-transmission}
\end{align}

\textup{(ii) Temporal decomposition on $L^\infty$.}
Let $\mathbb X=L^\infty(0,S;\mathbb R^d)$ and write the datewise marginal
research value as
\begin{equation}\label{eq:datewise-marginal-research-value-main}
 \Gamma(E)(s)=D_eW_s\!\left(E(s);\Xi(E)(s)\right).
\end{equation}
Suppose $\Gamma$ is Fr\'echet differentiable at an equilibrium $E_*$ and
\begin{equation}\label{eq:datewise-marginal-value-derivative-main}
 D\Gamma(E_*)=H_*^{\mathrm{sd}}+\mathcal L_*,
\end{equation}
where $H_*^{\mathrm{sd}}$ is bounded pointwise multiplication and
$\mathcal L_*$ is bounded and strictly causal. The Electronic Companion gives
primitive Carath\'eodory--Nemytskii conditions under which
\eqref{eq:datewise-marginal-value-derivative-main} follows from a $C^1$
continuation-state map $\Xi$.

Let $c_s$ be $C^2$ and suppose the pointwise inverse marginal-cost map defines
a $C^1$ response $\mathcal G$ near $E_*$, with $\mathcal G(E_*)=E_*$. If
$C_*(s)=D^2c_s(E_*(s))$ is uniformly positive definite, then
\begin{equation}\label{eq:financial-response-decomposition-main}
 \boxed{
 D\mathcal G(E_*)=M_*^{\mathrm{sd}}+\mathcal V_*,
 \qquad
 M_*^{\mathrm{sd}}=C_*^{-1}H_*^{\mathrm{sd}},
 \qquad
 \mathcal V_*=C_*^{-1}\mathcal L_*.}
\end{equation}
If $\mathcal L_*$ obeys the fractional Volterra bound below, so does
$\mathcal V_*$. Hence
\begin{equation}\label{eq:financial-master-operator-main}
 \boxed{K_*=(I-\mathcal V_*)^{-1}M_*^{\mathrm{sd}}.}
\end{equation}
For a static same-date problem $\mathcal L_*=0$. In the scalar timing family,
the fixed-state critical equality is
\begin{equation}\label{eq:scalar-curvature-balance}
 \boxed{\lambda W''(e)=c''(e).}
\end{equation}
\end{theorem}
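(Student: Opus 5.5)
Part (i) is an implicit-function-theorem argument followed by differentiation of the envelope. Since $J$ is $C^2$ on $\mathcal A\times U$ and $a(E_*)$ is an interior maximizer, it satisfies $D_aJ(a(E_*),E_*)=0$. Because $D^2_{aa}J(a(E_*),E_*)$ is invertible, the Banach-space implicit function theorem applied to $(a,E)\mapsto D_aJ(a,E)$ gives a $C^1$ local branch $\tilde a(E)$ of critical points. The point to check is that this branch coincides with the global maximizer $a(E)$. For this we use the relative compactness $a(E)\in A_0\Subset\mathcal A$ together with uniqueness: any limit point of $a(E_n)$ as $E_n\to E_*$ is a global maximizer at $E_*$, by continuity of $J$ and of $\sup_{A_0}J(\cdot,E)$. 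Hence $a(E)\to a(E_*)$. Once $E$ is close enough, $a(E)$ lies in the IFT neighborhood, where the critical point is unique, so $a=\tilde a$.

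The derivative formulas then follow by the chain rule.
\begin{itemize}
\item First order: $DW=D_EJ+D_aJ\,Da=D_EJ$, which is \eqref{eq:parameterized-envelope-first-main}.
\item Smoothness: the right side $E\mapsto D_EJ(a(E),E)$ is $C^1$, so $W$ is twice Fréchet differentiable.
\item Derivative of the branch: differentiating $D_aJ(a(E),E)=0$ gives $Da[v]=-[D^2_{aa}J]^{-1}D^2_{aE}J[v]$.
\item Second order: differentiating $D_EJ(a(E),E)[u]$ in direction $v$ gives $D^2_{EE}J[u,v]+D^2_{Ea}J[u]\,Da[v]$. Substituting $Da[v]$ yields \eqref{eq:parameterized-envelope-second-main}.
\end{itemize}

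For the support-function specialization, note that $W(E)=\mathfrak V(h(E))$ exactly on a neighborhood, because the maximizer is unique. On the smooth stratum, $\mathfrak V$ is twice differentiable along the image of $h$. The assumed regularity is what this step relies on, and I would cite the EC for it. The second-order chain rule for $\mathfrak V\circ h$ then gives \eqref{eq:welfare-curvature-transmission} directly. A consistency check: the first-order term equals $-\ell_{Ba_h}(D^2h[u,v])$, using \cref{thm:global-budget-regret}.

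Part (ii) comes from linearizing the equilibrium map. The response $\mathcal G$ is defined pointwise by the first-order condition $Dc_s(\mathcal G(E)(s))=\Gamma(E)(s)$. Differentiating at $E_*$, where $\mathcal G(E_*)=E_*$, gives
\[
C_*(s)\,D\mathcal G(E_*)[u](s)=D\Gamma(E_*)[u](s).
\]
By uniform positive definiteness, $C_*^{-1}$ is a bounded multiplication operator on $L^\infty$. Composing with \eqref{eq:datewise-marginal-value-derivative-main} yields \eqref{eq:financial-response-decomposition-main}. Left multiplication by a pointwise operator preserves both pieces of the structure:
\begin{itemize}
\item Causality is preserved: if $\mathcal L_*u$ on $[0,s]$ depends only on $u|_{[0,s)}$, the same holds for $C_*^{-1}\mathcal L_*u$.
\item The fractional Volterra bound is preserved: if $|\mathcal L_*u(s)|\le c\int_0^s(s-r)^{\alpha-1}|u(r)|\,dr$, the same bound holds with constant $c\,\|C_*^{-1}\|_\infty$.
\end{itemize}

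Invertibility of $I-\mathcal V_*$ is the step I expect to carry the most weight. The plan is to show the Neumann series $\sum_n\mathcal V_*^n$ converges. Iterating the kernel bound gives
\[
\|\mathcal V_*^n\|\le \frac{(c'\,\Gamma(\alpha)S^\alpha)^n}{\Gamma(n\alpha+1)},
\]
which is summable by the Mittag-Leffler bound. Hence $(I-\mathcal V_*)^{-1}$ is bounded, and $K_*$ in \eqref{eq:financial-master-operator-main} is well defined. In the static case there is no intertemporal term, so $\mathcal L_*=0$.

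For the scalar timing family, take $H^{\mathrm{sd}}=\lambda W''(e)$ as the same-cycle share of marginal value curvature. Then $K=\lambda W''/c''$, and the critical condition $1\in\sigma(K)$ reduces to \eqref{eq:scalar-curvature-balance}.
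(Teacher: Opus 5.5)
Your proposal is correct and follows essentially the paper's route: continuity of the optimizer on $\overline{A_0}$ identifies it with the implicit-function branch, then envelope differentiation, the second-order chain rule for $\mathfrak V\circ h$ on the smooth stratum, pointwise inversion of $C_*$, and the fractional Volterra/Mittag--Leffler Neumann series for $(I-\mathcal V_*)^{-1}$. The only piece you leave implicit is the Carath\'eodory--Nemytskii argument that the paper writes out to justify the pointwise chain rule in $L^\infty$ and to derive $D\Gamma(E_*)=H_*^{\mathrm{sd}}+\mathcal L_*$ from a $C^1$ strictly causal $\Xi$; the statement itself largely relegates this to hypotheses and the Electronic Companion.
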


In the Gaussian timing benchmark below, \(p=e+1/2\) enters both the ambiguity
term and the quadratic
predictive-risk coefficient, so \(H_*^{\mathrm{sd}}\) is the same-date
curvature of the full optimized financial value rather than ambiguity
curvature alone.  The support-space Bregman result remains exact for evaluator
replacement with the remaining financial primitives fixed.

Across timing regimes the relevant change is the temporal support of marginal
information value.  Compatible governance admits an exact precision-time
reparametrization and a hard current selector.  Fresh recalibration instead
requires research supply to reproduce the continuation value it helps create,
which is the fixed point analyzed below.

\subsection{Controlled information, joint selection, and the precision clock}
\label{subsec:controlled-filter}\label{subsec:compatible-hard-control}

Research is a bounded predictable observation functional,
\[
 e_t=\varepsilon(t,R_{\cdot\wedge t},Y_{\cdot\wedge t})\in[0,\bar e],
 \qquad \dd Y_t=\sqrt{e_t}\,\theta\,\dd t+\dd W_t^Y.
\]
A Gaussian prior remains Gaussian and its precision satisfies
\begin{equation}\label{eq:controlled-precision}
 p_t=v_0^{-1}+t/\sigma^2+\int_0^te_s\,\dd s.
\end{equation}
The likelihood derivation, inherited-distortion law, and posterior-mean filter
are in the Electronic Companion; predictability is imposed in the canonical
observation filtration rather than by treating research as an exogenous
schedule.

\begin{assumption}[Uniformly convex research cost]\label{ass:strong-research-cost}
The cost \(c:[0,\bar e]\to\mathbb R_+\) is \(C^3\), increasing, satisfies
\(c(0)=0\), and \(c''(e)\ge\chi_0>0\).
\end{assumption}
For a frozen evaluator, joint concavity gives a unique projected research KKT
rule.  Precision is also an exact control clock: with inverse time \(t(p)\),
\begin{equation}\label{eq:precision-time-calendar}
 \dd\bar t_p=(\sigma^{-2}+\bar e_p)^{-1}\dd p,
 \qquad \dd\langle\bar m\rangle_p=p^{-2}\dd p.
\end{equation}
Thus research changes learning speed but not posterior-mean quadratic variation
per unit precision.  The same EC derivation records the full Hamiltonian, KKT
system, reachable domain, and precision-time Bellman equation.

\subsection{Fresh recalibration and the causal--noncausal separator}\label{subsec:fresh-bifurcation}

For the stopped Gaussian responses satisfying the EC differentiability and
kernel hypotheses, current research enters only later continuation
coefficients, so the derivative is strictly Volterra. Within that class pure
causal feedback cannot generate a smooth saddle-node even when a horizon-wide
small-gain inequality fails. The separator below isolates this temporal-support
argument from any global contraction requirement; the EC gives the corresponding
vanishing-regularization transfer theorem.

On $L^\infty$, ``strictly causal'' is understood modulo null sets: for almost
every date $s$, the output up to $s$ depends only on the input history before
$s$.  Every quantitative use below is through the displayed fractional
Volterra bound, which is representative-independent and makes the temporal
support precise. For spectral statements, real response spaces and operators
are understood through their standard complexifications.

\begin{theorem}[Causal--noncausal equilibrium separator]
\label{thm:causal-noncausal-separator}
Let $H$ be a Banach space, $\mathbb X=L^\infty(0,S;H)$, and let
$\mathcal G_\vartheta:\mathbb X\to\mathbb X$ have residual
$\mathcal F(E,\vartheta)=E-\mathcal G_\vartheta(E)$.  Put
$\eta=1-\alpha$ for $0\le\alpha<1$.

\textup{(i) Nonlinear causal uniqueness with a contractive diagonal.}
Suppose that, for some $0\le\mu<1$,
\begin{equation}\label{eq:abstract-causal-lipschitz-main}
 \|\mathcal G_\vartheta(E_1)(s)-\mathcal G_\vartheta(E_2)(s)\|_H
 \le \mu\|E_1(s)-E_2(s)\|_H
 +C\int_0^s(s-r)^{-\alpha}
 \|E_1(r)-E_2(r)\|_H\,\dd r.
\end{equation}
Then $\mathcal G_\vartheta$ has at most one fixed point.  Pure strict
causality is the case $\mu=0$; no horizon-wide small-gain restriction on $C$
is needed.

\textup{(ii) Smooth causal--noncausal factorization.}
At an equilibrium suppose
\begin{equation}\label{eq:noncausal-volterra-decomposition-main}
 D_E\mathcal G_\vartheta(E)=N+\mathcal V,
\end{equation}
where $N$ is bounded and $\mathcal V$ satisfies the linear fractional-Volterra
bound obtained from part~\textup{(i)} with $\mu=0$.  Then
\begin{equation}\label{eq:abstract-volterra-power-main}
 \|\mathcal V^n\|
 \le\frac{[C\Gamma(\eta)S^\eta]^n}{\Gamma(n\eta+1)},
 \qquad r(\mathcal V)=0.
\end{equation}
Thus $R=(I-\mathcal V)^{-1}$ exists and
\begin{equation}\label{eq:causal-noncausal-factorization-main}
 \boxed{D_E\mathcal F
 =(I-\mathcal V)[I-\mathcal K],
 \qquad \mathcal K=RN.}
\end{equation}
Consequently
\begin{equation}\label{eq:separator-spectrum-main}
 D_E\mathcal F\text{ is singular}
 \quad\Longleftrightarrow\quad 1\in\sigma(\mathcal K).
\end{equation}
Moreover,
\begin{equation}\label{eq:mittag-leffler-resolvent-main}
 \|R\|\le E_\eta(C\Gamma(\eta)S^\eta),
 \qquad
 E_\eta(x)=\sum_{n=0}^\infty\frac{x^n}{\Gamma(n\eta+1)},
\end{equation}
so $\|N\|E_\eta(C\Gamma(\eta)S^\eta)<1$ is a sufficient
causal-adjusted nonsingularity condition. In the local $C^1$ setting, the
implicit-function theorem then excludes a smooth local bifurcation. In
particular, when $N=0$ and the stated Volterra bound holds, the classical
linearization is invertible; in the local $C^1$ setting no smooth local
bifurcation occurs.

The pure same-date finite-grid spectrum boundary, its
finite-dimensional equality case, and an infinite-dimensional counterexample
are stated and proved in the Electronic Companion.

\textup{(iii) Finite-rank causal compression.}
If $N=U_0L$ has rank at most $r$, with bounded $U_0:\mathbb C^r\to\mathbb X$
and $L:\mathbb X\to\mathbb C^r$, define
$B:=L(I-\mathcal V)^{-1}U_0\in\mathbb C^{r\times r}$. Then
\begin{align}
 I-\lambda K\text{ is invertible}
 &\quad\Longleftrightarrow\quad I_r-\lambda B\text{ is invertible},
 \label{eq:finite-rank-invertibility-main}\\
 \sigma(K)\setminus\{0\}&=\sigma(B)\setminus\{0\},\qquad
 1\in\sigma(K)\Longleftrightarrow\det(I_r-B)=0.
 \label{eq:finite-rank-spectrum-main}
\end{align}
Thus strictly causal history is resolved before the noncausal factor dimension
is compressed.
\end{theorem}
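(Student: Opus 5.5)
The proof runs in three parts, one for each part of the statement. Each rests on one device: iterating the fractional Volterra kernel. Its \(n\)-fold convolution is controlled by the Beta-function identity \(\int_r^s(s-u)^{-\alpha}(u-r)^{n\eta-1}\dd u=\frac{\Gamma(\eta)\Gamma(n\eta)}{\Gamma((n+1)\eta)}(s-r)^{(n+1)\eta-1}\).

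\textbf{Part (i).} Let \(E_1,E_2\) be fixed points and put \(\phi(s)=\|E_1(s)-E_2(s)\|_H\in L^\infty\). The bound \eqref{eq:abstract-causal-lipschitz-main} gives \((1-\mu)\phi(s)\le C\int_0^s(s-r)^{-\alpha}\phi(r)\dd r\) for a.e.\ \(s\). Dividing by \(1-\mu>0\) gives a linear fractional Gronwall inequality with constant \(C'=C/(1-\mu)\). Iterating it \(n\) times and using the Beta identity yields
\[
\phi(s)\le\|\phi\|_\infty\,\frac{[C'\Gamma(\eta)]^n s^{n\eta}}{\Gamma(n\eta+1)}.
\]
The right-hand side tends to \(0\) as \(n\to\infty\), since \(\Gamma(n\eta+1)\) grows superexponentially. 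Hence \(\phi=0\) a.e., and no smallness condition on \(C\) is needed. Measurability of \(\phi\) and the a.e.\ interpretation of the bound (modulo null sets, as stipulated) are the only technical points.

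\textbf{Part (ii).} Apply the same iteration to the linear operator \(\mathcal V\). Since \(\|\mathcal Vx(s)\|\le C\int_0^s(s-r)^{-\alpha}\|x(r)\|\dd r\), induction gives
\[
\|\mathcal V^nx(s)\|\le\frac{[C\Gamma(\eta)]^ns^{n\eta}}{\Gamma(n\eta+1)}\|x\|_\infty .
\]
Taking the supremum over \(s\le S\) gives \eqref{eq:abstract-volterra-power-main}. Stirling's formula then yields \(\|\mathcal V^n\|^{1/n}\to0\), so \(r(\mathcal V)=0\).

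The same bound applies to \(\lambda\mathcal V\) for every complex \(\lambda\), so \(I-\lambda\mathcal V\) is invertible, via the Neumann series, for all \(\lambda\); in the complexification \(\sigma(\mathcal V)=\{0\}\). In particular \(R=\sum_n\mathcal V^n\) converges absolutely, and summing the power bounds gives the Mittag-Leffler estimate \eqref{eq:mittag-leffler-resolvent-main}.

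The factorization \eqref{eq:causal-noncausal-factorization-main} is algebra: \(I-N-\mathcal V=(I-\mathcal V)(I-RN)\). Because \(I-\mathcal V\) is a bijection, \(D_E\mathcal F\) is invertible iff \(I-\mathcal K\) is, which is \eqref{eq:separator-spectrum-main}. The sufficient condition follows from \(\|\mathcal K\|\le\|R\|\|N\|<1\). The statements about the implicit-function theorem and about \(N=0\) are then immediate.

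\textbf{Part (iii).} Here \(K=RN=(RU_0)L\). Apply Sylvester's determinant identity, or equivalently the push-through identity for \(AB\) and \(BA\), with \(A=RU_0\) and \(B'=L\); then \(LA=B\). The inverses correspond explicitly:
\[
(I-\lambda AL)^{-1}=I+\lambda A(I_r-\lambda LA)^{-1}L,
\qquad
(I_r-\lambda LA)^{-1}=I_r+\lambda L(I-\lambda AL)^{-1}A .
\]
These give \eqref{eq:finite-rank-invertibility-main}. Taking \(\lambda=1/\mu\) for \(\mu\ne0\) gives equality of the nonzero spectra, and \(\lambda=1\) gives the determinant criterion.

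\textbf{Main obstacle.} The main difficulty is the rigorous \(L^\infty\)-valued iteration in (i) and (ii). One must check that each iterate of the kernel bound holds a.e.\ with the correct Beta constants, and that the supremum over \(s\) produces \(S^{n\eta}\) uniformly. I would handle this by working with the scalar majorant functions explicitly rather than with the operators.
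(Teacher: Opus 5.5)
Your proposal is correct and follows the paper's proof essentially step for step. Parts (i) and (ii) use the same fractional Gronwall/Volterra iteration with constant $C/(1-\mu)$ and the same Neumann--Mittag--Leffler bound; the paper phrases your Beta identity as the Riemann--Liouville semigroup law $I^\eta I^\gamma=I^{\eta+\gamma}$. Both arguments then factor through the invertible Volterra operator $I-\mathcal V$, and part (iii) uses the same Woodbury/push-through inverse formulas with $U=RU_0$.
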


The logical hierarchy used below is
\begin{equation}\label{eq:criticality-hierarchy-main}
 \text{statewise unit mode}
 \;\not\Rightarrow\;
 \text{branch criticality}
 \;\not\Rightarrow\;
 \text{feasible fold}
 \;\not\Rightarrow\;
 \text{financial materiality}.
\end{equation}
\paragraph{Statewise criticality and fold certification.}
For a reference equilibrium that persists along a linear same-cycle path
\(D_E\mathcal G=\mathcal V+\lambda N\), put
\(R=(I-\mathcal V)^{-1}\) and \(K=RN\). If \(K\) is compact with an algebraically simple positive eigenvalue
\(\rho=r(K)\), the statewise
spectral threshold is \(\lambda_{\mathrm{spec}}=\rho^{-1}\). Along the \(C^1\) primitive-parameter branches specified in the EC, the simple
critical mode has a nonzero inverse-linear sensitivity residue precisely when the
limiting primitive perturbation loads on the left critical mode; a financial
observable inherits that leading term only when it also loads on the right critical
mode. A branch saddle-node requires
more: at an equilibrium with one-dimensional kernel and cokernel and Fredholm
index zero (automatic here for identity minus compact after resolving the
Volterra factor), a scalar unfolding parameter must satisfy the transversality condition
\(\langle\psi,\mathcal F_\vartheta\rangle\ne0\), and the full quadratic
coefficient
\(\frac12\langle\psi,D^2_{EE}\mathcal F[\phi,\phi]\rangle\) must be nonzero.
The Electronic Companion gives the corresponding simple-mode resolvent expansion
and Lyapunov--Schmidt reduction.

\paragraph{Finite-factor financial realization.}
For \(Nx=\sum_{j=1}^r u_j\ell_j(x)\), the compression in
\cref{thm:causal-noncausal-separator} has entries
\begin{equation}\label{eq:finite-factor-matrix}
 \boxed{B_{ij}=\ell_i((I-\mathcal V)^{-1}u_j).}
\end{equation}
It preserves the entire causal history and compresses only the noncausal factor
dimension.  The Electronic Companion gives a two-date mechanism check and the
precise finite-grid boundary.

For a nonempty closed convex set $C$ and $x\in C$, use the convex-analytic normal cone
$N_C(x):=\{\nu:\langle\nu,z-x\rangle\le0\text{ for every }z\in C\}$.

\paragraph{Hard-causal boundary.}
The abstract separator has a constructive counterpart.  On a reverse-time
finite grid, the componentwise projected response is a unique triangular
recursion with identity diagonal in every differentiable residual.  In
continuous time, a Lipschitz selector composed with a stopped response obeying
the stated fractional--Volterra estimate on a closed invariant class has a
unique fixed point reached by Picard iteration; when the two maps are
Fr\'echet differentiable and the selector derivative is bounded and
pointwise, the residual derivative is invertible.  The Electronic Companion
states the Banach-lattice domain, proves the Mittag--Leffler estimate, and gives
stopped, regularized, and local analytic certificates. This is the pure-causal baseline
for the same-cycle settlement channel introduced next.

\section{Versioned model release and financial criticality}
\label{sec:model-release}\label{subsec:model-release}

A released model version carries the confidence evaluator constructed in
\cref{sec:inheritance,sec:model}. For a protocol
\(P\in\{\fresh,\inh\}\), precision \(y\) indexes a posterior covariance
\(\Sigma(y)\), a protocol-specific evaluator \(\mathcal M^P(y)\), and optimized
portfolio value
\begin{equation}\label{eq:model-release-value-bridge}
 W_R^P(y):=\max_a J\!\left(a;\mathcal M^P(y),\Sigma(y)\right).
\end{equation}
The reduction below is stated for a generic \(W_R\); setting
\(W_R=W_R^P\) gives the corresponding protocol. The Gaussian fresh and
inherited specializations are compared below. Staged effort produces validated
information capital; settlement converts its marginal portfolio value into suppliers'
private return; latency determines how much of that return is available within the
same decision cycle.

Consider one quant-model release cycle with stages
\(n=0,\ldots,J-1\).  A unit mass of atomless, price-taking research suppliers
chooses \(x_{i,n}\in[0,\bar e_n]\), with aggregate
\(e_n=\int_0^1x_{i,n}\,\dd i\).  Validated research capital evolves as
\begin{equation}\label{eq:model-release-capital}
 q_{n+1}=\delta_nq_n+\kappa_ne_n,
 \qquad q_0\ge0,\quad 0\le\delta_n\le1,\quad \kappa_n\ge0.
\end{equation}
At stage \(n\), inherited validated capital raises marginal productivity by
\(a_n^{\mathrm{val}}q_n\), where \(a_n^{\mathrm{val}}\ge0\).  Let
\(\ell_n\ge0\) be stage \(n\)'s contribution to released precision,
\begin{equation}\label{eq:model-release-precision}
 y=\ell^\top e,\qquad e=(e_0,\ldots,e_{J-1})^\top.
\end{equation}
Before the research batch closes, a fraction \(\lambda\in[0,1]\) of the marginal
release-value settlement uses the current price \(W_R'(y)\); the remaining fraction
uses a predetermined inherited shadow price \(\bar\Gamma\ge0\). The derivative
\(W_R'(y)\) is the competitive marginal value of an infinitesimal contribution
to released precision; atomless suppliers take this price as given. After absorbing the
nonnegative predetermined contribution of \(q_0\) into
\(\widetilde b^0\in\mathbb R_+^J\), supplier \(i\) at stage \(n\) solves
\begin{align}\label{eq:model-release-supplier-payoff}
 \max_{0\le x\le\bar e_n}\quad
 x\Big[\widetilde b_n^0-\vartheta d_n+a_n^{\mathrm{val}}
       (q_n-q_n^{(0)})
 +\ell_n\{(1-\lambda)\bar\Gamma+\lambda W_R'(y)\}\Big]
 -\frac{c_n}{2}x^2,
\end{align}
where \(q_n^{(0)}=(\prod_{j=0}^{n-1}\delta_j)q_0\),
\(c_n>0\), and \(d_n\ge0\).  Equivalently, solving the capital recursion
forward gives the strictly lower triangular matrix
\begin{equation}\label{eq:model-release-Q}
 Q_{nr}=a_n^{\mathrm{val}}\kappa_r
 \prod_{j=r+1}^{n-1}\delta_j\quad(r<n),
 \qquad Q_{nr}=0\quad(r\ge n).
\end{equation}
Strict lower triangularity in \(Q\) records staged validation: stage \(n\) may use
completed output from earlier stages, while output from the current stage cannot
return through the validated-capital channel before that stage closes.
Same-cycle settlement therefore enters separately through the release-value
term. The atomless formulation closes aggregate release value as a competitive
rational-expectations fixed point.

Put \(C=\diag(c_0,\ldots,c_{J-1})\),
\(A=C-Q\), \(\bar y=\ell^\top\bar e\), and
\[
 u_\lambda(y):=(1-\lambda)\bar\Gamma+\lambda W_R'(y).
\]
For \(u\in\mathbb R\), define the stagewise release response recursively by
\begin{align}
 \mathscr E_{\vartheta,0}(u)
 &=\Pi_{[0,\bar e_0]}
   \frac{\widetilde b_0^0-\vartheta d_0+\ell_0u}{c_0},
 \nonumber\\
 \mathscr E_{\vartheta,n}(u)
 &=\Pi_{[0,\bar e_n]}
   \frac{\widetilde b_n^0-\vartheta d_n
   +\sum_{r<n}Q_{nr}\mathscr E_{\vartheta,r}(u)+\ell_nu}{c_n},
 \qquad n\ge1,
 \label{eq:model-release-stagewise-map}\\
 \mathscr Y_\vartheta(u)&:=\ell^\top\mathscr E_\vartheta(u).
 \label{eq:model-release-output-map}
\end{align}
For a reconstructed vector \(e=\mathscr E_\vartheta(u)\), write
\begin{equation}\label{eq:model-release-raw-score}
 r_n(e,u,\vartheta):=
 \frac{\widetilde b_n^0-\vartheta d_n+\sum_{j<n}Q_{nj}e_j+\ell_nu}{c_n}.
\end{equation}
A projected-response cell is \emph{strictly complementary} at \((e,u,\vartheta)\)
when every free coordinate satisfies \(0<e_n<\bar e_n\), every lower-bound
coordinate satisfies \(r_n(e,u,\vartheta)<0\), and every upper-bound coordinate
satisfies \(r_n(e,u,\vartheta)>\bar e_n\).  These strict inequalities keep the
active set fixed under small perturbations.

\begin{theorem}[Global release reduction and protocol-indexed criticality]
\label{thm:model-release-reduction}
Let \(W_R\in C^2\) on a neighborhood of \([0,\bar y]\) and
\(\ell\in\mathbb R_+^J\setminus\{0\}\).

\textup{(i)} The full capacity-constrained finite-stage equilibrium is in
one-to-one correspondence with roots on \([0,\bar y]\) of
\begin{equation}\label{eq:model-release-global-scalar}
 \boxed{F^{\mathrm{box}}(y,\vartheta;\lambda)
 :=y-\mathscr Y_\vartheta(u_\lambda(y))=0.}
\end{equation}
Every root reconstructs one and only one equilibrium as
\(e=\mathscr E_\vartheta(u_\lambda(y))\).  At least one equilibrium exists for
every \((\vartheta,\lambda)\).

\textup{(ii)} The map \(\mathscr Y_\vartheta\) is continuous, nondecreasing, and
piecewise affine.  With
\begin{equation}\label{eq:model-release-beta}
 \boxed{\beta_R:=\ell^\top(C-Q)^{-1}\ell>0,}
\end{equation}
it satisfies, for \(u_2\ge u_1\),
\begin{equation}\label{eq:model-release-global-lipschitz}
 0\le\mathscr Y_\vartheta(u_2)-\mathscr Y_\vartheta(u_1)
 \le\beta_R(u_2-u_1).
\end{equation}
Consequently the equilibrium is unique whenever
\begin{equation}\label{eq:model-release-global-uniqueness}
 \lambda\beta_R
 \sup_{0\le y\le\bar y}[W_R''(y)]_+<1.
\end{equation}
For the full-interior face,
\begin{equation}\label{eq:model-release-path-sum}
 \boxed{\beta_R=\sum_{k=0}^{J-1}
 \ell^\top(C^{-1}Q)^kC^{-1}\ell.}
\end{equation}
Thus causal validation cannot create singularity when \(\lambda=0\), but its
nonnegative paths alter exposure to same-cycle model value.

\textup{(iii)} On a fixed projected-response cell with free coordinates
\(I\), \(\mathscr Y_\vartheta\) is affine with release-price slope
\begin{equation}\label{eq:model-release-face-beta}
 \boxed{\beta_{R,I}
 :=\ell_I^\top(C_I-Q_{II})^{-1}\ell_I\ge0}
\end{equation}
and cost-shifter slope
\begin{equation}\label{eq:model-release-face-kappa}
 \boxed{\kappa_{\vartheta,I}
 :=\ell_I^\top(C_I-Q_{II})^{-1}d_I\ge0.}
\end{equation}
The smooth cell residual satisfies
\begin{equation}\label{eq:model-release-face-derivative}
 \partial_yF_I=1-\lambda\beta_{R,I}W_R''(y),
 \qquad \partial_\vartheta F_I=\kappa_{\vartheta,I}.
\end{equation}
Equivalently, its response derivative separates into a nilpotent causal block
and a rank-one same-cycle block; after resolving the causal block, the single
possibly nonzero master eigenvalue is
\(\lambda\beta_{R,I}W_R''(y)\).  If \(W_R\) is \(C^3\) near a scalar turning point, that point lifts to a
classical smooth fold of the full equilibrium when the reconstructed point is
strictly complementary, \(\kappa_{\vartheta,I}>0\), and the scalar quadratic
coefficient is nonzero.  At a loss of strict complementarity the global
residual is generally only piecewise \(C^1\); the critical point is then a
boundary contact or kink unless additional matching conditions are proved.

\textup{(iv)} Protocol comparative statics. Fix the validation network, the
smooth cell \(I\), and all nonprotocol primitives. For protocols
\(P\in\{\fresh,\inh\}\) with optimized values \(W_R^P\in C^2\), define
\begin{equation}\label{eq:protocol-critical-gain-main}
 \mathfrak g_I^P(y):=\lambda\beta_{R,I}(W_R^P)''(y).
\end{equation}
Then
\begin{equation}\label{eq:protocol-critical-gain-difference-main}
 \boxed{\mathfrak g_I^{\fresh}(y)-\mathfrak g_I^{\inh}(y)
 =\lambda\beta_{R,I}
 \bigl[(W_R^{\fresh})''(y)-(W_R^{\inh})''(y)\bigr].}
\end{equation}
Whenever \(\beta_{R,I}>0\) and \((W_R^P)''(y)>0\), the formal statewise
same-cycle share at unit gain is
\begin{equation}\label{eq:protocol-statewise-threshold-main}
 \lambda_{\mathrm{spec}}^P(y)
 =\frac{1}{\beta_{R,I}(W_R^P)''(y)}.
\end{equation}
It lies in the admissible timing range \([0,1]\) exactly when this reciprocal is
at most one; otherwise no admissible same-cycle share reaches unit gain at that
state. Thus protocol replacement moves the unit-gain boundary through optimized
financial curvature; temporal support alone supplies no fresh-versus-inherited
ordering.
\end{theorem}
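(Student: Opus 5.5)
The proof first reduces the equilibrium to a fixed point in aggregate precision alone, and then linearizes on a fixed cell.

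\textbf{Part (i).} Since suppliers are atomless and the payoff \eqref{eq:model-release-supplier-payoff} is strictly concave in $x$, every supplier at stage $n$ chooses the same projected best response,
\[
x=\Pi_{[0,\bar e_n]}\Bigl(\bigl[\widetilde b_n^0-\vartheta d_n+a_n^{\mathrm{val}}(q_n-q_n^{(0)})+\ell_n u\bigr]/c_n\Bigr),\qquad u=u_\lambda(y).
\]
Solving the capital recursion \eqref{eq:model-release-capital} forward gives $a_n^{\mathrm{val}}(q_n-q_n^{(0)})=\sum_{r<n}Q_{nr}e_r$. Because $Q$ is strictly lower triangular, the equilibrium conditions become the forward recursion \eqref{eq:model-release-stagewise-map}. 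So, given a price $u$, the effort vector is uniquely $\mathscr E_\vartheta(u)$. The consistency requirement $y=\ell^\top e$ is then exactly the scalar equation \eqref{eq:model-release-global-scalar}, and reconstruction from a root is unique. For existence, $y\mapsto\mathscr Y_\vartheta(u_\lambda(y))$ is continuous (composition of projections, affine maps and $W_R'\in C^1$) and maps $[0,\bar y]$ into $[0,\bar y]$. Hence $F^{\mathrm{box}}(0)\le0\le F^{\mathrm{box}}(\bar y)$, and the intermediate value theorem gives a root.

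\textbf{Part (ii).} By induction on $n$, each $\mathscr E_{\vartheta,n}$ is continuous, nondecreasing and piecewise affine in $u$. The base case holds because $\ell_0\ge0$ and the projection is monotone. The inductive step holds because $Q_{nr}\ge0$, $\ell_n\ge0$, and projection is monotone and $1$-Lipschitz.

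For the Lipschitz bound I would show by induction that
\[
0\le\mathscr E_{\vartheta,n}(u_2)-\mathscr E_{\vartheta,n}(u_1)\le g_n(u_2-u_1),
\]
where $g$ solves $g=C^{-1}(Qg+\ell)$, that is, $g=(C-Q)^{-1}\ell$. This uses that projection contracts increments and that every entry of $g$ is nonnegative. Summing against $\ell$ gives \eqref{eq:model-release-global-lipschitz} with $\beta_R=\ell^\top g$.

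Since $C^{-1}Q$ is nilpotent,
\[
(C-Q)^{-1}=\sum_{k=0}^{J-1}(C^{-1}Q)^kC^{-1},
\]
which yields \eqref{eq:model-release-path-sum}. The $k=0$ term $\ell^\top C^{-1}\ell$ is positive because $\ell\ne0$, so $\beta_R>0$.

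For uniqueness, suppose $y_1<y_2$ are two roots. Write $\Delta u=\lambda\int_{y_1}^{y_2}W_R''$. Monotonicity of $\mathscr Y_\vartheta$ combined with the Lipschitz bound gives
\[
y_2-y_1\le\beta_R(\Delta u)_+\le\lambda\beta_R\sup[W_R'']_+\,(y_2-y_1),
\]
which contradicts \eqref{eq:model-release-global-uniqueness}. The same estimate, restricted to free coordinates, gives $\beta_{R,I}\ge0$. The only care needed is that on a cell the bound and clamped coordinates drop out.

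\textbf{Part (iii).} On a fixed cell the clamped coordinates are constants. The free block solves $(C_I-Q_{II})e_I=\text{const}-\vartheta d_I+\ell_I u$. The matrix $C_I-Q_{II}$ is invertible with nonnegative inverse, because $Q_{II}$ is strictly lower triangular.

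Differentiating $F_I=y-\ell_I^\top e_I(u_\lambda(y),\vartheta)$ gives \eqref{eq:model-release-face-derivative}. The $\vartheta$ sign follows because the clamped terms are constant, and the $u$-dependence through clamped coordinates vanishes.

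To see the operator structure, write $D\mathcal G=C_I^{-1}Q_{II}+\lambda W_R''\,C_I^{-1}\ell_I\ell_I^\top$ acting on $e_I$. The first term is nilpotent; the second is rank one. Applying \cref{thm:causal-noncausal-separator}(iii) gives the scalar
\[
B=\ell_I^\top(I-C_I^{-1}Q_{II})^{-1}C_I^{-1}\ell_I\cdot\lambda W_R''=\lambda\beta_{R,I}W_R''.
\]

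\textbf{Fold lifting.} Strict complementarity keeps the cell fixed in a neighborhood, so $F^{\mathrm{box}}=F_I$ there. Then three conditions give a classical fold: $\partial_yF_I=0$, transversality $\partial_\vartheta F_I=\kappa_{\vartheta,I}\ne0$, and a nonzero $\partial_{yy}F_I=-\lambda\beta_{R,I}W_R'''$. The one-to-one correspondence from (i) transports the fold to $e$. When strict complementarity fails, the residual is only piecewise $C^1$, which I would state as a caveat rather than prove.

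\textbf{Part (iv).} This follows by subtracting the gains, since $\beta_{R,I}$ does not depend on the protocol. The threshold \eqref{eq:protocol-statewise-threshold-main} comes from solving $\lambda\beta_{R,I}(W_R^P)''=1$, and its admissibility condition is immediate.

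The main obstacle is the existence and Lipschitz comparison in (ii): induction through nested projections with a lower-triangular coupling, where I must use that positive parts of increments propagate with nonnegative weights. The fold-lifting step also needs care that no kink occurs.
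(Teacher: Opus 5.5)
Your proposal is correct and follows the paper's proof essentially step for step. It uses the forward triangular recursion and the intermediate-value argument for (i), a componentwise comparison against the nonnegative Neumann-series inverse $(C-Q)^{-1}$ for the Lipschitz bound and uniqueness, affine elimination on a fixed strictly complementary cell for (iii), and direct subtraction for (iv). Your induction with $g=(C-Q)^{-1}\ell$ and your appeal to the finite-rank compression in \cref{thm:causal-noncausal-separator}(iii) are explicit versions of the paper's inequality $C\Delta e\le Q\Delta e+\ell\,\Delta u$ and of its direct computation of the master block $\lambda W_R''(y)(C_I-Q_{II})^{-1}\ell_I\ell_I^\top$.
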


The scalar released-quality index makes the same-cycle block rank one. The
Electronic Companion gives the finite-factor extension, whose criticality is
governed by the corresponding resolved exposure matrix.

\paragraph{Robustness to nonquadratic research costs.}
Quadratic costs are not required for the scalar reduction or the
temporal-support mechanism; they make the global response piecewise affine,
the cell exposure constant, and the fold coefficient especially simple. With
\(\mathcal C_n''\ge\underline c_n>0\), the exact scalar reduction persists,
and on a smooth cell
\[
 \beta_{R,I}(e)=\ell_I^\top
 [\diag(\mathcal C_n''(e_n))_{n\in I}-Q_{II}]^{-1}\ell_I,
 \qquad
 1=\lambda\beta_{R,I}(e)W_R''(y)
\]
remains the local unit-gain condition. Strict complementarity is understood
relative to the marginal-cost thresholds \(\mathcal C_n'(0)\) and
\(\mathcal C_n'(\bar e_n)\). The fold nondegeneracy coefficient is the full
scalar second derivative and includes the state variation of
\(\beta_{R,I}(e)\). The Electronic Companion gives its closed form, the global
uniqueness certificate, and the fold details.

The theorem strengthens the interior reduction by retaining every capacity
face in one global scalar equation.  Under an exogenous perturbation of the
predetermined shadow value, \(\beta_{R,I}\) is locally identified on a smooth
face by
\begin{equation}\label{eq:model-release-beta-identification}
 \left.\frac{\partial y}{\partial\bar\Gamma}\right|_{\lambda=0,I}
 =\beta_{R,I}.
\end{equation}
Together with local optimized-value curvature and release-time records, this
provides the operational inputs to the timing diagnostic below.

Only total protocol-indexed optimized curvature enters the release gain. The
Electronic Companion gives a baseline accounting identity that separates
evaluator geometry from predictive-risk and reoptimization terms under a
chosen reference specification.

\begin{corollary}[Gaussian protocol benchmark and fold lift]
\label{cor:gaussian-protocol-criticality}\label{cor:model-release-gaussian-lift}
Specialize \eqref{eq:model-release-value-bridge} to the fresh Gaussian value
\begin{equation}\label{eq:model-release-gaussian-value}
 W_R(y)=W_R^{\fresh}(y)
 =\frac{(\frac32\sqrt{y+1/2}-1)^2}{2(y+3/2)}.
\end{equation}
Its positive curvature is maximized at zero,
\[
 W_R''(0)=\frac{39\sqrt2-20}{54},
\]
and decreases strictly to zero on its positive-curvature interval.  The
full-interior reduced positive-branch turning-point onset is
\begin{equation}\label{eq:model-release-onset}
 \boxed{\lambda_{\mathrm{on}}^{\fresh}
 =\frac{1}{\beta_RW_R''(0)}
 =\frac{54}{\beta_R(39\sqrt2-20)}.}
\end{equation}
For comparison, preserve the initial natural budget \(k_0=1/\sqrt2\). The
inherited positive-position value is
\begin{equation}\label{eq:gaussian-protocol-inherited-value-main}
 W_R^{\inh}(p)=\frac{(3p-\sqrt2)^2}{8p(p+1)},\qquad p=y+\frac12.
\end{equation}
Under the common normalization \(\beta_R=32\), both protocols attain their
maximal positive curvature at \(p=1/2\), giving
\begin{equation}\label{eq:gaussian-protocol-onsets-main}
 \boxed{\lambda_{\mathrm{on}}^{\fresh}\approx0.0480026,
 \qquad \lambda_{\mathrm{on}}^{\inh}\approx0.0122230.}
\end{equation}
The comparison is benchmark-specific: across Gaussian primitives the
fresh-minus-inherited curvature can have either sign. The Electronic Companion
gives the corresponding closed-form signal-to-ambiguity boundary. The protocol
implication is the transmission formula in part~\textup{(iv)} of
\cref{thm:model-release-reduction}, not a universal stability ordering.
If \(\lambda\le\lambda_{\mathrm{on}}^{\fresh}\), the
capacity-constrained equilibrium is unique for every \(\vartheta\). At equality,
the full-interior reduced derivative first vanishes at \(y=0\); at a cost
shifter that places \(y=0\) on the reduced branch, this is a boundary tangency,
not an interior fold. If
\(\lambda>\lambda_{\mathrm{on}}^{\fresh}\), the unconstrained
full-interior reduced equation has a unique \(y_f>0\) satisfying
\begin{equation}\label{eq:model-release-fold-condition}
 \lambda\beta_RW_R''(y_f)=1.
\end{equation}
Let
\[
 a_0(\lambda):=\ell^\top A^{-1}
 [\widetilde b^0+(1-\lambda)\bar\Gamma\ell],
 \qquad
 \kappa_\vartheta:=\ell^\top A^{-1}d,
\]
and assume the transversality condition \(\kappa_\vartheta>0\).  Since
\(A^{-1}\ge0\), this condition is equivalent to requiring that some
cost-exposed stage can reach a release-relevant stage through the nonnegative
validation graph; it is not implied by \(d\ge0\) alone.  Define
\begin{equation}\label{eq:model-release-fold-cost}
 \vartheta_f=\frac{a_0(\lambda)-y_f
 +\lambda\beta_RW_R'(y_f)}{\kappa_\vartheta}>0.
\end{equation}
The reduced turning point is a full interior nondegenerate saddle-node if and
only if
\begin{equation}\label{eq:model-release-fold-lift}
 e_f=A^{-1}\!
 [\widetilde b^0-\vartheta_fd+(1-\lambda)\bar\Gamma\ell
 +\lambda\ell W_R'(y_f)]
 \in\prod_{n=0}^{J-1}(0,\bar e_n).
\end{equation}
When this lift condition holds,
\begin{equation}\label{eq:model-release-branch-expansion}
 y_\pm(\vartheta)=y_f\pm
 \sqrt{\frac{2\kappa_\vartheta(\vartheta_f-\vartheta)}
 {-\lambda\beta_RW_R'''(y_f)}}
 +O(\vartheta_f-\vartheta).
\end{equation}
The singular research direction is \(A^{-1}\ell\), and every differentiable
portfolio observable with nonzero derivative at \(y_f\) inherits the
inverse-square-root pole.  On another projected-response cell the same
statements hold with its affine baseline, \(\beta_{R,I}\), and
\(\kappa_{\vartheta,I}>0\), provided strict complementarity keeps that cell
locally fixed.
\end{corollary}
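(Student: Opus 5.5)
The plan is to reduce every claim to the scalar residual of \cref{thm:model-release-reduction} on the full-interior cell, where \(\beta_{R,I}=\beta_R\) and \(\kappa_{\vartheta,I}=\kappa_\vartheta\). Everything else is either a one-variable calculus fact about the explicit \(W_R\) or a Lyapunov--Schmidt-type lift already packaged in part (iii).

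\emph{Step 1: curvature of the benchmark values.} Write \(W_R^{\fresh}\) in \(s=\sqrt{y+1/2}\), so that \(y+3/2=s^2+1\) and \(W=(\tfrac32 s-1)^2/(2(s^2+1))\). Differentiate twice using \(\dd s/\dd y=1/(2s)\) and evaluate at \(s=1/\sqrt2\). This should give \((39\sqrt2-20)/54\). For the monotone decay, show that \(W_R'''<0\) on the interval where \(W_R''>0\), by writing \(W_R'''\) as a negative factor times a polynomial in \(s\) and checking that the polynomial has no root in the relevant range. This is the step I expect to be the main obstacle, since it requires a clean sign analysis of a rational function in \(s\). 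I would first try to clear denominators and factor, and fall back on a Sturm-sequence or interval argument if that fails. Do the same for \(W_R^{\inh}(p)\). Then evaluate \(\lambda_{\mathrm{on}}^P=1/(32\,(W_R^P)''(0))\) numerically to recover the displayed onsets. The remark that the fresh-minus-inherited curvature can take either sign is deferred to the EC boundary and is not needed here.

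\emph{Step 2: uniqueness and the onset.} Since \(\sup_{y\ge0}[W_R'']_+=W_R''(0)\), the global certificate \eqref{eq:model-release-global-uniqueness} with \(\beta_{R,I}\le\beta_R\) gives uniqueness for \(\lambda<\lambda_{\mathrm{on}}^{\fresh}\). Here I would need \(\beta_{R,I}\le\beta_R\) across faces, which follows from the Lipschitz bound \eqref{eq:model-release-global-lipschitz}. At equality, \(1-\lambda\beta_RW_R''(y)>0\) for \(y>0\) by strict decay, so the residual is still strictly increasing and the root is unique. At \(y=0\) the derivative vanishes, which is the boundary tangency. For \(\lambda>\lambda_{\mathrm{on}}\), strict monotonic decay of \(W_R''\) from \(W_R''(0)\) down to \(0\) gives a unique \(y_f>0\) solving \eqref{eq:model-release-fold-condition}.

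\emph{Step 3: fold lift.} On the interior cell the reduced residual is affine in the response, with
\[F_I(y,\vartheta)=y-a_0(\lambda)+\kappa_\vartheta\vartheta-\lambda\beta_RW_R'(y).\]
Setting \(F_I=0\) at \(y_f\) yields \(\vartheta_f\). Its positivity follows from \(\kappa_\vartheta>0\) together with the sign of the numerator, which I would verify from \(a_0\ge0\) and the concavity comparison; I would flag this if it needs extra primitives. The equivalence of \(\kappa_\vartheta>0\) with graph reachability comes from the Neumann series \(A^{-1}=\sum_k (C^{-1}Q)^kC^{-1}\ge0\). The lift is valid exactly when the reconstructed \(e_f\) is strictly interior; with empty bound sets this is strict complementarity, so part (iii) applies. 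The quadratic coefficient \(-\tfrac12\lambda\beta_RW_R'''(y_f)\) is positive by Step 1, which gives a nondegenerate saddle-node. Expanding \(F_I\) to second order in \(y-y_f\) and first order in \(\vartheta-\vartheta_f\) and solving produces \eqref{eq:model-release-branch-expansion}. Finally, \(\partial e/\partial y\propto A^{-1}\ell\) gives the singular direction, and the chain rule transfers the \((\vartheta_f-\vartheta)^{-1/2}\) pole to any observable with nonzero derivative at \(y_f\). The statement for other cells is the same argument with \(I\)-restricted quantities.
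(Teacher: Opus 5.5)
Your proposal is correct and follows the paper's proof essentially step for step. The shared steps are: the global certificate below onset and a strict comparison at equality; a unique $y_f$ from the strict decay of $W_R''$; the full-interior scalar residual, with $f_\vartheta=\kappa_\vartheta>0$ and $f_{yy}=-\lambda\beta_RW_R'''(y_f)>0$; and the singular direction $A^{-1}\ell$, read off from the reconstruction.

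The two steps you left open close without extra primitives, as follows.

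\emph{Curvature signs.} With $r=\sqrt{y+1/2}$, $W_R''$ has the sign of $-Q(r)$ and $W_R'''$ has the sign of $P(r)$, where $Q(r)=9r^4+10r^3-18r^2-3$ and $P(r)=15r^6+20r^5-45r^4-15r^2-3$. Here $Q'(r)=6r(2r+3)(3r-2)>0$, $P'(r)=10rQ(r)$ and $P(1/\sqrt2)<0$. So the positive-curvature set is a single initial interval, and $P<0$ on it. For inheritance, write $-(W_R^{\inh})''\propto N(p)$ and $(W_R^{\inh})'''\propto M(p)$. The identity $M(p)=pN(p)-2(p+1)^3$ gives $M<0$ wherever $N<0$.

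\emph{Positivity of $\vartheta_f$.} Use the fold condition to rewrite the numerator as $a_0(\lambda)+\lambda\beta_R[W_R'(y_f)-y_fW_R''(y_f)]$. Here $a_0(\lambda)\ge0$. The bracket strictly exceeds $W_R'(0)>0$, because $W_R''$ strictly decreases on $[0,y_f]$.
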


A reduced turning point lifts only when the reconstructed effort lies in the
stated cell. The Electronic Companion gives a two-stage nonlift example. It
also proves that \(0\le Q_1\le Q_2\) entrywise, with \(C\) and \(\ell\)
fixed, implies \(\beta_R(Q_1)\le\beta_R(Q_2)\); adding stages without a nested
embedding does not imply this ordering.

\begin{theorem}[Financial multiplier and validation-latency identification]
\label{cor:model-release-deadline}\label{thm:financial-release-multiplier}
Let \(D\ge0\) be an exogenous validation time, independent of current research
\(e\) and released precision \(y\), and not controlled by the research
suppliers. Let \(\Delta\) be the release deadline and let unreleased alpha
decay at rate \(\rho\ge0\). A marginal unit of released precision has
realized settlement value
\begin{equation}\label{eq:model-release-realized-settlement}
 u_{\Delta,\rho}(D,y)
 :=\bar\Gamma+e^{-\rho D}\mathbf1_{\{D\le\Delta\}}
 [W_R'(y)-\bar\Gamma].
\end{equation}
Risk-neutral suppliers value the exogenous validation clock in expectation;
the linear settlement term in \eqref{eq:model-release-supplier-payoff} therefore has ex ante value
\begin{equation}\label{eq:model-release-expected-settlement}
 \mathbb E[u_{\Delta,\rho}(D,y)]
 =\bar\Gamma+\lambda_\rho(\Delta)[W_R'(y)-\bar\Gamma]
 =u_{\lambda_\rho(\Delta)}(y),
\end{equation}
where
\begin{equation}\label{eq:model-release-effective-share}
 \lambda_\rho(\Delta)
 :=\mathbb E[e^{-\rho D}\mathbf1_{\{D\le\Delta\}}].
\end{equation}
Thus \(\lambda_\rho(\Delta)\) is the value weight on validation completed before
the release cutoff; at \(\rho=0\) it is the same-cycle completion fraction.

\textup{(i) Financial materiality.}
Fix a strictly complementary quadratic-cost cell with free set \(I\), put
\(A_I=C_I-Q_{II}\), and perturb the free supplier scores by \(zh_I\). For a
release-dependent financial outcome \(\mathcal O(y)\in C^1\), define
\begin{equation}\label{eq:model-release-loading-definitions}
 \chi_I(y):=\beta_{R,I}W_R''(y),\qquad
 \gamma_{h,I}:=\ell_I^\top A_I^{-1}h_I,\qquad
 \omega_{\mathcal O,h,I}(y):=\mathcal O'(y)\gamma_{h,I}.
\end{equation}
At every nonsingular equilibrium on that cell,
\begin{equation}\label{eq:model-release-financial-multiplier}
 \boxed{
 \frac{\dd\mathcal O}{\dd z}
 =\frac{\omega_{\mathcal O,h,I}(y)}
 {1-\lambda_\rho(\Delta)\chi_I(y)}.}
\end{equation}
Structural criticality is financially silent at first order when either the
primitive does not reach released precision \((\gamma_{h,I}=0)\) or the outcome
does not load on it \((\mathcal O'(y)=0)\).

\textup{(ii) Sharp timing identification.}
Write \(p_\Delta=\mathbb P(D\le\Delta)\). For every validation-time law,
\begin{equation}\label{eq:model-release-distribution-free-bounds}
 \boxed{\underline\lambda_\Delta
 :=e^{-\rho\Delta}p_\Delta
 \le\lambda_\rho(\Delta)\le
 p_\Delta=: \overline\lambda_\Delta.}
\end{equation}
The interval is sharp over all nonnegative validation-time laws with the same
completion probability (allowing atoms); within atomless classes the endpoints
are generally limiting rather than attained. Holding the equilibrium state and
smooth cell fixed, timing information alone therefore gives the sharp
conditional set
\begin{equation}\label{eq:model-release-partial-id-interval}
 \boxed{\underline\lambda_\Delta\beta_{R,I}[W_R''(y)]_+
 \le\mathfrak C_I(y;\Delta)
 \le\overline\lambda_\Delta\beta_{R,I}[W_R''(y)]_+,}
 \qquad
 \mathfrak C_I(y;\Delta):=\beta_{R,I}[W_R''(y)]_+\lambda_\rho(\Delta).
\end{equation}
Let \(L_I\) and \(U_I\) denote the endpoints. Then \(U_I<1\) certifies strict
subcriticality for every compatible timing law, whereas \(L_I>1\) certifies
strict supercriticality. If the interval contains one, timing information alone
does not identify a strict side of unit gain.

If \(\chi_I(y)>0\) and
\(\overline\lambda_\Delta\chi_I(y)<1\), the conditional timing-identified
multiplier set is
\begin{equation}\label{eq:model-release-multiplier-id}
 \boxed{
 \frac{1}{1-\underline\lambda_\Delta\chi_I(y)}
 \le\frac{1}{1-\lambda_\rho(\Delta)\chi_I(y)}\le
 \frac{1}{1-\overline\lambda_\Delta\chi_I(y)}.}
\end{equation}
The corresponding sensitivity set is its image under multiplication by
\(\omega_{\mathcal O,h,I}(y)\). If the nondegenerate timing interval contains
\(1/\chi_I(y)\) and \(\omega_{\mathcal O,h,I}(y)\ne0\), no finite uniform bound
on absolute financial sensitivity is valid over nonsingular compatible timing
laws.

\textup{(iii) Exponential clock under the Gaussian release value.}
If \(D\sim\mathrm{Exp}(\nu)\), then
\begin{equation}\label{eq:model-release-exponential-share}
 \lambda_\rho(\Delta)
 =\frac{\nu}{\nu+\rho}
 \left(1-e^{-(\nu+\rho)\Delta}\right).
\end{equation}
For either Gaussian protocol value \(W_R^P\), \(P\in\{\fresh,\inh\}\), put
\(G_P=\beta_R(W_R^P)''(0)\). More generally, for a curvature index
\(G>0\), write \(r=\rho/\nu\) and \(d=\nu\Delta\). At zero released
precision \(y=0\), the normalized gain is
\begin{equation}\label{eq:model-release-dimensionless-index}
 \boxed{\mathfrak C(G,r,d)
 =\frac{G}{1+r}\left(1-e^{-(1+r)d}\right).}
\end{equation}
A finite unit-gain deadline exists if and only if \(G>1+r\), and then
\begin{equation}\label{eq:model-release-critical-deadline}
 \boxed{d_*=-\frac{1}{1+r}
 \log\!\left(1-\frac{1+r}{G}\right),\qquad
 \Delta_*=\frac{d_*}{\nu}.}
\end{equation}
For \(d<d_*\), the normalized sensitivity multiplier is
\(\mathfrak A(G,r,d)=[1-\mathfrak C(G,r,d)]^{-1}\), and
\begin{equation}\label{eq:model-release-deadline-pole}
 \boxed{\mathfrak A(G,r,d)
 \sim\frac{1}{(G-1-r)(d_*-d)}}
 \qquad(d\uparrow d_*).
\end{equation}
Whenever \(G_P>1+r\), the protocol-specific critical deadline is the
same formula with \(G=G_P\), and
\begin{equation}\label{eq:protocol-deadline-comparative-main}
 \frac{\partial d_*}{\partial G}
 =-\frac{1}{G(G-1-r)}<0.
\end{equation}
Thus the protocol acts on the admissible validation frontier through optimized
curvature. In the Gaussian normalization of
\cref{cor:gaussian-protocol-criticality}, \(G_{\inh}>G_{\fresh}\): when both
critical deadlines are finite, \(d_*^{\inh}<d_*^{\fresh}\); if
\(G_{\fresh}\le1+r<G_{\inh}\), only inheritance has a finite critical deadline;
if \(G_{\inh}\le1+r\), neither protocol reaches unit gain at \(y=0\)
by any finite deadline. At \(d=d_*\), the \(y=0\) contact is a boundary
tangency when a cost shifter places it on the reduced branch. A later deadline
creates a unique positive unconstrained unit-gain location on the stated
Gaussian value domain. Within the smooth active-cell class studied here, that
location is certified as a full saddle-node when its lift is attainable,
strict complementarity holds, and cost transversality and quadratic
nondegeneracy are satisfied.
\end{theorem}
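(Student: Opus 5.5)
The plan has three parts. First comes the settlement reduction. Since \(D\) is exogenous, independent of \((e,y)\), and uncontrolled, linearity of expectation applied to \eqref{eq:model-release-realized-settlement} gives \eqref{eq:model-release-expected-settlement}. In that identity \(\lambda_\rho(\Delta)\in[0,1]\) replaces \(\lambda\) in \eqref{eq:model-release-supplier-payoff}. Every conclusion of \cref{thm:model-release-reduction} therefore applies verbatim with \(\lambda=\lambda_\rho(\Delta)\).

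\emph{Part (i).} On a strictly complementary cell with free set \(I\), the active set is locally fixed. The free coordinates then satisfy the affine system
\[
 A_Ie_I=b_I+zh_I+\ell_I u_{\lambda}(y),
\]
where \(b_I\) collects the fixed bound contributions, and \(y=\ell_I^\top e_I+\text{const}\). Eliminating \(e_I\) gives the scalar residual
\[
 F_I(y,z)=y-\text{const}-z\gamma_{h,I}-\beta_{R,I}u_\lambda(y).
\]
Here \(\partial_yF_I=1-\lambda\chi_I(y)\) by \eqref{eq:model-release-face-derivative}, and \(\partial_zF_I=-\gamma_{h,I}\). At a nonsingular equilibrium the implicit function theorem gives \(\dd y/\dd z=\gamma_{h,I}/(1-\lambda\chi_I)\). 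The chain rule with \(\mathcal O\in C^1\) then yields \eqref{eq:model-release-financial-multiplier}, and the silence statements are read off the numerator.

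\emph{Part (ii).} The bounds follow because \(e^{-\rho D}\mathbf 1_{\{D\le\Delta\}}\) lies between \(e^{-\rho\Delta}\mathbf 1_{\{D\le\Delta\}}\) and \(\mathbf 1_{\{D\le\Delta\}}\). Sharpness comes from two laws with the same completion probability \(p_\Delta\). One places mass \(p_\Delta\) at \(\Delta\), which attains the lower endpoint. The other places mass \(p_\Delta\) at \(0\), which attains the upper endpoint. Mixtures of these two fill the interval, since \(\lambda_\rho\) is linear in the law. For atomless classes I would approximate the atoms by narrow uniform densities, so the endpoints are limits rather than attained values.

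Multiplying the timing interval by \(\beta_{R,I}[W_R'']_+\ge0\) gives \eqref{eq:model-release-partial-id-interval} and the certification statements. When \(\chi_I>0\) and \(\overline\lambda_\Delta\chi_I<1\), the map \(\lambda\mapsto(1-\lambda\chi_I)^{-1}\) is increasing and finite, which gives \eqref{eq:model-release-multiplier-id}.

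For the unboundedness claim, suppose \(1/\chi_I\) lies in the interval. By the mixture construction, \(\lambda_\rho\) ranges continuously up to \(1/\chi_I\). Hence \(|1-\lambda\chi_I|\to0\) along nonsingular compatible laws, and the absolute sensitivity \(|\omega|/|1-\lambda\chi_I|\) is unbounded. The one subtlety is that the state is held fixed conditionally, as the statement stipulates.

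\emph{Part (iii).} This is direct computation. The integral \(\int_0^\Delta \nu e^{-(\nu+\rho)t}\dd t\) gives \eqref{eq:model-release-exponential-share}. Rescaling with \(r\) and \(d\) gives \(\mathfrak C\), which is strictly increasing in \(d\) with supremum \(G/(1+r)\). A finite unit-gain deadline therefore exists iff \(G>1+r\), and solving \(\mathfrak C=1\) gives \(d_*\).

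For the pole, write \(1-\mathfrak C=\partial_d\mathfrak C(d_*)(d_*-d)+O((d_*-d)^2)\), where
\[
 \partial_d\mathfrak C(d_*)=Ge^{-(1+r)d_*}=G\Bigl(1-\tfrac{1+r}{G}\Bigr)=G-1-r .
\]
Differentiating \(d_*\) in \(G\) gives \eqref{eq:protocol-deadline-comparative-main}.

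The protocol ordering follows from the two protocols' curvatures at \(y=0\), namely \(p=1/2\), in \cref{cor:gaussian-protocol-criticality}. The onset values there give \(G_{\inh}>G_{\fresh}\) under \(\beta_R=32\); I would either check this numerically or note that \(\lambda_{\mathrm{on}}^{\inh}<\lambda_{\mathrm{on}}^{\fresh}\) with the same \(\beta_R\). The three-case deadline comparison then follows from monotonicity of \(d_*\).

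The boundary-tangency and fold statements are imported from the corollary: there, \(\lambda\beta_RW_R''\) is maximized at \(0\) and strictly decreasing on the positive-curvature interval, which gives a unique positive unit-gain point. The saddle-node certification is exactly the lift condition \eqref{eq:model-release-fold-lift} together with part (iii) of \cref{thm:model-release-reduction}.

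The main obstacle is part (ii): making precise the claim that sharpness and unboundedness range over \emph{nonsingular compatible laws} while the equilibrium state stays fixed. I would handle it by the explicit two-atom mixture family, which varies \(\lambda_\rho\) continuously at fixed \(p_\Delta\), and by treating \(y\) and the cell as conditioning data, as the statement indicates.
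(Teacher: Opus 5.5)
Your proposal is correct and follows the paper's proof essentially step for step. It takes the expectation of the realized settlement to obtain the effective share, differentiates the fixed-cell equation to get the resolvent $(1-\lambda\chi_I)^{-1}$, and uses the pointwise bounds $e^{-\rho\Delta}\le e^{-\rho D}\le 1$ on $\{D\le\Delta\}$ with two-atom mixtures at $0$ and $\Delta$ for sharpness. It then handles the exponential clock by direct integration, inversion, and a first-order expansion at $d_*$ via $Ge^{-(1+r)d_*}=G-1-r$. Your scalar-residual elimination and the narrow-density remark for atomless classes are cosmetic refinements, not a different route.
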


\paragraph{Portfolio specialization.}
The generic loading in \eqref{eq:model-release-financial-multiplier} is directly
financial in the fresh Gaussian positive-position benchmark. With
\(p=y+1/2\),
\[
 \pi^{\fresh}(y)=\frac{\frac32p-\sqrt p}{p+1},\qquad
 (\pi^{\fresh})'(y)=\frac{p+3\sqrt p-1}{2\sqrt p\,(p+1)^2}>0,
\]
so a primitive score shock satisfies
\begin{equation}\label{eq:model-release-portfolio-multiplier}
 \boxed{
 \frac{\dd\pi^{\fresh}}{\dd z}
 =\frac{(\pi^{\fresh})'(y)\gamma_{h,I}}
 {1-\lambda_\rho(\Delta)\beta_{R,I}(W_R^{\fresh})''(y)}.}
\end{equation}
Taking \(\mathcal O=W_R^P\) gives the same denominator for optimized portfolio
value. Hence the critical mode amplifies a risky position and its optimized
financial value through the same distance-to-unit-gain term.

\begin{figure}[H]
\centering
\begin{minipage}[t]{.36\textwidth}
\centering
\includegraphics[width=\linewidth]{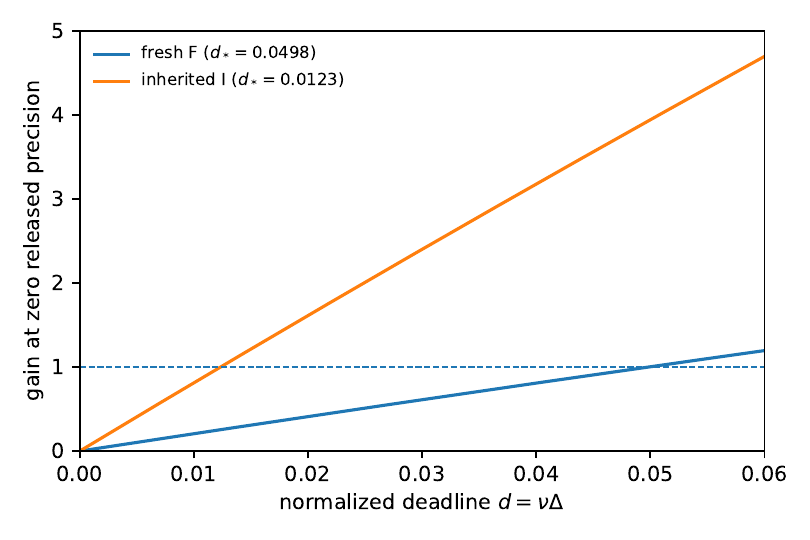}
\end{minipage}\hspace{.025\textwidth}
\begin{minipage}[t]{.36\textwidth}
\centering
\includegraphics[width=\linewidth]{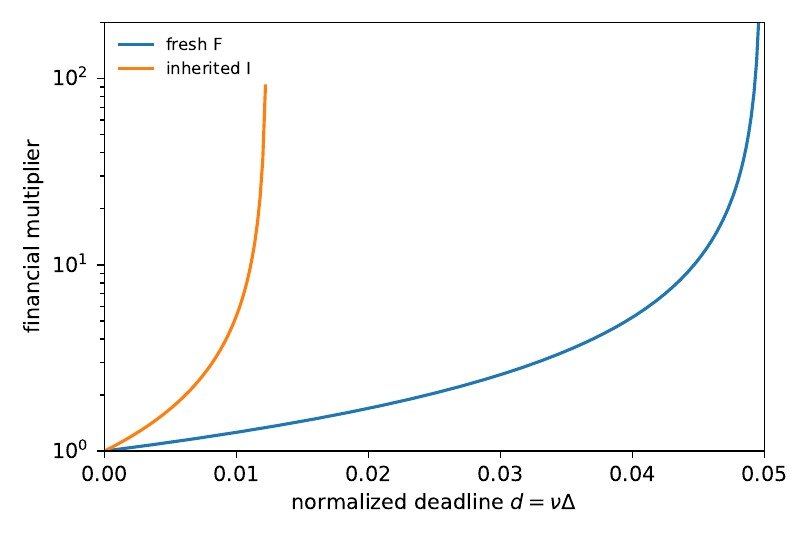}
\end{minipage}
\caption{Protocol-specific validation timing and financial materiality.}
\label{fig:endogenous-research}
\end{figure}

\noindent\textit{Notes to Figure~\ref{fig:endogenous-research}.}
Gaussian benchmark: resolved validation exposure \(\beta_R=32\); exponential-clock
decay/completion-rate ratio \(r=\rho/\nu=0.5\). Left: zero-released-precision gain,
with unit crossings at the benchmark critical deadlines. Right: subcritical financial
multiplier, diverging at those deadlines. The reduced frontiers require an attainable
smooth-cell lift and active-set matching for fold relevance.

\FloatBarrier

\section{Conclusion}\label{sec:conclusion}

Bayesian transport determines whether a confidence update preserves the
portfolio evaluator. Optimized value prices replacement through support-space
Bregman geometry. Within the recursive-rectangular Gaussian implementation, the
dynamic performance identity evaluates an inherited vintage under any prescribed
admissible policy, and the stopped recalibration tax is its fresh-reconstruction
specialization.

In the versioned release economy, causal validation is spectrally
nonsingular but economically amplifying: it accumulates resolved exposure
without itself closing a same-cycle unit mode. On a smooth
cell the protocol-indexed gain is
\(\mathfrak g_I^P=\lambda\beta_{R,I}(W_R^P)''\). The capacity-constrained
equilibrium reduces to a scalar equation. Within this smooth-cell class,
strict complementarity, attainability, transversality, and the full second
derivative certify whether a unit-gain state lifts to the classical
research-equilibrium saddle-node studied here.

The financially relevant object is the loaded resolvent:
\[
 \frac{\dd\mathcal O}{\dd z}
 =\frac{\omega_{\mathcal O,h,I}}
 {1-\lambda_\rho(\Delta)\beta_{R,I}W_R''}.
\]
It separates the four objects used throughout: protocol-indexed optimized
curvature, resolved validation exposure, same-cycle timing, and financial
loading. Completion-time information yields sharp conditional bounds while all compatible
timing laws remain subcritical; if the timing set reaches the pole, no finite uniform
sensitivity bound follows without further information. Together they link evaluator
provenance to measurable financial criticality.

\clearpage
\appendix
\section*{Electronic Companion}
\addcontentsline{toc}{section}{Electronic Companion}

\noindent
The Electronic Companion is organized by dependence on the Main Paper.
\textbf{Core proofs} establish the welfare, filtering, and stopped-control results.
\textbf{Causal-support certificates} verify the separator, spectral loading, and
fold conditions used in the equilibrium argument. \textbf{Robustness, extensions,
and worked benchmarks} collect hard-response limits, nonlinear costs, finite-factor
compression, and scalar certificates. The parabolic, vanishing-regularization,
and local analytic results in this last layer are sufficient-condition
certifications: regularity, invariant-domain, and selector assumptions are inputs,
not consequences of Gaussianity alone.

\section{Global protocol regret and governance loss}\label{app:cara-governance}

\subsection{Functional support-space regret and its decision metric}\label{app:functional-shape-regret}

Let \(\mathbb E=C(S^{d-1})\), let \(\mathscr H_d\) denote the support-function
cone, and let \(\ell_z\) be \eqref{eq:homogeneous-evaluation-main}.  Its
operator norm is \(\|\ell_z\|_{\mathbb E^*}=\|z\|\).

\begin{proof}[Proof of \cref{thm:global-budget-regret}]
For fixed \(a\), the map
\(h\mapsto L(a)-\ell_{Ba}(h)\) is continuous and affine.  Its pointwise
supremum \(\mathfrak V\) is therefore convex.  With
\(R_B=\sup_{a\in\mathcal A}\|Ba\|\),
\begin{align*}
 \mathfrak V(h)-\mathfrak V(H)
 &\le \sup_{a\in\mathcal A}\ell_{Ba}(H-h)
 \le R_B\|h-H\|_\infty,
\end{align*}
and interchanging \(h,H\) proves the Lipschitz bound.

Let \(a_H\) be the unique optimizer at \(H\).  For every \(h\in\mathbb E\),
\[
 \mathfrak V(h)
 \ge L(a_H)-\ell_{Ba_H}(h)
 =\mathfrak V(H)-\ell_{Ba_H}(h-H),
\]
so \(-\ell_{Ba_H}\in\partial\mathfrak V(H)\).  To obtain the Hadamard
derivative, take \(t_j\downarrow0\), \(g_j\to g\) in \(\mathbb E\), and an
optimizer \(a_j\) at \(H+t_jg_j\).  Compactness gives subsequential limits,
and continuity plus the preceding Lipschitz bound makes every such limit an
optimizer at \(H\), hence \(a_H\).  The inequalities
\begin{align*}
 -\ell_{Ba_H}(g_j)
 &\le \frac{\mathfrak V(H+t_jg_j)-\mathfrak V(H)}{t_j}
 \le-\ell_{Ba_j}(g_j)
\end{align*}
therefore have the same limit, proving
\eqref{eq:shape-subgradient-main}.

Direct subtraction using
\(\mathfrak V(H)=L(a_H)-\ell_{Ba_H}(H)\) gives
\[
 \mathfrak r(h,H)
 =\mathfrak V(h)-\mathfrak V(H)
 +\ell_{Ba_H}(h-H)
 =D_{\mathfrak V}^{-\ell_{Ba_H}}(h,H).
\]
Nonnegativity is the subgradient inequality.  Equality holds exactly when the
affine minorant generated by \(a_H\) also supports \(\mathfrak V\) at \(h\),
which is equivalent to optimality of \(a_H\) under \(h\).

For the noncompact extension, assume that the action menu $\mathcal A$ is
closed and convex and fix $H$.  Let
$F_H(a)=L(a)-\ell_{Ba}(H)$ and let $a_H$ be its unique maximizer.  We use the
following localized conditions, which avoid any circular appeal to local
strong concavity.  There is a neighborhood $\mathcal U_H$ of $H$ in
$\mathbb E$ and a set $A_H\subset\mathcal A$ such that every maximizer of
$F_h$ for $h\in\mathcal U_H$ is attained and lies in $A_H$,
$\sup_{a\in A_H}\|Ba\|\le R_H$, and, for some $\mu>0$,
\begin{equation}\label{eq:noncompact-quadratic-growth-v206}
 F_H(a_H)-F_H(a)\ge\frac\mu2\|a-a_H\|^2,
 \qquad a\in A_H.
\end{equation}
Uniform $\mu$-strong concavity of $F_H$ on the convex hull of $A_H$ is a
sufficient condition for this quadratic-growth inequality.  Let $a_j$
maximize $F_{H+t_jg_j}$, where $t_j\downarrow0$ and $g_j\to g$.  For all
large $j$, $a_j\in A_H$, and optimality under the perturbed objective gives
\begin{align}
 \frac{\mu}{2}\|a_j-a_H\|^2
 &\le F_H(a_H)-F_H(a_j)\nonumber\\
 &\le t_j\{\ell_{Ba_H}(g_j)-\ell_{Ba_j}(g_j)\}
 \le 2t_jR_H\|g_j\|_\infty.
 \label{eq:noncompact-optimizer-stability-v206}
\end{align}
Hence $a_j\to a_H$, and the same squeeze argument proves the local Hadamard
derivative and Bregman identity.  The localization assumption is essential for
this extension; a global $R_B$-Lipschitz estimate additionally requires a
global exposure bound.  In the quadratic finance specialization, $M\succ0$
supplies global strong concavity and therefore the localized conditions
automatically.

\paragraph{Quadratic finance decomposition.}
In the unconstrained quadratic case let $z_h=Ba_h$, $z_H=Ba_H$, and
$\widetilde h(z)=\ell_z(h)$.  Fermat's rule and the convex chain rule give
\[
 0\in Ma_h-r+B^\top\partial\widetilde h(z_h),
\]
so some $q_h\in\partial\widetilde h(z_h)$ satisfies
$B^\top q_h=r-Ma_h$.  Because $\widetilde h$ is one-homogeneous,
$q_h^\top z_h=\widetilde h(z_h)$.  Expanding the quadratic objective around
$a_h$ gives
\begin{equation}\label{eq:shape-face-decomposition-ec}
 \mathfrak r(h,H)
 =\frac12\|a_H-a_h\|_M^2
   +\widetilde h(z_H)-q_h^\top z_H.
\end{equation}
For $h=kh_Q$, $H=\bar k h_Q$, choose
$q_k\in\partial h_Q(Ba_k)$ with $kB^\top q_k=r-Ma_k$ and put
$\zeta_k=B^\top q_k$.  Then, in addition to the ray identity
\eqref{eq:global-regret-bregman},
\begin{align}
 \mathfrak r(k,\bar k)
 &=\tfrac12\|a_{\bar k}-a_k\|_M^2
 +k[h_Q(Ba_{\bar k})-q_k^\top Ba_{\bar k}],
 \label{eq:global-face-decomposition-ec}\\
 kD_{h_Q\circ B}^{\zeta_k}(a_{\bar k},a_k)
 &=k[h_Q(Ba_{\bar k})-q_k^\top Ba_{\bar k}].
 \label{eq:face-switch-support-gap-ec}
\end{align}
\end{proof}

\begin{lemma}[Smooth-stratum decision metric]\label{lem:functional-shape-hessian}
On an interior $C^2$ support-function stratum, put
$\widetilde h(z)=\ell_z(h)$, $z_h=Ba_h\ne0$, and
\begin{equation}\label{eq:shape-action-curvature-ec}
 C_h:=-D^2L(a_h)+B^\top D^2\widetilde h(z_h)B\succ0.
\end{equation}
Then the optimizer map is $C^1$ in a local $C^2$ chart and
\begin{align}
 Da_h[g]&=-C_h^{-1}B^\top\nabla\widetilde g(z_h),
 \label{eq:shape-action-derivative-ec}\\
 D^2\mathfrak V(h)[g_1,g_2]
 &=\nabla\widetilde g_1(z_h)^\top
 B C_h^{-1}B^\top\nabla\widetilde g_2(z_h).
 \label{eq:shape-value-hessian-ec}
\end{align}
Moreover,
\begin{equation}\label{eq:shape-metric-action-speed-app}
 D^2\mathfrak V(h)[g,g]
 =Da_h[g]^\top C_hDa_h[g]\ge0,
\end{equation}
and
\begin{equation}\label{eq:shape-metric-kernel-ec}
 D^2\mathfrak V(h)[g,g]=0
 \quad\Longleftrightarrow\quad
 B^\top\nabla\widetilde g(z_h)=0.
\end{equation}
Thus the local decision-visible Hessian has rank at most $n$.
\end{lemma}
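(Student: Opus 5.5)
The plan is to reduce everything to the parametrized envelope setup of \cref{thm:welfare-curvature-transmission}(i), with parameter $h$ itself. On an interior $C^2$ stratum, the objective is
\[
J(a,h)=L(a)-\widetilde h(Ba).
\]
This objective is linear in $h$, so $D^2_{hh}J=0$ and $D_hJ(a,h)[g]=-\widetilde g(Ba)$. The action Hessian is $D^2_{aa}J=-C_h$, and $C_h\succ0$ by hypothesis. Because $z_h=Ba_h\ne0$ and $\widetilde h$ is smooth near $z_h$ on the stratum, $J$ is $C^2$ in a neighbourhood of $(a_h,h)$ in the local chart. Since $D^2_{aa}J$ is nonsingular, the implicit function theorem applied to the first-order condition $\nabla L(a)-B^\top\nabla\widetilde h(Ba)=0$ gives a $C^1$ optimizer map. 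Uniqueness and interiority come from strong concavity, as in \cref{thm:global-budget-regret}.

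The formulas then follow in order.

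\emph{Action derivative.} Differentiate the first-order condition in direction $g$:
\[
-C_h\,Da_h[g]-B^\top\nabla\widetilde g(z_h)=0.
\]
This is linear in $g$ because $\widetilde h$ is linear in $h$, and it yields \eqref{eq:shape-action-derivative-ec}.

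\emph{Value Hessian.} Apply \eqref{eq:parameterized-envelope-second-main} with $D^2_{hh}J=0$ and mixed derivative $D^2_{ah}J[g]=-B^\top\nabla\widetilde g(z_h)$. This gives
\[
D^2\mathfrak V(h)[g_1,g_2]=-\bigl(B^\top\nabla\widetilde g_1\bigr)^\top(-C_h)^{-1}\bigl(B^\top\nabla\widetilde g_2\bigr),
\]
which is \eqref{eq:shape-value-hessian-ec}. As a cross-check, one can differentiate the envelope derivative $D\mathfrak V(h)[g_1]=-\widetilde g_1(Ba_h)$ from \cref{thm:global-budget-regret} along $g_2$ and use the chain rule with $Da_h[g_2]$. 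This gives the same expression, with the correct sign, directly.

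\emph{Metric identity.} Set $w=B^\top\nabla\widetilde g(z_h)$. Then $Da_h[g]=-C_h^{-1}w$, so
\[
Da_h[g]^\top C_h\,Da_h[g]=w^\top C_h^{-1}w=D^2\mathfrak V(h)[g,g]\ge0.
\]

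\emph{Kernel and rank.} Because $C_h^{-1}\succ0$, the quadratic form $w^\top C_h^{-1}w$ vanishes iff $w=0$, which is \eqref{eq:shape-metric-kernel-ec}. The bilinear form factors through the linear map $g\mapsto w\in\R^n$, so its rank is at most $n$.

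The main obstacle is the regularity setup rather than the algebra. We must justify that $\widetilde g(z)=\ell_z(g)$ is differentiable at $z_h$ for chart directions $g$, and that $\nabla\widetilde g$ depends linearly and continuously on $g$. Only then is the mixed derivative a bounded bilinear object and \cref{thm:welfare-curvature-transmission}(i) applicable. I would handle this by working in the local $C^2$ chart of the stratum, where tangent directions $g$ are $C^1$ near $z_h/\|z_h\|$. One-homogeneity of $\ell_z$ gives differentiability away from $z=0$, which is exactly why the hypothesis $z_h\ne0$ is needed. Continuity of $D^2\widetilde h$ in the chart then delivers the $C^2$ requirement on $J$.
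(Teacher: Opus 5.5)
Your proposal is correct and follows the paper's proof essentially step for step: the implicit-function theorem on the interior first-order condition with Jacobian $-C_h$, differentiation in direction $g$ for $Da_h$, and the value Hessian obtained by differentiating $D\mathfrak V(h)[g_1]=-\widetilde g_1(z_h)$ along $g_2$, followed by the same substitution, kernel, and rank arguments. Your ``cross-check'' is exactly the paper's route, and it is the better primary argument, because \cref{thm:welfare-curvature-transmission}(i) is stated for a Hilbert parameter space while $\mathbb E=C(S^{d-1})$ is only Banach; also, interiority of $a_h$ is part of the ``interior stratum'' hypothesis, not a consequence of strong concavity, which gives only uniqueness.
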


\begin{proof}
The interior first-order equation is
\begin{equation}\label{eq:shape-optimizer-foc-app}
 \nabla L(a)-B^\top\nabla\widetilde h(Ba)=0.
\end{equation}
Its derivative in \(a\) at \((a_h,h)\) is \(-C_h\).  The implicit-function
theorem applies in any local \(C^2\) support-function chart on which
\(C_h\) remains positive definite.  Differentiating
\eqref{eq:shape-optimizer-foc-app} in direction \(g\) gives
\[
 -C_hDa_h[g]-B^\top\nabla\widetilde g(z_h)=0,
\]
which is \eqref{eq:shape-action-derivative-ec}.  The envelope derivative is
\(D\mathfrak V(h)[g_1]=-\widetilde g_1(z_h)\).  Differentiating in direction
\(g_2\) and substituting the optimizer derivative yields
\begin{align*}
 D^2\mathfrak V(h)[g_1,g_2]
 &=-\nabla\widetilde g_1(z_h)^\top BDa_h[g_2]\\
 &=\nabla\widetilde g_1(z_h)^\top
 BC_h^{-1}B^\top\nabla\widetilde g_2(z_h).
\end{align*}
Substituting
\(B^\top\nabla\widetilde g=-C_hDa_h[g]\) gives
\eqref{eq:shape-metric-action-speed-app}.  Positive definiteness of \(C_h\)
then proves the null-space characterization.  The factorization through the
\(n\)-dimensional action derivative gives rank at most \(n\).
\end{proof}

\paragraph{Scalar soft-threshold specialization.}
For $a\ge0$, let $\mathsf S_a(x):=\operatorname{sgn}(x)(|x|-a)_+$.  For
\(\ell_k(y)=xy-\frac{\gamma\sigma^2}{2}y^2-kv|y|\), put
\(\theta=|x|/v\), \(c=v^2/(2\gamma\sigma^2)\), and
\(y_k=\mathsf S_{kv}(x)/(\gamma\sigma^2)\).  The scalar restriction of the
functional theorem gives
\begin{align}
 \Phi_x(k)&=c(\theta-k)_+^2,\label{eq:soft-budget-envelope}\\
 \mathfrak r_x(k,\bar k)
 &=c[(\theta-k)_+^2-(\theta-\bar k)_+^2
 +2(\theta-\bar k)_+(k-\bar k)]
 \label{eq:soft-global-regret-compact}\\
 &=c\{[(\theta-\bar k)_+-(\theta-k)_+]^2
 +2(k-\theta)_+(\theta-\bar k)_+\}.
 \label{eq:soft-global-regret-decomposition}
\end{align}
The first term is action displacement and the second is face switching.  This
formula remains exact when either budget crosses the no-trade boundary.

\subsection{Protocol-regret curvature}\label{subsec:protocol-regret}

The compatibility criterion is geometric.  The next theorem supplies a
welfare metric for a small incompatibility in any smooth strongly concave
decision problem.  Its implicit-function and second-order sensitivity steps
are standard in parametric optimization; see \citet{BonnansShapiro2000}.  The
role of the theorem here is to identify the compatibility defect as the
perturbation and to isolate the curvature that later becomes an exact
financial tax.  The term \emph{protocol regret} refers here to evaluation
under an earlier parameter of an action selected after reconstructing that
parameter; it is not an alternative updating rule.

Let $\mathcal A\subset\mathbb R^n$ and $\mathcal Z\subset\mathbb R^r$ be
open, and let $J:\mathcal A\times\mathcal Z\to\mathbb R$.  For
$\zeta\in\mathcal Z$, write $a(\zeta)$ for the optimizer of
$a\mapsto J(a;\zeta)$.

\begin{theorem}[Protocol-regret curvature]\label{thm:protocol-regret-curvature}
Fix $\zeta_0\in\mathcal Z$.  Suppose there are convex neighborhoods
$\mathcal U\Subset\mathcal A$ of $a(\zeta_0)$ and
$\mathcal V\Subset\mathcal Z$ of $\zeta_0$ such that $J$ is $C^3$ on a
neighborhood of $\overline{\mathcal U}\times\overline{\mathcal V}$,
$a(\zeta)\in\mathcal U$ is the unique interior optimizer for every
$\zeta\in\mathcal V$, and
\begin{equation}\label{eq:protocol-strong-concavity}
 J_{aa}(a;\zeta)\preceq-\mu I_n
 \quad\text{for all }(a,\zeta)\in\mathcal U\times\mathcal V
 \quad\text{and some }\mu>0.
\end{equation}
Define the evaluator-$\zeta_0$ protocol regret
\begin{equation}\label{eq:protocol-regret}
 \mathfrak r(\zeta_0,\zeta)
 :=J(a(\zeta_0);\zeta_0)-J(a(\zeta);\zeta_0).
\end{equation}
Then:
\begin{enumerate}[label=(\roman*)]
\item
\begin{equation}\label{eq:regret-action-bound}
 \mathfrak r(\zeta_0,\zeta)
 \ge\frac\mu2\|a(\zeta)-a(\zeta_0)\|^2\ge0;
\end{equation}
\item $\nabla_\zeta\mathfrak r(\zeta_0,\zeta_0)=0$ and
\begin{equation}\label{eq:regret-curvature}
 \boxed{
 D^2_{\zeta\zeta}\mathfrak r(\zeta_0,\zeta_0)
 =J_{\zeta a}(-J_{aa})^{-1}J_{a\zeta}
 =:\mathscr C(\zeta_0)\succeq0,
 }
\end{equation}
where all derivatives on the right are evaluated at
$(a(\zeta_0),\zeta_0)$;
\item for every $d\in\mathbb R^r$,
\begin{equation}\label{eq:regret-local-expansion}
 \mathfrak r(\zeta_0,\zeta_0+\epsilon d)
 =\frac{\epsilon^2}{2}d^\top\mathscr C(\zeta_0)d+o(\epsilon^2).
\end{equation}
\end{enumerate}
If, in addition,
\begin{equation}\label{eq:exact-quadratic-evaluator}
 J(a;\zeta_0)
 =J(a_0;\zeta_0)-\frac12(a-a_0)^\top M_0(a-a_0),
 \qquad M_0\in\mathbb S_{++}^n,
\end{equation}
with $a_0=a(\zeta_0)$, then the regret is exactly
\begin{equation}\label{eq:exact-quadratic-regret}
 \boxed{
 \mathfrak r(\zeta_0,\zeta)
 =\frac12[a(\zeta)-a_0]^\top M_0[a(\zeta)-a_0].
 }
\end{equation}
\end{theorem}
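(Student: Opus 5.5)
The plan is to reduce everything to an implicit-function argument for the optimizer map and to use second-order Taylor expansion of the fixed-evaluator objective $a\mapsto J(a;\zeta_0)$ around its maximizer $a_0=a(\zeta_0)$.

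First, for part (i), I will use interior optimality, $J_a(a_0;\zeta_0)=0$. Since $\mathcal U$ is convex and contains both $a_0$ and $a(\zeta)$, Taylor's theorem with integral remainder along the segment gives
\[
J(a(\zeta);\zeta_0)=J(a_0;\zeta_0)+\int_0^1(1-s)\,(a(\zeta)-a_0)^\top J_{aa}\bigl(a_0+s(a(\zeta)-a_0);\zeta_0\bigr)(a(\zeta)-a_0)\,\dd s .
\]
Combined with the uniform bound \eqref{eq:protocol-strong-concavity}, this yields \eqref{eq:regret-action-bound} at once. The exact quadratic case \eqref{eq:exact-quadratic-regret} is then a direct substitution of $a=a(\zeta)$ into \eqref{eq:exact-quadratic-evaluator}.

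Second, for the smoothness of $a(\cdot)$, I will apply the implicit-function theorem to $J_a(a;\zeta)=0$ on $\mathcal U\times\mathcal V$. Here $J$ is $C^3$, so $J_a$ is $C^2$, and $J_{aa}$ is invertible by \eqref{eq:protocol-strong-concavity}. Uniqueness of the interior optimizer identifies the implicit branch with $a(\zeta)$, so $a$ is $C^2$ near $\zeta_0$, with
\[
Da(\zeta_0)=(-J_{aa})^{-1}J_{a\zeta}.
\]

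Third, for part (ii), I will define $\phi(\zeta):=\mathfrak r(\zeta_0,\zeta)=J(a_0;\zeta_0)-J(a(\zeta);\zeta_0)$, which is $C^2$. Its gradient is
\[
\nabla\phi(\zeta)=-Da(\zeta)^\top J_a(a(\zeta);\zeta_0).
\]
This vanishes at $\zeta_0$ because $J_a(a_0;\zeta_0)=0$. Differentiating once more at $\zeta_0$, the term involving $D^2a$ is multiplied by $J_a(a_0;\zeta_0)=0$ and drops out. What remains is
\[
D^2\phi(\zeta_0)=-Da^\top J_{aa}Da=J_{\zeta a}(-J_{aa})^{-1}J_{a\zeta},
\]
after substituting the formula for $Da(\zeta_0)$ and using symmetry of $J_{aa}$. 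Positive semidefiniteness follows because $-J_{aa}\succ0$, so $\mathscr C$ is a congruence of a positive definite matrix.

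Part (iii) then follows from the second-order Taylor expansion of the $C^2$ function $\phi$ at $\zeta_0$, using $\phi(\zeta_0)=0$ and $\nabla\phi(\zeta_0)=0$.

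The main obstacle is bookkeeping rather than depth. The key point is to evaluate $J_{aa}$ at the fixed evaluator $\zeta_0$ (not at $\zeta$) when forming the Hessian, and to note that the second-derivative-of-optimizer term vanishes only at $\zeta_0$. I also need to ensure the IFT neighborhood stays inside $\mathcal V$, which I can arrange by shrinking $\mathcal V$ if necessary.
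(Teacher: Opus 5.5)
Your proposal is correct and follows the paper's proof essentially step for step. The steps are: strong concavity plus the interior first-order condition for (i); the implicit-function theorem for $Da(\zeta_0)=-J_{aa}^{-1}J_{a\zeta}$; a second differentiation in which every term carrying $J_a(a_0;\zeta_0)=0$ drops out, with positive semidefiniteness from $-J_{aa}\succ0$; Taylor's theorem for (iii); and direct substitution for the quadratic case. Your integral-remainder form of (i) and your identification of the implicit branch with $a(\zeta)$ only spell out details the paper leaves implicit, and the cleanest justification for the latter is that strict concavity on the convex set $\mathcal U$ allows at most one critical point there.
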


\begin{proof}
Uniform strong concavity gives
\[
 J(a_0;\zeta_0)-J(a;\zeta_0)
 \ge-\nabla_aJ(a_0;\zeta_0)^\top(a-a_0)
 +\frac\mu2\|a-a_0\|^2.
\]
The first-order condition at the interior optimum makes the linear term zero,
which proves \eqref{eq:regret-action-bound}.  The implicit-function theorem
applied to $J_a(a(\zeta);\zeta)=0$ gives
\begin{equation}\label{eq:optimizer-sensitivity}
 Da(\zeta_0)=-J_{aa}^{-1}J_{a\zeta}.
\end{equation}
Differentiating \eqref{eq:protocol-regret} once gives zero at $\zeta_0$.
Differentiating a second time, all terms containing $J_a(a_0;\zeta_0)$ vanish,
and
\[
 D^2_{\zeta\zeta}\mathfrak r(\zeta_0,\zeta_0)
 =-[Da(\zeta_0)]^\top J_{aa}Da(\zeta_0)
 =J_{\zeta a}(-J_{aa})^{-1}J_{a\zeta}.
\]
The matrix is positive semidefinite because $-J_{aa}$ is positive definite.
Taylor's theorem gives \eqref{eq:regret-local-expansion}.  Finally,
\eqref{eq:exact-quadratic-regret} follows by substituting $a=a(\zeta)$ in
\eqref{eq:exact-quadratic-evaluator}.
\end{proof}

\subsection{Parameterized optimized envelope and research curvature}

\begin{proof}[Proof of \cref{thm:welfare-curvature-transmission}]
For part~\textup{(i)}, first localize the optimizer. On the compact closure $\overline A_0$, continuity of $J$ and uniqueness
of the maximizer imply, by the maximum theorem, that $a(E)\to a(E_*)$ as
$E\to E_*$.  Since $D^2_{aa}J(a(E_*),E_*)$ is invertible, the
implicit-function theorem supplies a unique $C^1$ critical branch through
$(a(E_*),E_*)$; after shrinking $U_0$, optimizer continuity identifies this
branch with the unique global maximizer $a(E)$.  The first-order condition
$D_aJ(a(E),E)=0$ therefore gives
\[
 Da(E)[u]=-[D^2_{aa}J]^{-1}D^2_{aE}J[u].
\]
Negative definiteness persists on the smaller neighborhood.  The envelope
identity follows because $D_aJ\,Da=0$.  Differentiating once more gives
\[
 D^2W(E)[u,v]=D^2_{EE}J[u,v]
 -D^2_{Ea}J[u][D^2_{aa}J]^{-1}D^2_{aE}J[v].
\]
On the stated smooth support-function stratum,
$W=\mathfrak V\circ h$, so the ordinary second-order Fr\'echet chain rule
gives \eqref{eq:welfare-curvature-transmission}.

For part~\textup{(ii)}, the response decomposition in the Main Paper follows
from the chain rule once $D\Gamma(E_*)=H_*^{\mathrm{sd}}+\mathcal L_*$ is
available. We record sufficient primitive conditions for that decomposition.
Let $\Xi:\mathbb X\to L^\infty(0,S;\mathbb R^q)$ be $C^1$ near $E_*$,
with $D\Xi(E_*)$ bounded and strictly causal, and write
$w(s,e,\xi)=D_eW_s(e;\xi)$. Suppose $w$ is Carath\'eodory and, for almost
every $s$, $C^1$ in $(e,\xi)$ on a common neighborhood of the essential
range of $(E_*,\Xi(E_*))$, with $D_ew$ and $D_\xi w$ essentially uniformly
bounded and uniformly continuous there. Assume also
$w(\cdot,E_*(\cdot),\Xi(E_*)(\cdot))\in L^\infty$. Put
$z_E=(E,\Xi(E))$ and let $\omega(\delta)$ be a common essential modulus of
continuity for $(D_ew,D_\xi w)$ on that neighborhood.  The Carath\'eodory
property makes $s\mapsto w(s,z_E(s))$ measurable.  Essential boundedness is not
implied by derivative bounds alone; it starts from the assumed base value
$w(\cdot,z_{E_*}(\cdot))\in L^\infty$.  For $E$ in a sufficiently small $L^\infty$ neighborhood,
the uniform derivative bound and local Lipschitz continuity of $\Xi$ give
\[
 \|\Gamma(E)-\Gamma(E_*)\|_\infty
 \le L_w\bigl(\|E-E_*\|_\infty+\|\Xi(E)-\Xi(E_*)\|_\infty\bigr),
\]
so $\Gamma(E)\in L^\infty$ on a neighborhood of $E_*$.  On that
neighborhood the Nemytskii map is well defined independently of the chosen
$L^\infty$ representatives.

For $h\to0$ in $\mathbb X$, write
\[
 \delta z_h=
 \bigl(h,\Xi(E_*+h)-\Xi(E_*)\bigr).
\]
The $C^1$ property of $\Xi$ gives
\[
 \delta z_h=\bigl(h,D\Xi(E_*)h\bigr)+o(\|h\|_\infty)
 \quad\text{in }L^\infty.
\]
For almost every $s$, the integral mean-value formula yields
\begin{align*}
 &w(s,z_{E_*}(s)+\delta z_h(s))-w(s,z_{E_*}(s))
   -Dw(s,z_{E_*}(s))\delta z_h(s)\\
 &\qquad=\int_0^1
 [Dw(s,z_{E_*}(s)+r\delta z_h(s))-Dw(s,z_{E_*}(s))]
 \delta z_h(s)\,\dd r.
\end{align*}
Taking the essential supremum bounds this remainder by
$\omega(\|\delta z_h\|_\infty)\|\delta z_h\|_\infty=o(\|h\|_\infty)$.
Define almost everywhere
\begin{align}
 H_*^{\mathrm{sd}}(s)
 &:=D_ew\left(s,E_*(s),\Xi(E_*)(s)\right),
 \label{eq:samedate-financial-hessian-app}\\
 (\mathcal L_*h)(s)
 &:=D_\xi w\left(s,E_*(s),\Xi(E_*)(s)\right)
       (D\Xi(E_*)h)(s).
 \label{eq:causal-financial-hessian-app}
\end{align}
Substituting the derivative of $\Xi$ proves
\begin{align*}
 (D\Gamma(E_*)h)(s)
 ={}&D_ew(s,E_*(s),\Xi(E_*)(s))h(s)\\
 &+D_\xi w(s,E_*(s),\Xi(E_*)(s))(D\Xi(E_*)h)(s),
\end{align*}
which is \eqref{eq:datewise-marginal-value-derivative-main}.  The first term is
a bounded multiplication operator.  The second is strictly causal because
$D\Xi(E_*)$ is strictly causal and its left multiplier is pointwise.

For the cost response, suppose $\psi_s=(\nabla c_s)^{-1}$ is defined on a
common neighborhood of $\Gamma(E_*)(s)$, is Carath\'eodory, and has an
essentially bounded, uniformly continuous derivative there. The same
Nemytskii argument shows that $\mathcal G(E)=\psi(\Gamma(E))$ is Fr\'echet
differentiable.  Since
$D\psi_s(\Gamma(E_*)(s))=C_*(s)^{-1}$ almost everywhere,
\[
 D\mathcal G(E_*)
 =C_*^{-1}(H_*^{\mathrm{sd}}+\mathcal L_*)
 =M_*^{\mathrm{sd}}+\mathcal V_*.
\]
Uniform positive definiteness makes multiplication by $C_*^{-1}$ bounded and
preserves strict temporal support.  If $\mathcal L_*$ obeys the stated
fractional-Volterra estimate, then so does $\mathcal V_*$ with its constant
multiplied by $\|C_*^{-1}\|_\infty$.  The Volterra lemma below therefore makes
$I-\mathcal V_*$ invertible and yields
$K_*=(I-\mathcal V_*)^{-1}M_*^{\mathrm{sd}}$.  The static and scalar
specializations follow immediately.
\end{proof}

\paragraph{Baseline accounting for release curvature.}
Fix a non-ambiguity reference specification \(\bar\zeta\), including
predictive covariance, feasible actions, and coefficients other than the
ambiguity support function, and define
\[
 \mathfrak V_{\bar\zeta}(h)
 :=\sup_{a\in\mathcal A_{\bar\zeta}}
 \{L_{\bar\zeta}(a)-\ell_{B_{\bar\zeta}a}(h)\}.
\]
For a protocol-indexed evaluator path \(h_R^P(y)\), write
\[
 P_R^{P,\bar\zeta}(y)
 :=W_R^P(y)-\mathfrak V_{\bar\zeta}(h_R^P(y)).
\]
Whenever these maps are twice differentiable on the relevant stratum, the
chain rule gives
\begin{align}
 (W_R^P)''(y)
 ={}&D^2\mathfrak V_{\bar\zeta}(h_R^P(y))
 [ (h_R^P)'(y),(h_R^P)'(y)]\nonumber\\
 &+D\mathfrak V_{\bar\zeta}(h_R^P(y))[(h_R^P)''(y)]
 +(P_R^{P,\bar\zeta})''(y).
 \label{eq:evaluator-criticality-transmission-app}
\end{align}
Only the left side is invariant to the reference specification. The first two
terms price evaluator geometry under the stated freeze; the remainder contains
predictive-risk, reoptimization, and other non-evaluator primitive changes.
Equation~\eqref{eq:evaluator-criticality-transmission-app} is an accounting
identity rather than a causal decomposition.

\subsection{Linear comparison equation and the global performance identity}

The argument in this subsection is classical, not viscosity-based.  On each
localization cylinder the two policy values are assumed to have the
$C^{1,2,1}$ regularity stated in the Main Paper, the induced comparison SDE is
well posed, and the stopped local martingale is a true martingale.  The global
formula additionally uses an exhausting localization, uniform integrability,
and the common terminal condition.  These hypotheses are part of the theorem,
not consequences of the formal Hamiltonian subtraction.

To avoid overloading the sophisticated auxiliary value $f^k$, let $F^k$ denote
the optimized current-vintage commitment certainty equivalent from
\cref{def:commitment}.  For a bounded prescribed feedback $\widehat\pi$, let
$g^{k,\widehat\pi}=\mathfrak J^k(\widehat\pi)$ be its evaluator-$k$ policy
certainty equivalent.  Put $F=F^k$, $g=g^{k,\widehat\pi}$,
$\Delta=F-g$, and let $\pi^{\pc,k}$ be the commitment optimizer.  The two
policy-evaluation equations are
\[
F_t+\Hh(k;\pi^{\pc,k};F)=0,
\qquad
g_t+\Hh(k;\widehat\pi;g)=0.
\]
Define the Hamiltonian gap
\[
 \mathfrak R^{k,\widehat\pi}
 :=\Hh(k;\pi^{\pc,k};F)-\Hh(k;\widehat\pi;F)\ge0.
\]
Add and subtract $\Hh(k;\widehat\pi;F)$ to obtain
\[
\Delta_t+
\{\Hh(k;\widehat\pi;F)-\Hh(k;\widehat\pi;g)\}
+\mathfrak R^{k,\widehat\pi}=0.
\]
The terms linear in $F_v,F_{mm}$ give
$-v^2I\Delta_v+v^2I\Delta_{mm}/2$.  The cross-wealth term gives
$-\gamma v\widehat\pi\Delta_m$, and the quadratic-gradient term gives
\[
-\frac\gamma2v^2I(F_m^2-g_m^2)
=-\frac\gamma2v^2I(F_m+g_m)\Delta_m.
\]
Finally, define
\[
 A_F:=\widehat\pi+vIF_m,\qquad
 A_g:=\widehat\pi+vIg_m.
\]
Then
\[
|A_F|-|A_g|=\chi(A_F-A_g)=\chi vI\Delta_m,
\qquad |\chi|\le1.
\]
Define the measurable secant
\[
 \chi^{k,\widehat\pi}
 :=\begin{cases}
 (|A_F|-|A_g|)/(A_F-A_g),&A_F\ne A_g,\\
 0,&A_F=A_g,
 \end{cases}
 \qquad |\chi^{k,\widehat\pi}|\le1,
\]
and therefore
\begin{equation}\label{eq:app-comparison-induced-drift}
 b^{k,\widehat\pi}
 =-\gamma v\widehat\pi
 -\frac{\gamma}{2}v^2I(F_m+g_m)
 -kv^2I\chi^{k,\widehat\pi}.
\end{equation}
Combining these terms gives
\begin{equation}\label{eq:app-comparison-linear-pde}
 \Delta_t-v^2I\Delta_v+\frac12v^2I\Delta_{mm}
 +b^{k,\widehat\pi}\Delta_m+\mathfrak R^{k,\widehat\pi}=0.
\end{equation}
Let $\mathbb P_{t,m,v}^{k,\widehat\pi}$ denote the law of
\[
 \dd M_s=b^{k,\widehat\pi}(s,M_s,V_s)\,\dd s
 +V_s\sqrt{I_s}\,\dd B_s,
 \qquad
 \dd V_s=-V_s^2I_s\,\dd s,
 \qquad (M_t,V_t)=(m,v),
\]
and let $\mathbb E_{t,m,v}^{k,\widehat\pi}$ be expectation under this law.
On each compact localization the coefficients are bounded and the diffusion in
the $m$ direction is uniformly nondegenerate.  For any localization stopping
time $\tau\in[t,T]$, It\^o's formula applied to
\eqref{eq:app-comparison-linear-pde} gives
\[
 \Delta(t,m,v)=\mathbb E_{t,m,v}^{k,\widehat\pi}\!\left[
 \Delta(\tau,M_\tau,V_\tau)+
 \int_t^\tau\mathfrak R_s^{k,\widehat\pi}\,\dd s\right].
\]
The stopped local-martingale term is a true martingale.  This is exactly
\eqref{eq:stopped-performance-FK}; in particular, stopping at the exit of a
fixed cylinder does \emph{not} remove the boundary value $\Delta(\tau)$.

For the global formula, take an increasing localization $\tau_n\uparrow T$
almost surely.  A sufficient condition used here is at-most-quadratic growth of
$F$ and $g$, at-most-linear growth of their $m$-derivatives, and the resulting
linear-growth comparison drift.  Standard finite-horizon SDE moment estimates
then give a uniform $(2+\eta)$-moment bound for the comparison state for some
$\eta>0$, hence
$\{\Delta(\tau_n,M_{\tau_n},V_{\tau_n})\}_n$ is uniformly integrable.
Because the two policy evaluations share the terminal certainty-equivalent
condition $F(T,\cdot,\cdot)=g(T,\cdot,\cdot)=0$, the boundary term converges to
zero in $L^1$, and letting $n\to\infty$ gives
\eqref{eq:global-performance-FK}.  If instead one works on a fixed stopped
cylinder, the boundary term must be retained unless common exit continuation
data are imposed.  The same stopped formula follows by approximation if the
absolute-value secant is selected measurably at points where $A_F=A_g$.

For the source decomposition, set
$y=\pi+vIF_m$ and
$x_F=m+\gamma v(\sigma^2I-1)F_m$.  Direct expansion gives
\[
\Hh(k;\pi;F)
=x_Fy-\frac{\gamma\sigma^2}{2}y^2-kv|y|
+\text{terms independent of }\pi.
\]
The optimizer $y_k$ satisfies
$x_F-\gamma\sigma^2y_k=kv\zeta_k$ with
$\zeta_k\in\partial|y_k|$.  Therefore
\begin{align*}
\mathfrak R^{k,\widehat\pi}
&=x_F(y_k-\widehat y)
-\frac{\gamma\sigma^2}{2}(y_k^2-\widehat y^2)
-kv(|y_k|-|\widehat y|)\\
&=\frac{\gamma\sigma^2}{2}(\widehat y-y_k)^2
+kv[|\widehat y|-|y_k|-\zeta_k(\widehat y-y_k)].
\end{align*}
This proves the global dynamic identity.  Notice that no sign-stability
assumption appears in the comparison equation.  The common-face square formula
is recovered because the support gap vanishes and the universal CARA
coefficients make the remaining source deterministic conditional on the
variance path.

\section{Gaussian updating and inherited distortions}\label{app:gaussian}

\subsection{Natural-parameter transport}

Let the nominal date-$s$ posterior be $N(m_s,V_s)$ and let an alternative posterior with the same covariance have mean $m_s^q=m_s+V_sq$. Under the common likelihood
\[
L_{s,t}(\theta)\propto
\exp\left\{\theta^\top b_{s,t}-\frac12\theta^\top A_{s,t}\theta\right\},
\qquad A_{s,t}\succeq0,
\]
the nominal and alternative natural parameters satisfy
\begin{align*}
V_t^{-1}&=V_s^{-1}+A_{s,t},
&V_t^{-1}m_t&=V_s^{-1}m_s+b_{s,t},\\
(V_t^q)^{-1}&=V_s^{-1}+A_{s,t},
&(V_t^q)^{-1}m_t^q&=V_s^{-1}m_s^q+b_{s,t}.
\end{align*}
The covariances are equal, $V_t^q=V_t$, and
\[
V_t^{-1}(m_t^q-m_t)
=V_s^{-1}(m_s^q-m_s)=q.
\]
This proves \cref{thm:matrix-inheritance}. The calculation is pathwise in $b_{s,t}$ and does not average over the realized signal.

If the date-$s$ set is $m_s+V_s\mathcal Q_s$, updating every member therefore gives $m_t+V_t\mathcal Q_s$. A fresh set $m_t+V_t\mathcal Q_t$ is compatible with that inheritance exactly when $\mathcal Q_t=\mathcal Q_s$.

\subsection{Controlled observations and separated dynamics}\label{app:controlled-filter}
Let
\[
 \dd R_t=\theta\dd t+\sigma\dd W_t^S,
 \qquad \dd Y_t=\sqrt{e_t}\theta\dd t+\dd W_t^Y,
\]
with bounded observation-predictable $e$.  Girsanov's likelihood on the
canonical observation space is
\[
 L_t(\theta)=\exp\!\left\{\theta\left(
 \frac{R_t}{\sigma^2}+\int_0^t\sqrt{e_s}\dd Y_s\right)
 -\frac{\theta^2}{2}\left(
 \frac{t}{\sigma^2}+\int_0^te_s\dd s\right)\right\}.
\]
Multiplication by the Gaussian prior gives the posterior
$N(m_t,p_t^{-1})$ with
\[
 p_t=v_0^{-1}+t/\sigma^2+\int_0^te_s\dd s,
 \qquad
 p_tm_t=v_0^{-1}m_0+R_t/\sigma^2+\int_0^t\sqrt{e_s}\dd Y_s.
\]
The normalized innovations have unit quadratic variations and zero
cross-variation, hence are Brownian by L\'evy's characterization.  We use here
the standard Girsanov theorem, L\'evy characterization, and Novikov criterion;
see \citet{RevuzYor1999}.  Product calculus gives the separated state equations in
the controlled Gaussian filter closure in the Main Paper.  For a predictable natural distortion $q$,
the stopped exponential martingale with integrand
$(v_sq_s/\sigma,\sqrt{e_s}v_sq_s)$ is valid by bounded-domain Novikov.
Substitution under the resulting measure gives the inherited predictive drift
$m_t+v_tq_t$ and the separated wealth equation.  This proves the proposition.

\section{CARA transform and stopped spike verification}\label{app:verification}

\subsection{Robust Hamiltonian}

Let
\[
V^k(t,x,m,v)=-\exp\{-\gamma[x+f^k(t,m,v)]\}.
\]
For fixed $(\pi,q)$, It\^o's formula and division by the positive factor $-\gamma V^k$ give the certainty-equivalent drift
\begin{align*}
\mathcal G^{\pi,q}f^k
={}&\pi(m+vq)-v^2I_tf_v^k+\frac12v^2I_tf_{mm}^k\\
&-\frac\gamma2\left[
\sigma^2\pi^2+2v\pi f_m^k+v^2I_t(f_m^k)^2
\right]
+qv^2I_tf_m^k.
\end{align*}
The terms involving $q$ are
\[
qv\left(\pi+vI_tf_m^k\right).
\]
For $q\in[-k,k]$, a Borel minimizing selector is
\[
q^*(t,m,v;\pi,k)
=-k\,\sgn\left(\pi+vI_tf_m^k\right),
\]
with any value in $[-k,k]$ at zero. Taking the lower envelope yields the Hamiltonian \eqref{eq:Hamiltonian}.

\paragraph{One-step coincidence of fixed and rectangular distortions.}
At a fixed state, fixed continuation derivative, and fixed portfolio action,
\[
 \inf_{q\in[-k,k]}qv(\pi+vI_tf_m^k)
 =-kv\lvert\pi+vI_tf_m^k\rvert.
\]
Hence a single distortion chosen for one remaining decision and the pointwise
selector induced by the rectangular class have the same current Hamiltonian
support term.  Rectangularity changes the intertemporal coupling of distortions
and continuation values, not the instantaneous robust portfolio geometry.

\subsection{A detailed local verification argument}

For this subsection write $\CE_t^k(\pi):=\mathfrak J_t^k(\pi)$ for the evaluator-$k$ robust certainty equivalent defined in \eqref{eq:frozen-ce-definition}.

We give the proof of the stopped spike-verification statement in the Main Paper. Let $\mathcal D$ be the bounded cylinder in \cref{ass:admissibility}. Choose an increasing localization sequence $\tau_n$ that stops before exit and before the accumulated quadratic variation or absolute control exceeds $n$. The local boundedness assumptions imply, uniformly over bounded spike actions and $q\in\mathcal Q_k$,
\begin{equation}\label{eq:uniform-moment-app}
\sup_{0<h<h_0}
\E_t^q\left[
\sup_{u\in[t,t+h\wedge\tau_n]}
\bigl(|X_u-X_t|^2+|m_u-m|^2+|v_u-v|^2\bigr)
\right]=O(h).
\end{equation}
The exponential-moment assumption makes the stopped exponential value process uniformly integrable.

Fix the current state and budget $k=\kappa(v)$. For a bounded spike $\eta$, let $\pi^{h,\eta}$ equal $\eta$ on $[t,t+h)$ and the candidate equilibrium feedback thereafter. Applying It\^o's formula to $V^k$ on $[t,t+h\wedge\tau_n]$ under a fixed distortion process $q$ gives
\begin{align}
&\E_t^q\!
\left[V^k(t+h\wedge\tau_n,X_{t+h\wedge\tau_n},m_{t+h\wedge\tau_n},v_{t+h\wedge\tau_n})\right]
-V^k(t,x,m,v)\nonumber\\
&\quad=
\E_t^q\int_t^{t+h\wedge\tau_n}
(-\gamma V^k_u)
\left[
\partial_uf^k+\mathcal G^{\eta,q_u}f^k
\right]\dd u.
\label{eq:ito-spike-app}
\end{align}
The local regularity of $f^k$, \eqref{eq:uniform-moment-app}, and continuity of the coefficients imply the expansion
\begin{equation}\label{eq:uniform-spike-app}
\frac{1}{h}
\left\{
\CE_t^k(\pi^{h,\eta})-\CE_t^k(\pi^*)
\right\}
=
\Hh(k;\eta;f^k)(t,m,v)
-\Hh(k;\pi^*;f^k)(t,m,v)
+o(1),
\end{equation}
where the remainder is uniform over spike actions in a fixed bounded set. To justify the robust envelope, use the explicit measurable minimizer above for the upper bound and an $\varepsilon$-minimizing distortion for the reverse bound. Compactness of $[-k,k]$ and the local continuity of the integrand make the two bounds agree to first order.

Equation \eqref{eq:frozen-pde} removes the equilibrium drift. At the current diagonal $k=\kappa(v)$, \eqref{eq:diag-selector} makes the right-hand side of \eqref{eq:uniform-spike-app} nonpositive. Letting first $h\downarrow0$ and then $n\uparrow\infty$ yields \eqref{eq:spike-definition}. Because the exponential utility and the certainty equivalent are strictly increasing transforms of one another, the same first-order condition verifies weak equilibrium in utility units.

For the joint portfolio--research model, take the bounded admissible observation-feedback spike pair $\eta=(\eta^\pi,\eta^e)$ from \cref{def:spe}, replace $I_t$ on the spike interval by $\sigma^{-2}+\eta^e$, add $-c(\eta^e)$ to the certainty-equivalent drift, and maximize the diagonal Hamiltonian over both components.  The controlled-filter representation in the controlled Gaussian filter closure in the Main Paper keeps the spike admissible, and the same localization and uniform-remainder estimate proves the joint condition \eqref{eq:spike-definition}.

\section{Exogenous information: coefficient equations and exact tax}\label{app:exogenous}
On the positive sign-stable branch write
$f^k(t,m,v_t)=A(t)m^2+k\beta(t)m+C^k(t)$.  The portfolio first-order condition
and coefficient matching give
\begin{align*}
 A'&=2\gamma\iota v^2A^2+2vA/\sigma^2-(2\gamma\sigma^2)^{-1},
 &A(T)&=0,\\
 \beta'&=\frac{v}{\gamma\sigma^2}(1+\gamma\beta)
 (1+2\gamma\sigma^2\iota vA),
 &\beta(T)&=0.
\end{align*}
With $\tau=T-t$ and $D=\sigma^2+\tau v$, direct substitution in the
two coefficient equations gives
\[
 A=\frac{\tau}{2\gamma D},\qquad
 1+\gamma\beta=\frac{\sigma^2}{D}=: \alpha.
\]
On the positive exposed face, the portfolio first-order condition is
\[
 m-\gamma\sigma^2\pi-\gamma v f_m^k-kv=0.
\]
Because $f_m^k=2Am+k\beta=\tau(m-kv)/(\gamma D)$, this becomes
\[
 \pi^k=\frac{\alpha}{\gamma\sigma^2}(m-kv),\qquad
 \pi^k+vI f_m^k
 =\frac{1+\tau vI}{\gamma D}(m-kv).
\]
Thus the current selector and current-vintage commitment selector coincide
when they use the same current budget, and the sign identity in
the Gaussian--CARA coefficient closure in the Main Paper follows because $(1+\tau vI)/(\gamma D)>0$.
For two positive-face budgets $k$ and $\bar k$ the same formula gives
\[
 \pi^{\bar k}-\pi^k=-\frac{\alpha v}{\gamma\sigma^2}(\bar k-k).
\]
The evaluator-$k$ Hamiltonian is a concave quadratic in the positive-face
portfolio with curvature $-\gamma\sigma^2$; completing the square therefore
gives
\[
 \mathfrak r_u(k,\bar k)
 =\frac{\gamma\sigma^2}{2}(\pi_u^{\bar k}-\pi_u^k)^2
 =\frac{\alpha(u,v_u)^2v_u^2}{2\gamma\sigma^2}(\bar k-k)^2.
\]
The $n$-asset common-face analogue follows identically on the free coordinates
of a fixed face under $\Sigma\in\mathbb S_{++}^n$:
for $\ell_k(\pi)=\pi^\top r-kb^\top\pi-\gamma\pi^\top\Sigma\pi/2$,
$\pi_k=(\gamma\Sigma)^{-1}(r-kb)$, and substitution yields
$(\bar k-k)^2b^\top\Sigma^{-1}b/(2\gamma)$.

The optimized current-vintage commitment value $F^k$ has the same coefficients
$A$ and $k\beta$ as the sophisticated auxiliary value on this positive common
face; only the constant coefficient differs.  This equality is not assumed.
To see it directly, let $k^\circ=\kappa(v)$ denote the current diagonal budget used
by the future sophisticated selector.  With the quadratic ansatz above, that
selector is
\[
 \pi^*=\frac{m-\gamma v(2Am+k^\circ\beta)-k^\circ v}{\gamma\sigma^2}.
\]
Substituting this feedback into the evaluator-$k$ equation and collecting the
$m^2$ and $km$ coefficients eliminates $k^\circ$ exactly and gives
\begin{align*}
 A'&=2\gamma\iota v^2A^2+2vA/\sigma^2-(2\gamma\sigma^2)^{-1},\\
 \beta'&=\frac{v}{\gamma\sigma^2}(1+\gamma\beta)
 (1+2\gamma\sigma^2\iota vA).
\end{align*}
These are precisely the optimized frozen-$k$ commitment coefficient equations;
hence $F_m^k=f_m^k$ on the common positive face, while their constants may
differ.  By definition of the sophisticated policy value,
$g^{k,\pi^{\spe}}=f^k$.  Subtracting its constant equation from the commitment
constant equation and using the completed-square identity above leaves exactly
\[
 \frac{\alpha(u,v_u)^2v_u^2}{2\gamma\sigma^2}
 [\kappa(v_u)-\kappa(v_t)]^2.
\]
The calculation above identifies the common-positive-face source, but it must
be combined with the stopped comparison identity rather than integrated through
a face change.  Let $\tau_{\mathrm{cf}}$ be the first exit from the common
positive face, capped at $T$.  Substitution into
\eqref{eq:stopped-performance-FK} gives
\[
 \Tax_t=\mathbb E_{t,m_t,v_t}^{k_t,\pi^{\spe}}\!\left[
 \Delta(\tau_{\mathrm{cf}},M_{\tau_{\mathrm{cf}}},V_{\tau_{\mathrm{cf}}})
 +\int_t^{\tau_{\mathrm{cf}}}
 \frac{\alpha(u,v_u)^2v_u^2}{2\gamma\sigma^2}
 [\kappa(v_u)-\kappa(v_t)]^2\,\dd u\right].
\]
If the two policies have common exit continuation data, the displayed boundary
gap vanishes, but the source remains integrated only to the random exit time
$\tau_{\mathrm{cf}}$.  The common-face source may be integrated directly to
$T$ only when $\tau_{\mathrm{cf}}=T$ almost surely.  Under this no-exit
condition the integrand is nonnegative and the tax vanishes exactly when
$\kappa(v_u)=\kappa(v_t)$ for almost every $u\in[t,T]$.

For the stock-only fixed-level credible rule
$\kappa(v)=z_\alpha/\sqrt v=z_\alpha\sqrt p$ and
$p_u=p_t+(u-t)/\sigma^2$, direct integration gives
\[
 \Tax_t=\frac{v_T^2z_\alpha^2}{2\gamma\sigma^2}
 \left[2p_t(T-t)+\frac{(T-t)^2}{2\sigma^2}
 -\frac{4\sigma^2\sqrt{p_t}}3
 \{p_T^{3/2}-p_t^{3/2}\}\right],
\]
which proves part~\textup{(iii)} of \cref{thm:global-performance-identity}. The
evaluator $\mathcal Q_k$ is rectangular within a fixed vintage. Accordingly,
the identity prices this recursive implementation rather than the generally
nonrectangular family obtained by transporting each prior across vintages.

\section{Governed hard control and causal information feedback}\label{app:governed-hard-tariff}

\subsection{Joint portfolio--research selector}
For a frozen evaluator budget $k$, put $I(e)=\sigma^{-2}+e$.  At a fixed
state and a fixed smooth continuation jet, the endogenous CARA Hamiltonian is
\begin{align}
 \mathcal H_k(\pi,e;f)
={}&\pi m-c(e)-v^2I(e)f_v+\frac12v^2I(e)f_{mm}\nonumber\\
&-\frac\gamma2\left[\sigma^2\pi^2+2v\pi f_m+v^2I(e)f_m^2\right]
-kv\left|\pi+vI(e)f_m\right|.\label{eq:ec-joint-hamiltonian-v172}
\end{align}
This is the lower-envelope Hamiltonian obtained from the controlled Gaussian filter closure in the Main Paper.

\begin{proof}[Proof of the joint-selector closure]
Let $\mu=\min\{\gamma\sigma^2,\chi_0\}$.  In
\[
 (\pi,e)\longmapsto
 \mathcal H_k(\pi,e;f)+\frac\mu2(\pi^2+e^2),
\]
the term $-kv|\pi+vI(e)f_m|$ is the negative of a convex function of an
affine map, $-c(e)+\mu e^2/2$ is concave by
\cref{ass:strong-research-cost}, the portfolio quadratic is
$-(\gamma\sigma^2-\mu)\pi^2/2$, and every remaining term is affine in
$(\pi,e)$.  Hence $\mathcal H_k$ is jointly $\mu$-strongly concave.
Coercivity in $\pi$ and compactness of $[0,\bar e]$ give existence; strong
concavity gives uniqueness.

For fixed $e$, set $y=\pi+vI(e)f_m$.  The terms depending on $y$ reduce to
\[
 a_ey-\frac{\gamma\sigma^2}{2}y^2-kv|y|,
 \qquad a_e=m+\gamma\sigma^2vef_m.
\]
The unique maximizer is
$y_e^*=\mathsf S_{kv}(a_e)/(\gamma\sigma^2)$, which gives the portfolio
formula in the Main Paper.

Finally, $\mathcal H_k(\pi,e;f)+\chi_0e^2/2$ is jointly concave by the same
argument.  A partial supremum over $\pi\in\mathbb R$ preserves concavity, so
the reduced research objective plus $\chi_0e^2/2$ is concave.  Partial
optimization therefore preserves at least $\chi_0$ units of strong concavity
in $e$, and the projected research KKT problem has a unique solution.
\end{proof}

\subsection{Precision-time representation and comparison boundary}

\paragraph{Step 1: the exact time change.}
Write $I_t=\sigma^{-2}+e_t$. Since $I_t\ge\sigma^{-2}>0$,
\[
 p_t=p_0+\int_0^t I_s\,\dd s
\]
is strictly increasing up to the finite horizon and admits the inverse clock
$t(p)$. For $i\in\{S,Y\}$ set
\[
 \bar W^i_p=\int_0^{t(p)}\sqrt{I_s}\,\dd W_s^{i,q}.
\]
The two-dimensional continuous local martingale
$(\bar W^S,\bar W^Y)$ has bracket matrix
$(p-p_0)I_2$. By L\'evy's characterization it is a two-dimensional Brownian
motion in the precision-time filtration. Since
$\dd W^{i,q}_{t(p)}=I(\bar e_p)^{-1/2}\dd\bar W^i_p$, substitution into the
separated filter and wealth equations gives
\begin{align*}
 \dd\bar t_p&=I(\bar e_p)^{-1}\dd p,\\
 \dd\bar m_p
 &=p^{-2}\bar q_p\dd p
 +p^{-1}\left[
 \frac{\sigma^{-1}}{\sqrt{I(\bar e_p)}}\dd\bar W^S_p
 +\frac{\sqrt{\bar e_p}}{\sqrt{I(\bar e_p)}}\dd\bar W^Y_p
 \right],\\
 \dd\bar X_p
 &=\left\{u_p(\bar m_p+p^{-1}\bar q_p)
 -\frac{c(\bar e_p)}{I(\bar e_p)}\right\}\dd p
 +\sigma u_p\sqrt{I(\bar e_p)}\dd\bar W^S_p,
\end{align*}
where $u_p=\pi_{t(p)}/I(\bar e_p)$. Hence
\[
 \dd\langle\bar m\rangle_p
 =p^{-2}\frac{\sigma^{-2}+\bar e_p}{I(\bar e_p)}\dd p
 =p^{-2}\dd p,
 \qquad
 \dd\langle\bar X,\bar m\rangle_p=\frac{u_p}{p}\dd p.
\]
Thus posterior-mean quadratic variation per unit precision is independent of
the research rate; only calendar speed changes.

\begin{lemma}[Precision-clock control correspondence and DPP]
\label{lem:precision-time-dpp}
Fix a stopped cylinder and a compatible frozen budget $k$.  Let
$\mathfrak A^{\rm cal}$ be the stopped strong admissible class: $e\in[0,\bar e]$ and
$q\in[-k,k]$ are progressively measurable, the portfolio satisfies the
localized square- and exponential-integrability conditions, state equations
are pathwise unique, and the class is stable under stopping-time
concatenation and measurable selection.  Define the precision-time admissible
class to be the image of $\mathfrak A^{\rm cal}$ under the clock change, not
an independently enlarged class.

Up to the exit
$\zeta=\inf\{p\ge p_0:\bar t_p=T\}$, the map
\begin{equation}\label{eq:calendar-to-precision-controls-app}
 (\pi_t,e_t,q_t)\longmapsto
 \left(u_p,e_p,\bar q_p\right)
 =\left(\frac{\pi_{t(p)}}{I(e_{t(p)})},e_{t(p)},q_{t(p)}\right)
\end{equation}
is a bijection, modulo indistinguishability, between calendar-time and
precision-time investor/nature control triples in the filtrations
$\mathcal F_t$ and $\overline{\mathcal F}_p=\mathcal F_{t(p)}$.  Its inverse
solves $\dd\bar t_p=I(e_p)^{-1}\dd p$, takes the inverse clock, and sets
$\pi_t=I(e_{p(t)})u_{p(t)}$ and $q_t=\bar q_{p(t)}$.  Triple by triple it
preserves the predictive measure, state law, terminal wealth, admissibility,
and stopping-time concatenation.  Moreover,
\[
 p_0+\underline I(T-t_0)\le\zeta
 \le p_0+\overline I(T-t_0),
 \qquad \underline I=\sigma^{-2},\quad
 \overline I=\sigma^{-2}+\bar e.
\]
For every bounded $\overline{\mathcal F}_p$-stopping time $\sigma$, put
$\widehat\sigma=\sigma\wedge\zeta$.  If $\mathcal W$ denotes the utility
value, then
\begin{equation}\label{eq:precision-time-dpp-app}
 \mathcal W(p,t,m,x)
 =\sup_{(u,e)}\inf_{\bar q}
 \E^{\bar q}_{p,t,m,x}\!
 \left[\mathcal W(\widehat\sigma,\bar t_{\widehat\sigma},
 \bar m_{\widehat\sigma},\bar X_{\widehat\sigma})\right].
\end{equation}
Equivalently, under
$\mathcal W=-\exp[-\gamma(x+f)]$,
\begin{equation}\label{eq:precision-time-ce-dpp-app}
 e^{-\gamma f(p,t,m)}
 =\inf_{(u,e)}\sup_{\bar q}
 \E^{\bar q}_{p,t,m,x}\!
 \left[
 e^{-\gamma\{\bar X_{\widehat\sigma}-x+
 f(\widehat\sigma,\bar t_{\widehat\sigma},
 \bar m_{\widehat\sigma})\}}
 \right].
\end{equation}
No new rectangularization is introduced by the time change.
\end{lemma}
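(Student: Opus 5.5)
The plan is to build the calendar-to-precision correspondence pathwise from the strictly increasing clock, and then carry the calendar-time dynamic programming principle across it; no new argument is needed in precision time itself.

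\textbf{Step 1: the clock and the bounds on \(\zeta\).} Because \(I(e)=\sigma^{-2}+e\in[\underline I,\overline I]\) with \(\underline I>0\), the map \(t\mapsto p_t=p_0+\int_{t_0}^tI(e_s)\,\dd s\) is absolutely continuous and strictly increasing. Its inverse \(t(p)\) is Lipschitz with constant \(\underline I^{-1}\). For each \(p\), \(t(p)\) is an \(\mathcal F_t\)-stopping time, since \(\{t(p)\le s\}=\{p_s\ge p\}\in\mathcal F_s\). Hence \(\overline{\mathcal F}_p=\mathcal F_{t(p)}\) is a filtration satisfying the usual conditions. The two-sided bounds on \(\zeta=p_T\) follow by integrating \(\underline I\le I\le\overline I\) over \([t_0,T]\).

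\textbf{Step 2: the map on controls.} Progressive measurability of \((\pi,e,q)\) in \(\mathcal F_t\) transfers to \(\overline{\mathcal F}_p\)-progressive measurability of \((u,e,\bar q)\). This is the standard fact that composing a progressive process with a continuous, adapted, increasing time change gives a process progressive in the time-changed filtration.

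For the inverse direction, start from precision-time controls \((u,e,\bar q)\). Solve \(\dd\bar t_p=I(e_p)^{-1}\dd p\); the right side is bounded, progressive and positive, so \(\bar t\) is a strictly increasing adapted process. Then invert it and set \(\pi_t=I(e_{p(t)})u_{p(t)}\) and \(q_t=\bar q_{p(t)}\).

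Composing the two directions returns the original triple up to indistinguishability. This holds because the clock built from \(e\) is pathwise unique: it is an ODE with bounded right side.

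\textbf{Step 3: preservation of laws and admissibility.} Step~1 of the preceding computation already shows, via L\'evy's characterization, that \((\bar W^S,\bar W^Y)\) is Brownian. The Dambis--Dubins--Schwarz-type substitution \(\int_0^{t(p)}H_s\,\dd W_s=\int_{p_0}^pH_{t(r)}I_{t(r)}^{-1/2}\dd\bar W_r\) then shows that the time-changed processes solve the precision-time SDEs. Pathwise uniqueness in both clocks identifies the solutions, so state laws and terminal wealth \(X_T=\bar X_\zeta\) coincide triple by triple.

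The predictive measure is defined by a stopped Novikov density. It is preserved because the stochastic exponential is clock-invariant: \(\mathcal E(\int\theta\,\dd W)_{t(p)}=\mathcal E(\int\bar\theta\,\dd\bar W)_p\). Admissibility conditions (square integrability, exponential moments, stopping at exit) are stated in terms of these invariant quantities, with \(\dd t=I^{-1}\dd p\) changing integrals only by factors in \([\overline I^{-1},\underline I^{-1}]\).

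Concatenation carries over as follows. If \(\sigma\) is an \(\overline{\mathcal F}_p\)-stopping time, then \(t(\sigma)\) is an \(\mathcal F_t\)-stopping time, and concatenating at \(\sigma\) in precision time is the same as concatenating at \(t(\sigma)\) in calendar time.

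\textbf{Step 4: the dynamic programming principle.} The calendar class \(\mathfrak A^{\rm cal}\) is assumed stable under concatenation and measurable selection, and the robust problem is the within-vintage rectangular \(q\in[-k,k]\) problem. Under these hypotheses the stopped calendar-time DPP holds at stopping times \(t(\widehat\sigma)\); this is the standard measurable-selection and \(\varepsilon\)-optimal pasting argument for sup-inf problems with rectangular nature. Transporting it through the bijection, and using \(t(\widehat\sigma)\le T\) because \(\widehat\sigma\le\zeta\), gives \eqref{eq:precision-time-dpp-app}.

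The certainty-equivalent form \eqref{eq:precision-time-ce-dpp-app} follows by factoring \(e^{-\gamma x}\). This reduces to wealth translation invariance of CARA: neither the controls nor \(\bar X-x\) depend on \(x\). Finally, \(\bar q\) ranges exactly over the image of \(\mathcal Q_k\), which is again the pointwise box, so no rectangularization is added.

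\textbf{Main obstacle.} The delicate step is Step~2 together with concatenation. The clock depends on the control \(e\) itself, so the precision-time filtration \(\overline{\mathcal F}_p\) is control-dependent. The bijection and the DPP therefore have to be stated control by control, and measurable selection must be done in calendar time before transporting. I would handle this by proving the DPP entirely in calendar time, at the random times \(t(\widehat\sigma)\) attached to each control, and only then pulling it back.
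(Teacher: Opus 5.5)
Your proposal is correct and follows the paper's proof essentially step for step. Both arguments use the same chain: the pathwise strictly increasing clock, time-changed progressive controls with the inverse obtained by integrating $I(e_p)^{-1}$, substitution for the state processes and the Girsanov densities, transport of stopping times and pastings, the stopped calendar-time DPP applied at $t(\widehat\sigma)$, and factoring out $-e^{-\gamma x}$. The one point the paper makes explicit that you leave implicit is that the correspondence is two-stage (first $(\pi,e)$, then $q$ for that fixed pair, which is legitimate because the clock depends on $e$ alone); this is what carries the $\sup_{(u,e)}\inf_{\bar q}$ order across, and the minus sign in $-e^{-\gamma x}$ then turns it into the $\inf\sup$ order of the certainty-equivalent form.
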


\begin{proof}
For every admissible triple, $I(e)\in[\underline I,\overline I]$ makes the
clock pathwise absolutely continuous, continuous, and strictly increasing.
The standard time-change results for continuous martingales and progressively
measurable processes (see, e.g., Chapter~V of \citet{RevuzYor1999}) therefore carry
every control to the filtration
$\overline{\mathcal F}_p=\mathcal F_{t(p)}$; solving the absolutely continuous
calendar-state equation gives the inverse.  The stochastic-integral
substitution in Step~1 proves pathwise equality of the transformed state
processes.  Since $q$ is bounded, the calendar and precision Girsanov
densities are true martingales and are related by the same substitution, so
the predictive measures and robust objectives agree for each control triple.
Consequently the two successive bijections---first for $(\pi,e)$ and then,
conditional on that pair, for $q$---preserve the order
$\sup_{(\pi,e)}\inf_q$.

A calendar stopping time $\tau$ is carried to $p_\tau$, while a precision
stopping time $\sigma$ stopped at $\zeta$ is carried to
$t(\sigma\wedge\zeta)$.  Continuity and strict monotonicity preserve the
stopping-time property.  Pasting controls at either stopping time and applying
the inverse clock gives exactly the pasted triple in the other formulation;
thus the correspondence preserves concatenation and the measurable-selection
class.  We use the standard stopped strong-formulation dynamic-programming
principle; see \citet{FlemingSoner2006}.  Applying the stopped calendar-time DPP at
$t(\widehat\sigma)$ and using triplewise equality of objectives yields
\eqref{eq:precision-time-dpp-app}.  Factoring
$-e^{-\gamma x}$ from that identity reverses the two optimizations inside the
positive exponential loss and gives
\eqref{eq:precision-time-ce-dpp-app}.  The bounds on $\zeta$ follow by
integrating $\underline I\le I(e)\le\overline I$.  Finally, within-vintage
rectangularity is unchanged because the conditional model set and its
conditioning sigma-fields are re-indexed, not enlarged or rebuilt.
\end{proof}

\paragraph{Step 2: equivalence of the Bellman equations.}
In $(t,m,p)$ coordinates, before the precision-time normalization, the
compatible CARA certainty-equivalent equation is
\begin{align*}
0=f_t+\sup_{\pi,e}\Big\{&\pi m-c(e)+I(e)f_p
 +\frac{I(e)}{2p^2}f_{mm}
 -\frac{\gamma}{2}\big[\sigma^2\pi^2+2p^{-1}\pi f_m
 +p^{-2}I(e)f_m^2\big]\\
&-\frac{k}{p}\big|\pi+p^{-1}I(e)f_m\big|\Big\}.
\end{align*}
Put $\pi=I(e)u$ and collect the positive factor $I(e)$.  The normalized
precision-time Bellman equation is
\begin{align}
0=f_p+\sup_{u\in\mathbb R,\,0\le e\le\bar e}\Bigg\{&
 \frac{f_t-c(e)}{I(e)}+um
 -\frac{\gamma\sigma^2}{2}I(e)u^2
 -\frac{\gamma}{p}u f_m
 +\frac{1}{2p^2}f_{mm}
 -\frac{\gamma}{2p^2}f_m^2\nonumber\\
&-\frac{k}{p}\left|u+\frac{f_m}{p}\right|\Bigg\}.
\label{eq:precision-time-hjb-app}
\end{align}
Equivalently the calendar equation is
$\sup_{u,e}I(e)\{f_p+\mathfrak H^P\}=0$.  Because $I(e)$ is strictly positive and bounded
above and below, a family of real numbers has supremum zero after multiplication
by $I(e)$ if and only if the corresponding normalized family has supremum zero;
the zero-level maximizers are also the same. This proves the HJB equivalence
and the feedback equivalence in the precision-time closure in the Main Paper.

\paragraph{Step 3: the comparison frontier and the correct backward sign.}
The precision-time transformation and Hamiltonian algebra above do not by
themselves provide viscosity uniqueness.  In particular, a backward
terminal-value equation written as
\begin{equation}\label{eq:precision-proper-backward-form-app}
 f_p+\mathcal H[f]=0
 \qquad\Longleftrightarrow\qquad
 -f_p-\mathcal H[f]=0
\end{equation}
must use the viscosity convention associated with the proper backward operator
$F=-f_p-\mathcal H$ in the standard viscosity sense of \citet{CrandallIshiiLions1992}.  Thus a subsolution satisfies
\begin{equation}\label{eq:precision-correct-subsuper-sign-app}
 a_U+\mathcal H[U]\ge0,
 \qquad
 a_V+\mathcal H[V]\le0
\end{equation}
for a supersolution $V$, with the usual interpretation at test jets.
The opposite convention cannot support the desired terminal comparison.  The
scalar equation
\[
 u_p+1=0,\qquad u(P)=0,
\]
provides an explicit counterexample: $U(p)=2(P-p)$ obeys
$U_p+1=-1\le0$, $V(p)=0$ obeys $V_p+1=1\ge0$, and
$U(P)=V(P)=0$, yet $U(p)>V(p)$ for every $p<P$.  Thus the opposite sign convention fails terminal comparison; a valid comparison
argument must be formulated from the proper backward operator throughout.

For the present unbounded portfolio dualization, a whole-line quadratic-growth
comparison theorem would require a consistent doubling-of-variables argument,
the $\sup_e\inf_{q,\alpha}$ selection order, a strict coercive barrier, the
terminal/state-constraint boundary treatment, and a whole-line cutoff. The
Da Lio--Ley framework \citep{DaLioLey2006} provides a route under such
hypotheses. Here the precision-time HJB is used as the Bellman characterization
at smooth points and as the viscosity equation associated with the stopped DPP;
the welfare-curvature and temporal-support results use this stopped
characterization rather than a global PDE uniqueness theorem.

\paragraph{Step 3b: exact reachable precision domain without a PDE-comparison inference.}
Fix an initial $(t_0,m_0,p_0)$ and put
$I_-:=\sigma^{-2}$ and $I_+:=\sigma^{-2}+\bar e$.  From
\eqref{eq:precision-time-calendar},
\[
 I_+^{-1}\le\frac{\dd\bar t_p}{\dd p}\le I_-^{-1},
\]
so every admissible trajectory satisfies
\begin{equation}\label{eq:reachable-wedge-app}
 t_0+\frac{p-p_0}{I_+}
 \le \bar t_p\le
 t_0+\frac{p-p_0}{I_-}
 \qquad\text{up to the stopping time }\bar t_p=T.
\end{equation}
The exact interior state wedge reachable before the terminal date is therefore
\begin{equation}\label{eq:reachable-precision-domain-app}
 \boxed{
 \mathcal D_{\mathrm{reach}}
 =\left\{(p,t):
 \begin{array}{l}
 p_0\le p<p_0+I_+(T-t_0),\\[0.2em]
 t_0+\dfrac{p-p_0}{I_+}\le t<T,\\[0.45em]
 t\le t_0+\dfrac{p-p_0}{I_-}
 \end{array}\right\}.}
\end{equation}
Equivalently, for $t<T$,
\[
 t_0+\frac{p-p_0}{I_+}
 \le t\le
 \min\!\left\{T,\,t_0+\frac{p-p_0}{I_-}\right\}.
\]
The upper slow-clock bound need not itself be below $T$: a faster admissible
precision clock may reach the same $p$ before the terminal date.  Accordingly,
requiring the slow-clock upper bound itself to lie below $T$ would be unnecessarily restrictive.

The wedge is not merely necessary.  If $p>p_0$ and $(p,t)$ satisfies
\eqref{eq:reachable-precision-domain-app}, the constant precision rate
$I=(p-p_0)/(t-t_0)$ lies in $[I_-,I_+]$ and reaches $(p,t)$; for $p=p_0$ the
only reachable point is $(p_0,t_0)$.  Thus any
$p\ge p_0+I_+(T-t_0)$ is unreachable in the strict interior before $T$, while
points satisfying the displayed wedge are reachable.  This is a pathwise/DPP
statement only.  It does not erase an artificial $p=P$ boundary when comparing
arbitrary viscosity candidates, and we make no such comparison claim.  Stopped
continuation values can still be nested consistently by the stochastic
dynamic-programming principle under the rectangular formulation
\citep{EpsteinSchneider2003}; exhaustion of the reachable domain is therefore a
stochastic-value/nonexplosion argument, not a boundary-free PDE uniqueness
theorem.

\paragraph{Step 4: feedback and open positive-research regions.}
At a point where the viscosity value is classical, the precision-time and
calendar-time maximizers coincide by Step 2. The joint Hamiltonian is strongly
concave under \cref{ass:strong-research-cost}, so the investor maximizer is
unique. If the one-sided derivative at $e=0$ satisfies
$\Gamma^C-c'(0)\ge\delta>0$, the lower boundary violates the KKT condition and
$e^*>0$. If $c'(\bar e)-\Gamma^C\ge\delta>0$, the upper boundary violates the
KKT condition and $e^*<\bar e$. When both hold, $0<e^*<\bar e$. For persistence, let \(\vartheta_n\to\vartheta\) and suppose that,
on every \(O'\Subset O\),
\(f^{\vartheta_n}\to f^\vartheta\) in the \(C^{1,2}(O')\) topology containing
the derivatives entering \(\Gamma^C\).  The coefficients in the KKT index are
continuous in the primitives, so
\(\|\Gamma^{C,\vartheta_n}-\Gamma^{C,\vartheta}\|_{L^\infty(O')}\to0\).
For all large \(n\) this norm is below \(\delta/2\), and the two strict
margins remain at least \(\delta/2\) on \(O'\).  Local-uniform convergence of
the values alone would not imply this derivative convergence.  In the
explicit Gaussian--CARA branch the finite-dimensional coefficient ODEs and
their continuous dependence on parameters supply the required jet stability.
Hence the conditional strict-KKT persistence statement used in the Main
Paper holds without assuming a uniformly elliptic covariance on the full
calendar-time state. No viscosity-comparison conclusion is used here.

\subsection{Abstract causal--noncausal separator}\label{app:causal-noncausal-separator}

\paragraph{Temporal-support decomposition.}
Write the equilibrium derivative as
$D\mathcal G=N+\mathcal V$, where $\mathcal V$ is strictly causal in time and
$N$ is the bounded residual noncausal channel.  Resolving the causal part first
gives
\[
 D_E\mathcal F=(I-\mathcal V)
 [I-(I-\mathcal V)^{-1}N].
\]
Thus strict causal propagation enters through
$R=(I-\mathcal V)^{-1}$ and the master operator is $K=RN$.  In a pure pointwise same-date finite grid, $N=M$ is block diagonal.
For arbitrary Banach date blocks the causal factor gives a triangular-spectrum
inclusion; equality holds for finite-dimensional date blocks and can fail in
infinite dimension.  This boundary is proved sharply below.
For general noncommuting or temporally nonlocal $N$, the spectrum of $RN$ need
not equal that of $N$.  When $N$ is finite rank, however, part~\textup{(iii)}
shows that the nonzero master spectrum is captured exactly by the finite matrix
$L(I-\mathcal V)^{-1}U_0$; this is a factor compression, not a time
truncation.  The Volterra resolvent/iteration framework for weakly singular
kernels is classical; see \citet{GripenbergEtAl1990}.  The concrete
Gamma/Mittag--Leffler estimate needed here is proved next.

\begin{lemma}[Fractional Volterra iteration]\label{lem:fractional-volterra-iteration}
Let \(\mathcal V\) be a bounded operator on
\(\mathbb X=L^\infty(0,S;H)\) satisfying
\[
 \|(\mathcal Vx)(s)\|_H
 \le C\int_0^s(s-r)^{-\alpha}\|x(r)\|_H\,\dd r,
 \qquad 0\le\alpha<1.
\]
With \(\eta=1-\alpha\),
\begin{equation}\label{eq:fractional-volterra-power-app}
 \|\mathcal V^n\|
 \le\frac{[C\Gamma(\eta)S^\eta]^n}{\Gamma(n\eta+1)}.
\end{equation}
Consequently \(r(\mathcal V)=0\),
\((I-\mathcal V)^{-1}=\sum_{n\ge0}\mathcal V^n\) in operator norm, and
\begin{equation}\label{eq:mittag-leffler-resolvent-app}
 \|(I-\mathcal V)^{-1}\|
 \le E_\eta(C\Gamma(\eta)S^\eta).
\end{equation}
\end{lemma}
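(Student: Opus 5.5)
The plan is to bound the kernel powers by induction and then sum the Neumann series. Write $k_\alpha(s,r)=(s-r)^{-\alpha}\mathbf 1_{\{r<s\}}$. The claim to be proved by induction on $n$ is the pointwise estimate
\[
 \|(\mathcal V^nx)(s)\|_H\le \frac{[C\Gamma(\eta)]^n}{\Gamma(n\eta)}\int_0^s(s-r)^{n\eta-1}\|x(r)\|_H\,\dd r ,
\]
valid for almost every $s$ and every $x\in\mathbb X$.

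For $n=1$ this is exactly the hypothesis, since $\alpha=1-\eta$. For the inductive step, apply the hypothesis to $\mathcal V^n x$. Substitute the induction bound, which is legitimate because $\|(\mathcal V^nx)(r)\|_H$ is measurable and a.e.\ dominated. Then use Tonelli to exchange the order of integration. The inner integral is a Beta integral,
\[
 \int_r^s(s-u)^{-\alpha}(u-r)^{n\eta-1}\dd u=(s-r)^{(n+1)\eta-1}B(\eta,n\eta)=(s-r)^{(n+1)\eta-1}\frac{\Gamma(\eta)\Gamma(n\eta)}{\Gamma((n+1)\eta)}.
\]
Multiplying by the prefactor $C$ and cancelling $\Gamma(n\eta)$ gives the claim at $n+1$.

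To pass to the operator norm, bound $\|x(r)\|_H\le\|x\|_\infty$. Then
\[
 \int_0^s(s-r)^{n\eta-1}\dd r=\frac{s^{n\eta}}{n\eta}\le\frac{S^{n\eta}}{n\eta},
\]
and $n\eta\,\Gamma(n\eta)=\Gamma(n\eta+1)$, which yields \eqref{eq:fractional-volterra-power-app}. Since $\Gamma(n\eta+1)$ grows superexponentially, Stirling gives $\|\mathcal V^n\|^{1/n}\to0$, so $r(\mathcal V)=0$. The Neumann series $\sum_n\mathcal V^n$ then converges absolutely in operator norm, its sum inverts $I-\mathcal V$, and termwise summation of the power bounds gives \eqref{eq:mittag-leffler-resolvent-app}.

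The main obstacle is the measure-theoretic care needed on $L^\infty$ with a Banach-valued $H$. The hypothesis holds only almost everywhere and for representatives, so I must check that the estimate survives composition. Concretely, I would first verify that $r\mapsto\|(\mathcal V^nx)(r)\|_H$ is a measurable, essentially bounded function. Changing it on a null set does not affect the integral $\int_0^s(s-r)^{-\alpha}\|\cdot\|\,\dd r$, which is representative-independent. This step is mild, and everything else is the Beta-function algebra. The case $\alpha=0$, i.e.\ $\eta=1$, is the same computation with $\Gamma(n+1)=n!$, so no separate argument is needed.
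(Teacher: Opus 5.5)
Your proof is correct and follows essentially the paper's argument. The paper phrases the induction as the Riemann--Liouville semigroup identity $I^{\eta}I^{\gamma}=I^{\eta+\gamma}$ applied to the pointwise bound $\|\mathcal Vx\|_H\le C\Gamma(\eta)\,I^{\eta}\|x\|_H$, which is exactly your Beta-integral step; it then bounds $I^{n\eta}\|x\|_H$ by $s^{n\eta}\|x\|_\infty/\Gamma(n\eta+1)$ and uses Stirling and the Neumann series as you do.
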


\begin{proof}
Let \(I^\eta\) be the Riemann--Liouville fractional integral
\[
 (I^\eta f)(s)=\frac1{\Gamma(\eta)}
 \int_0^s(s-r)^{\eta-1}f(r)\,\dd r.
\]
The hypothesis gives
\(\|\mathcal Vx\|\le C\Gamma(\eta)I^\eta\|x\|\) pointwise.  Since
\(I^\eta I^\gamma=I^{\eta+\gamma}\),
\[
 \|\mathcal V^nx(s)\|
 \le[C\Gamma(\eta)]^n(I^{n\eta}\|x\|)(s)
 \le\frac{[C\Gamma(\eta)]^ns^{n\eta}}
 {\Gamma(n\eta+1)}\|x\|_\infty.
\]
Stirling's formula yields \(\|\mathcal V^n\|^{1/n}\to0\) and summability.
Summing the displayed power bound proves the resolvent estimate.
\end{proof}

\paragraph{Function-space boundary for the weak kernel.}
For the Gaussian exponent $\alpha=1/2$, the time kernel
$(s-r)^{-1/2}$ is integrable on each time slice but is not square-integrable
on the time triangle.  No Hilbert--Schmidt argument is used: boundedness,
quasinilpotence, and the resolvent estimate follow directly from the
fractional-Volterra iteration.  Compactness of the master operator $K=RN$ in the statewise spectral
expansion below is a separate hypothesis, not a consequence of the weak
Volterra kernel.

\begin{proof}[Proof of \cref{thm:causal-noncausal-separator}]
If \(E_1,E_2\) are fixed points, put
\(d(s)=\|E_1(s)-E_2(s)\|_H\).  From
\eqref{eq:abstract-causal-lipschitz-main},
\[
 d(s)\le\frac C{1-\mu}\int_0^s(s-r)^{-\alpha}d(r)\,\dd r.
\]
Iterating and applying \cref{lem:fractional-volterra-iteration} with
$C/(1-\mu)$ gives
\[
 \|d\|_\infty
 \le
 \frac{[C\Gamma(\eta)S^\eta/(1-\mu)]^n}
 {\Gamma(n\eta+1)}\|d\|_\infty.
\]
The coefficient tends to zero, so $d=0$.

For the smooth factorization, the lemma makes
$R=(I-\mathcal V)^{-1}$ bounded and
\[
 (I-\mathcal V)[I-RN]=I-\mathcal V-N.
\]
Hence
\[
 D_E\mathcal F=(I-\mathcal V)(I-K),\qquad K=RN,
\]
and the invertible first factor shows that $D_E\mathcal F$ is singular exactly
when $1\in\sigma(K)$.  The Neumann criterion $\|RN\|<1$, together with
\eqref{eq:mittag-leffler-resolvent-app}, gives the quantitative sufficient
condition in the theorem.  When $N=0$, $K=0$ and the linearization is
invertible.  This establishes the separator itself; no equilibrium-fold claim
is inferred from singularity alone.
\end{proof}

\begin{proposition}[Pure same-date finite-grid spectrum boundary]
\label{prop:pure-samedate-grid-boundary}
Let \(\mathbb X_N=H_0\times\cdots\times H_{N-1}\), where the \(H_n\) are
complex Banach spaces (or complexifications of real Banach spaces).  If
\(\mathcal V_N\) is strictly block lower triangular,
\(M_N=\operatorname{diag}(M_0,\ldots,M_{N-1})\), and
\(K_N=(I-\mathcal V_N)^{-1}M_N\), then
\[
 \sigma(K_N)\subseteq\bigcup_{n=0}^{N-1}\sigma(M_n).
\]
Equality holds when every \(H_n\) is finite-dimensional, and can fail for
infinite-dimensional Banach blocks.
\end{proposition}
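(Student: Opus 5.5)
The approach is to factor the resolvent of \(K_N\) through the block lower-triangular operator \(I-\mathcal V_N\). Since \(\mathcal V_N\) is strictly block lower triangular on a finite product, it is nilpotent (\(\mathcal V_N^N=0\)). Hence \(R_N=(I-\mathcal V_N)^{-1}=\sum_{j<N}\mathcal V_N^j\) exists, is bounded, and is block lower triangular with identity diagonal blocks. For \(\lambda\in\mathbb C\) I will use the identity
\[
 \lambda I-K_N=R_N\bigl[\lambda(I-\mathcal V_N)-M_N\bigr]
 =R_N\bigl[(\lambda I-M_N)-\lambda\mathcal V_N\bigr].
\]
The bracketed operator \(T_\lambda\) is block lower triangular with diagonal blocks \(\lambda-M_n\). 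Since \(R_N\) is invertible, \(\lambda\in\sigma(K_N)\) if and only if \(T_\lambda\) is not invertible. Everything reduces to invertibility of block-triangular operator matrices.

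\emph{Inclusion.} Suppose \(\lambda\notin\bigcup_n\sigma(M_n)\), so every \(\lambda-M_n\) is invertible. Then \(T_\lambda\) is invertible by forward substitution: solve \(T_\lambda x=y\) block by block via
\[
 x_n=(\lambda-M_n)^{-1}\Bigl(y_n+\lambda\textstyle\sum_{r<n}\mathcal V_{nr}x_r\Bigr).
\]
This gives a bounded two-sided inverse, also block lower triangular. The case \(\lambda=0\) is covered by the same formula, since then \(T_0=-M_N\) is block diagonal. Hence \(\lambda\notin\sigma(K_N)\).

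\emph{Finite-dimensional equality.} When all \(H_n\) are finite dimensional, take determinants. Because \(\det R_N=1\) (unipotent block triangular),
\[
 \det(\lambda I-K_N)=\det T_\lambda=\prod_n\det(\lambda-M_n).
\]
So the characteristic polynomials agree, and the spectra coincide even with algebraic multiplicities.

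\emph{Infinite-dimensional failure.} Here the point is that a block-triangular operator can be invertible even when its diagonal blocks are not, via a Fredholm-index trade between blocks. I would take \(N=2\), \(H_0=H_1=\ell^2\), let \(S\) be the unilateral shift, and let \(P=I-SS^*\) be the projection onto \(e_0\). Set
\[
 M_0=I-S^*,\qquad M_1=I-S,\qquad \mathcal V_{10}=-P,
\]
and aim at \(\lambda=1\). Then \(1\in\sigma(M_0)\), because \(S^*\) is not injective. The operator
\[
 T_1=\begin{pmatrix}S^*&0\\ P&S\end{pmatrix}
\]
is bijective:
\begin{itemize}
 \item \emph{Injectivity.} If \(S^*x_0=0\) then \(x_0=ce_0\). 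The second row then reads \(ce_0+Sx_1=0\), and since \(Sx_1\perp e_0\) this forces \(c=0\) and \(x_1=0\).
 \item \emph{Surjectivity.} Given \((y_0,y_1)\), put \(x_0=Sy_0+(y_1)_0e_0\) and \(x_1=S^*y_1\). Then \(S^*x_0=y_0\) and \(Px_0+Sx_1=(y_1)_0e_0+SS^*y_1=y_1\).
\end{itemize}
By the open mapping theorem \(T_1\) is boundedly invertible. Hence \(1\notin\sigma(K_2)\) although \(1\in\sigma(M_0)\cup\sigma(M_1)\), so equality fails.

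The only delicate step is this counterexample: the index bookkeeping must be checked carefully so that the off-diagonal rank-one coupling \(P\) exactly repairs the kernel of \(S^*\) against the cokernel of \(S\). The inclusion and the determinant step are routine.
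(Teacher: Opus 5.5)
Your proposal is correct and follows the paper's proof essentially step for step. The shared steps are nilpotency of $\mathcal V_N$, forward substitution for the inclusion, block-triangular determinant factorization for finite-dimensional equality, and a two-block unilateral-shift counterexample with rank-one coupling at $e_0$. The only differences are cosmetic: you factor $\lambda I-K_N=R_NT_\lambda$ before substituting, whereas the paper notes directly that $K_N=R_NM_N$ is block lower triangular with diagonal blocks $M_n$. Your sign choices ($M_0=I-S^*$, $M_1=I-S$, $\mathcal V_{10}=-P$) give the clean matrix $T_1$ instead of the paper's $K_N-I$, whose lower-left entry is $P(I+S^*)$.
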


\begin{proof}
Because $\mathcal V_N$ is strictly block lower triangular,
$\mathcal V_N^N=0$ and
\[
 R_N=(I-\mathcal V_N)^{-1}
 =I+\mathcal V_N+\cdots+\mathcal V_N^{N-1}.
\]
Thus $K_N=R_NM_N$ is block lower triangular with diagonal blocks $M_n$.
All spectra are taken in the complex date spaces, or after complexifying real
ones.  If $z\notin\bigcup_n\sigma(M_n)$, every $zI-M_n$ is invertible.  Solving
$(zI-K_N)x=y$ successively from the first date to the last gives a bounded
triangular inverse; the finiteness of $N$ makes the recursive bound finite.
Hence
\[
 \sigma(K_N)\subseteq\bigcup_{n=0}^{N-1}\sigma(M_n).
\]
If all date spaces are finite-dimensional, block-triangular determinant
factorization gives
\[
 \det(zI-K_N)=\prod_{n=0}^{N-1}\det(zI-M_n),
\]
so equality holds.

The conclusion can fail in infinite dimensions. Let
$H=\ell^2(\mathbb N_0;\mathbb C)$, let $S$ be the unilateral right shift,
let $S^*$ be the left shift, and let
$Px=\langle x,e_0\rangle e_0$.  Set
\[
 M_0=I+S^*,\qquad M_1=I+S,\qquad
 \mathcal V_N=\begin{pmatrix}0&0\\P&0\end{pmatrix},\qquad
 R_N=\begin{pmatrix}I&0\\P&I\end{pmatrix}.
\]
Then
\[
 K_N-I=
 \begin{pmatrix}S^*&0\\P(I+S^*)&S\end{pmatrix}.
\]
Given $(y,z)\in H\oplus H$, define
\[
 x=(z_0-y_0)e_0+Sy,\qquad w=S^*z.
\]
Since $S^*x=y$, $P(I+S^*)x=z_0e_0$, and
$Sw=z-z_0e_0$, one has $(K_N-I)(x,w)=(y,z)$.  This inverse formula is bounded;
its kernel is trivial by the same two equations, so $K_N-I$ is invertible and
$1\notin\sigma(K_N)$.  Yet
$1\in\sigma(M_0)\cap\sigma(M_1)$ because neither $S^*$ nor $S$ is invertible.
Thus the inclusion may be strict in infinite dimension.  Off-diagonal causal
blocks may remove diagonal spectral points even though they cannot create
spectral points outside the union.
\end{proof}

\begin{proof}[Proof of part~\textup{(iii)} of
\cref{thm:causal-noncausal-separator}]
Write $R=(I-\mathcal V)^{-1}$, $U=RU_0$, so $K=UL$ and $B=LU$.
For every $\lambda\in\mathbb C$ for which $I_r-\lambda B$ is invertible,
straight multiplication gives the Woodbury identity
\begin{equation}\label{eq:finite-rank-woodbury-app}
 (I-\lambda UL)^{-1}
 =I+\lambda U(I_r-\lambda LU)^{-1}L.
\end{equation}
Conversely, interchanging $U$ and $L$ gives
\[
 (I_r-\lambda LU)^{-1}
 =I_r+\lambda L(I-\lambda UL)^{-1}U
\]
whenever $I-\lambda UL$ is invertible.  This proves
\eqref{eq:finite-rank-invertibility-main}.  For $\mu\ne0$, taking
$\lambda=\mu^{-1}$ yields
$\mu\in\sigma(UL)$ if and only if $\mu\in\sigma(LU)$, which is
\eqref{eq:finite-rank-spectrum-main}; the unit-spectrum/determinant statement is
the case $\mu=1$.  No temporal discretization is used: all causal history is
contained in $U=RU_0$, while only the noncausal factor dimension is reduced.
\end{proof}

\paragraph{Exact two-date mixed realization.}
The finite-factor reduction includes a minimal system with both causal and
noncausal channels.  Let
\[
 \mathcal V_\nu=\begin{pmatrix}0&0\\ \nu&0\end{pmatrix},\qquad
 R_\nu=(I-\mathcal V_\nu)^{-1}
 =\begin{pmatrix}1&0\\ \nu&1\end{pmatrix},\qquad
 N=u\ell^\top.
\]
The compression matrix is scalar and the only possible nonzero master
eigenvalue is
\begin{equation}\label{eq:two-date-mixed-eigenvalue-app}
 \boxed{\rho_\nu=\ell^\top R_\nu u
 =\ell_0u_0+\ell_1(u_1+\nu u_0).}
\end{equation}
When $\rho_\nu>0$, the fixed-state timing threshold is
$\lambda_{\mathrm{spec}}(\nu)=\rho_\nu^{-1}$.  Hence
\[
 \frac{\dd\lambda_{\mathrm{spec}}}{\dd\nu}
 =-\frac{\ell_1u_0}{\rho_\nu^2}.
\]
If $\ell_1u_0>0$, stronger causal transmission lowers the threshold.  This
does not contradict the same-date finite-grid result: $u\ell^\top$ is
temporally nonlocal rather than block diagonal.  The calculation realizes an
exact mixed $N+\mathcal V$ system without truncating causal propagation.

\paragraph{Settlement-deadline interpretation.}
If an independent completion time \(D\) makes the same-cycle path
\(\lambda=F_D(\Delta)\), a fixed-state spectral deadline solves
\(F_D(\Delta_{\mathrm{spec}})=r(K)^{-1}\) whenever \(r(K)>1\) and the cdf
attains that value. For exponential latency with rate \(\nu_D\),
\[
 \Delta_{\mathrm{spec}}=\nu_D^{-1}\log\!\frac{r(K)}{r(K)-1}.
\]
This is a structural fixed-state interpretation; the Main Paper's
value-weighted timing theorem supplies the financial multiplier and
identification statements.

\begin{proposition}[Spectral loading and fold certification]
\label{prop:spectral-fold-certification-app}
Let \(I-\mathcal V\) be invertible on a Banach space, put
\(R=(I-\mathcal V)^{-1}\) and \(K=RN\), and suppose \(K\) is compact with an algebraically simple positive eigenvalue
\(\rho=r(K)\), normalized right and left modes
\(\phi,\chi\), and \(\chi(\phi)=1\).

\textup{(i)} Along \(D_E\mathcal G=\mathcal V+\lambda N\), the formal
statewise threshold is \(\lambda_{\mathrm{spec}}=\rho^{-1}\). It belongs to
the admissible path \(\lambda\in[0,1]\) exactly when \(\rho\ge1\). At the
threshold, \(D_E\mathcal F\) is Fredholm of index zero, with one-dimensional
kernel and cokernel. For subcritical \(\lambda\), the operator resolvent decomposes as
\[
 (I-\lambda K)^{-1}x
 =\frac{\phi\,\chi(x)}{1-\lambda\rho}+H_\lambda x,
\]
where \(H_\lambda\) is uniformly bounded as
\(\lambda\uparrow\rho^{-1}\). If, in addition, for a Banach parameter space \(\mathbb Z\) the residual
\(\mathcal F:\mathbb X\times\mathbb Z\times[0,\rho^{-1})\to\mathbb X\)
is \(C^1\) in \((E,\xi)\) along subcritical local branches
\(\xi\mapsto E_\lambda(\xi)\) with \(E_\lambda(0)=E_*\), and
\[
 x_\lambda:=R\mathcal F_\xi(E_*,0,\lambda)
 \longrightarrow x_c
 \quad\text{in }\mathcal L(\mathbb Z,\mathbb X),
\]
then the sensitivity limits
\eqref{eq:critical-mode-limit-app}--\eqref{eq:critical-observable-sensitivity-app}
hold. These limits identify the inverse-linear residue: it is nonzero precisely
when \(\chi\circ x_c\ne0\); for a bounded financial observable \(A\), the
observable residue is nonzero precisely when \(\chi\circ x_c\ne0\) and
\(A\phi\ne0\).

\textup{(ii)} Suppose in addition that the equilibrium residual is \(C^2\),
the critical point lies on an equilibrium branch, and \(\vartheta\) is a scalar
unfolding parameter. If
\[
 \langle\psi,\mathcal F_\vartheta\rangle\ne0,
 \qquad
 \frac12\langle\psi,D^2_{EE}\mathcal F[\phi,\phi]\rangle\ne0,
 \qquad \psi=R^*\chi,
\]
then the critical point is a classical local saddle-node. The Fredholm
index-zero condition is automatic here because \(I-\lambda K\) is identity
minus compact and the Volterra factor is invertible.
\end{proposition}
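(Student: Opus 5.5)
The plan is to reduce everything to the Riesz spectral projection of the compact operator \(K\) at the simple eigenvalue \(\rho\). Write \(P:=\phi\,\chi(\cdot)\). This is the rank-one spectral projection, because \(\rho\) is algebraically simple and \(\chi(\phi)=1\). Then \(K=\rho P+K(I-P)\), and \(K\) commutes with \(P\).

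\textbf{Part (i).} Set \(D_E\mathcal F=(I-\mathcal V)(I-\lambda K)\), which follows from \cref{thm:causal-noncausal-separator}(ii) after replacing \(N\) by \(\lambda N\). The first factor is invertible, so singularity is equivalent to \(1\in\sigma(\lambda K)\). The smallest positive such \(\lambda\) is \(\rho^{-1}=r(K)^{-1}\), and it lies in \([0,1]\) exactly when \(\rho\ge1\).

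Fredholmness of index zero at \(\lambda=\rho^{-1}\) follows from Riesz--Schauder theory for identity minus compact, composed with an invertible operator. Algebraic simplicity gives a one-dimensional kernel \(\operatorname{span}\phi\) and cokernel annihilated by \(\chi\).

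For the resolvent, split \(x=Px+(I-P)x\):
\[
(I-\lambda K)^{-1}x=\frac{\phi\chi(x)}{1-\lambda\rho}+H_\lambda x,\qquad H_\lambda:=(I-\lambda K)^{-1}(I-P).
\]
The operator \(K\) restricted to \(\operatorname{ran}(I-P)\) has spectrum \(\sigma(K)\setminus\{\rho\}\). Since \(\rho=r(K)\), no point \(1/\lambda\) with \(\lambda\in[0,\rho^{-1}]\) lies in that spectrum other than \(\rho\) itself. Hence \(H_\lambda\) is continuous on the compact interval \([0,\rho^{-1}]\) and so uniformly bounded.

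Next, differentiate \(\mathcal F(E_\lambda(\xi),\xi,\lambda)=0\) by the implicit-function theorem on subcritical branches. This gives
\[
DE_\lambda=-(I-\lambda K)^{-1}R\mathcal F_\xi=-(I-\lambda K)^{-1}x_\lambda.
\]
Insert the decomposition and multiply by \(1-\lambda\rho\). Uniform boundedness of \(H_\lambda\) together with \(x_\lambda\to x_c\) yields the limit \((1-\lambda\rho)DE_\lambda\to-\phi\,(\chi\circ x_c)\). Composing with \(A\) gives the observable limit \(-A\phi\,(\chi\circ x_c)\). The nonvanishing criteria are immediate from these formulas.

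\textbf{Part (ii).} Use Lyapunov--Schmidt reduction with the kernel \(\phi\) of \(D_E\mathcal F\). The left null vector of \(D_E\mathcal F=(I-\mathcal V)(I-\rho^{-1}K)\) is \(\psi=R^*\chi\). Indeed, \(\psi\circ(I-\mathcal V)=\chi\), and \(\chi\) annihilates \(\operatorname{ran}(I-\rho^{-1}K)\).

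Write \(E=E_*+s\phi+w\) with \(w\in\operatorname{ran}(I-P)\), and solve the complementary equation for \(w(s,\vartheta)\) by the implicit-function theorem. This gives the scalar bifurcation function
\[
g(s,\vartheta)=\langle\psi,\mathcal F\rangle.
\]
It satisfies \(g_s=0\), \(g_\vartheta=\langle\psi,\mathcal F_\vartheta\rangle\ne0\), and \(g_{ss}=\langle\psi,D^2_{EE}\mathcal F[\phi,\phi]\rangle\ne0\), using \(w_s(0)=0\). The standard saddle-node theorem for a scalar \(C^2\) function then applies: solve \(\vartheta=\vartheta(s)\) with \(\vartheta'(0)=0\) and \(\vartheta''(0)\ne0\).

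\textbf{Main obstacle.} The delicate step is the uniform bound on \(H_\lambda\). It needs \(\rho\) to be the \emph{only} spectral point on the ray \([\rho,\infty)\), which holds since \(\rho=r(K)\). It also needs the restricted resolvent to extend continuously up to \(\lambda=\rho^{-1}\); I would justify this via the Riesz decomposition of compact operators.
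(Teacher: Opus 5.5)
Your proposal is correct and follows essentially the same route as the paper. Both arguments use the rank-one spectral splitting $\mathbb X=\operatorname{span}\{\phi\}\oplus\ker\chi$, invertibility of $I-\lambda K$ on $\ker\chi$ up to and including $\lambda=\rho^{-1}$, the implicit-function formula $D_\xi E_\lambda(0)=-(I-\lambda K)^{-1}x_\lambda$, and a Lyapunov--Schmidt reduction on $\ker\chi$ paired with $\psi=R^*\chi$; your continuity argument on $[0,\rho^{-1}]$ is a slightly cleaner version of the paper's local isolation step. One point of hygiene: define $H_\lambda$ through the restriction, $H_\lambda=(I-\lambda K|_{\ker\chi})^{-1}(I-P)$, so that it still makes sense at $\lambda=\rho^{-1}$, where $(I-\lambda K)^{-1}$ itself does not exist.
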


\begin{proof}[Proof of \cref{prop:spectral-fold-certification-app}\textup{(i)}]
Along the linear contemporaneity path,
\[
 D_E\mathcal F_\lambda=(I-\mathcal V)(I-\lambda K),
 \qquad K=(I-\mathcal V)^{-1}N.
\]
Normalize right and left eigenvectors by
$K\phi=\rho\phi$, $K^*\chi=\rho\chi$, and $\chi(\phi)=1$.  Put
$Y=\ker\chi$.  Then $Y$ is $K$-invariant and
$\mathbb X=\operatorname{span}\{\phi\}\oplus Y$.  Compactness of $K$ and
algebraic simplicity of $\rho$ isolate $\rho$ from the rest of the spectrum, so
$I-\lambda K|_Y$ is invertible with uniformly bounded inverse for $\lambda$
near $\rho^{-1}$.  Therefore
$I-\lambda K$ is invertible for $0\le\lambda<\rho^{-1}$ and at
$\lambda=\rho^{-1}$ has one-dimensional kernel $\operatorname{span}\{\phi\}$
and cokernel $\operatorname{span}\{\chi\}$.  The invertible Volterra factor
preserves the kernel dimension; the cokernel of $D_E\mathcal F$ is spanned by
$R^*\chi$, since $(I-\mathcal V)^*R^*=I$.

For each subcritical $\lambda$ in the admissible range, the $C^1$ hypothesis and invertibility just
proved give the local branch $\xi\mapsto E_\lambda(\xi)$ by the
implicit-function theorem. Differentiating
$\mathcal F(E_\lambda(\xi),\xi,\lambda)=0$ at $\xi=0$ and setting
$x_\lambda=R\mathcal F_\xi(E_*,0,\lambda)$ gives
\begin{equation}\label{eq:critical-sensitivity-general-app}
 D_\xi E_\lambda(0)=-(I-\lambda K)^{-1}x_\lambda.
\end{equation}
On the direct sum above,
\[
 (I-\lambda K)^{-1}x
 =\frac{\phi\,\chi(x)}{1-\lambda\rho}+H_\lambda x,
\]
where $H_\lambda$ is uniformly bounded as
$\lambda\uparrow\rho^{-1}$. If $x_\lambda\to x_c$ in
$\mathcal L(\mathbb Z,\mathbb X)$, then
\begin{equation}\label{eq:critical-mode-limit-app}
 (1-\lambda\rho)D_\xi E_\lambda(0)
 \longrightarrow-\phi\otimes(\chi\circ x_c)
\end{equation}
in operator norm. For any bounded financial observable $A$,
\begin{equation}\label{eq:critical-observable-sensitivity-app}
 (1-\lambda\rho)A D_\xi E_\lambda(0)
 \longrightarrow-(A\phi)\otimes(\chi\circ x_c).
\end{equation}
Thus a nonzero inverse-linear leading residue is present when the limiting
primitive parameter operator loads on the left critical mode; a bounded financial
observable inherits that leading residue when it also loads on the right critical
mode. If \(\chi\circ x_c=0\), the limits above imply only
\((1-\lambda\rho)D_\xi E_\lambda(0)\to0\) and do not exclude slower divergence.
For example, boundedness follows under the stronger rate condition
\(\chi\circ x_\lambda=O(1-\lambda\rho)\), together with the maintained
operator-norm convergence (hence boundedness) of \(x_\lambda\) and the uniform
boundedness of \(H_\lambda\). These are statewise spectral and sensitivity statements.
\end{proof}

\begin{proposition}[Exact rank-one financial resolvent]
\label{prop:rank-one-financial-resolvent-app}
Let \(I-\mathcal V\) be invertible on a Banach space \(\mathbb X\), put
\(R=(I-\mathcal V)^{-1}\), and let the noncausal channel be
\(N=a\otimes b\), with \(a\in\mathbb X\) and \(b\in\mathbb X^*\). Define
\(\chi=b(Ra)\). Whenever \(1-\lambda\chi\ne0\),
\begin{equation}\label{eq:rank-one-financial-resolvent-app}
 \boxed{
 (I-\mathcal V-\lambda a\otimes b)^{-1}
 =R+\frac{\lambda}{1-\lambda\chi}
 (Ra)\otimes(bR).}
\end{equation}
At a fixed equilibrium, let a scalar primitive \(z\) enter
\(\mathcal G\) with derivative \(g_z=D_z\mathcal G\), and let
\(\mathcal Y(E,z)\) be a differentiable financial outcome. Its exact local
equilibrium sensitivity is
\begin{align}
 \frac{\dd\mathcal Y}{\dd z}
 &=\mathcal Y_z+D_E\mathcal Y(Rg_z)
 +\frac{\lambda}{1-\lambda\chi}
   D_E\mathcal Y(Ra)\,b(Rg_z).\label{eq:rank-one-financial-outcome-app}
\end{align}
Thus the finite-distance same-cycle wedge is governed by the product
\begin{equation}\label{eq:rank-one-financial-loading-app}
 \boxed{\omega_{\mathcal Y,z}
 :=D_E\mathcal Y(Ra)\,b(Rg_z).}
\end{equation}
The rank-one singularity is financially silent at leading order for this
primitive--outcome pair if either the primitive misses the resolved left
channel, \(b(Rg_z)=0\), or the outcome misses the resolved right response,
\(D_E\mathcal Y(Ra)=0\).
\end{proposition}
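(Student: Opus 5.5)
The plan is to verify the rank-one Sherman--Morrison formula \eqref{eq:rank-one-financial-resolvent-app} by direct multiplication, and then obtain the sensitivity formula by differentiating the fixed-point equation with the implicit-function theorem.

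\emph{Step 1 (the resolvent).} First factor the linearization as
\[
 I-\mathcal V-\lambda a\otimes b=(I-\mathcal V)\bigl(I-\lambda (Ra)\otimes b\bigr).
\]
This holds because $R(a\otimes b)=(Ra)\otimes b$, where $((Ra)\otimes b)x=b(x)\,Ra$. The problem therefore reduces to inverting $I-\lambda u\otimes b$ with $u=Ra$. For this I apply the Woodbury identity \eqref{eq:finite-rank-woodbury-app} with $U=u$, $L=b$ and $r=1$. The scalar $LU$ is exactly $b(Ra)=\chi$. Whenever $1-\lambda\chi\neq0$ this gives
\[
 (I-\lambda u\otimes b)^{-1}=I+\frac{\lambda}{1-\lambda\chi}\,u\otimes b .
\]
Composing on the right with $R$ yields $R+\frac{\lambda}{1-\lambda\chi}(Ra)\otimes(bR)$, since $(u\otimes b)R=u\otimes(b\circ R)$. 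To make this step self-contained, I would check the formula directly. Multiplying $(I-\lambda u\otimes b)$ by the candidate inverse gives the correction coefficient $\lambda/(1-\lambda\chi)-\lambda-\lambda^2\chi/(1-\lambda\chi)=0$.

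\emph{Step 2 (sensitivity).} The equilibrium satisfies $\mathcal F(E,z)=E-\mathcal G(E,z)=0$, with $D_E\mathcal F=I-\mathcal V-\lambda a\otimes b$. By Step 1 this operator is invertible whenever $1-\lambda\chi\ne0$. The implicit-function theorem then gives a differentiable local branch, and $\dd E/\dd z=(I-\mathcal V-\lambda a\otimes b)^{-1}g_z$. Substituting Step 1 gives
\[
 \frac{\dd E}{\dd z}=Rg_z+\frac{\lambda}{1-\lambda\chi}\,b(Rg_z)\,Ra .
\]
The chain rule $\dd\mathcal Y/\dd z=\mathcal Y_z+D_E\mathcal Y[\dd E/\dd z]$, together with linearity of $D_E\mathcal Y$, then gives \eqref{eq:rank-one-financial-outcome-app}. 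This also identifies the loading \eqref{eq:rank-one-financial-loading-app}.

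\emph{Step 3 (silence).} The silence statement follows at once from Step 2. The only term carrying the factor $(1-\lambda\chi)^{-1}$ is the product $D_E\mathcal Y(Ra)\,b(Rg_z)$, and it vanishes if either factor does. The remaining terms $\mathcal Y_z+D_E\mathcal Y(Rg_z)$ stay bounded as $\lambda\chi\to1$.

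\emph{Main obstacle.} The only delicate point is hypotheses rather than algebra. Applying the implicit-function theorem requires $\mathcal G$ to be $C^1$ jointly in $(E,z)$ near the equilibrium. I would state this as the maintained differentiability of $\mathcal G$ and $\mathcal Y$ implicit in the proposition. Invertibility itself needs no separate assumption, since Step 1 shows it is equivalent to $1-\lambda\chi\ne0$ via the converse Woodbury identity.
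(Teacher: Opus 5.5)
Your proposal is correct and follows essentially the paper's route. The paper likewise gets \eqref{eq:rank-one-financial-resolvent-app} from Sherman--Morrison, checked by direct multiplication; your preliminary factoring through $I-\mathcal V$ matches its Woodbury argument with $U=RU_0$. It then differentiates the equilibrium equation to obtain $E_z=(I-D_E\mathcal G)^{-1}g_z$ and applies the chain rule, exactly as you do, and your explicit implicit-function-theorem step with joint $C^1$ regularity of $\mathcal G$ just makes precise the differentiable branch the paper takes for granted.
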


\begin{proof}
The Sherman--Morrison identity gives
\eqref{eq:rank-one-financial-resolvent-app}; direct multiplication by
\(I-\mathcal V-\lambda a\otimes b\) gives the same verification without a
matrix representation. Differentiating the equilibrium equation gives
\(E_z=(I-D_E\mathcal G)^{-1}g_z\). Applying the chain rule for
\(\mathcal Y(E(z),z)\) yields
\eqref{eq:rank-one-financial-outcome-app}. The final statement follows from
the factorization in \eqref{eq:rank-one-financial-loading-app}.
\end{proof}

\begin{proof}[Proof of \cref{prop:spectral-fold-certification-app}\textup{(ii)}]
At the stated equilibrium, Fredholm index and defect dimensions pass through
the invertible factor $I-\mathcal V$.  Thus
$\ker D_E\mathcal F=\operatorname{span}\{\phi\}$, and with
$\psi=R^*\chi$ the cokernel is $\operatorname{span}\{\psi\}$.  Apply the
standard one-dimensional Lyapunov--Schmidt reduction for a simple defect
(cf.\ \citet{CrandallRabinowitz1971}).  Write
$E=E_*+s\phi+w(s,\mu)$ with $\mu=\vartheta-\vartheta_*$ and $w$ in a fixed
complement of the kernel.  Solving the range equation and pairing the remaining
scalar equation with $\psi$ gives
\[
 g(s,\mu)=a\mu+bs^2+o(|\mu|+s^2),
\]
where the derivatives of $w$ do not contribute to the displayed first
parameter and second kernel-direction coefficients because of the cokernel
pairing.  The coefficients are
\[
 a=\langle\psi,\mathcal F_\vartheta(E_*,\vartheta_*)\rangle,
 \qquad
 b=\frac12\langle\psi,
 D^2_{EE}\mathcal F(E_*,\vartheta_*)[\phi,\phi]\rangle.
\]
When $a\ne0$ and $b\ne0$, the implicit-function theorem gives
\[
 \mu(s)=-\frac ba s^2+o(s^2).
\]
Thus the equilibrium condition, transversality, and quadratic nondegeneracy
are all required; a fixed-state spectral singularity alone is not a
saddle-node theorem.
\end{proof}

\paragraph{Volterra estimate for a regularized Gaussian response.}
For a fixed stopped Gaussian regularization, the following sufficient
conditions rule out a hidden same-date atom and place the response in the
causal class used by the separator.

\begin{proposition}[Parabolic sufficient conditions for a strictly causal Gaussian response]
\label{prop:gaussian-volterra-realization}
Let \(\mathcal P\) be a bounded \(C^2\) stopped precision domain.  Assume
\[
 \mathscr K\in W^{3,\infty}(\mathcal P),\qquad
 \operatorname*{ess\,inf}_{p\in\mathcal P}\mathscr K(p)\ge \mathscr K_*>0,
\]
so that $\lambda_{\mathscr K}=\mathscr K'/\mathscr K\in W^{2,\infty}(\mathcal P)$.  Put
\[
 Z_\gamma=C^\gamma(\overline{\mathcal P}),\qquad
 \mathbb X_\gamma=C([0,S];Z_\gamma),\qquad
 Y=L^2(\mathcal P),\qquad
 \mathbb X_0=L^\infty(0,S;Y),
\]
where \(0<\gamma<1\).  The classical feedbacks used in the regularized
construction form a bounded subset of
\(C^{\gamma/2,\gamma}([0,S]\times\overline{\mathcal P})\), continuously
embedded in \(\mathbb X_\gamma\).  Fix a response-regularization vector $\boldsymbol\zeta$ consisting of positive viscosity and the stated smooth response-side saturation and barrier parameters.  Let
\(\mathcal U_\gamma\subset\mathbb X_\gamma\) be an open set containing this
feedback set and every line segment between feedbacks to be compared.  Suppose
the frozen regularized system is
\begin{equation}\label{eq:gaussian-frozen-evolution-app}
 \partial_sU=\mathcal A_{E(s)}(s)U+Q_{E(s)}(s),\qquad U(0)=U_0,
\end{equation}
with feedback-independent initial and boundary data, and assume:
\begin{enumerate}[label=\textup{(\roman*)},leftmargin=2.1em]
\item the operators \(\mathcal A_E(s)\) have a common dense parabolic domain
\(D\subset Y\), uniform sectorial bounds, and a uniform time-H\"older estimate
sufficient to generate the evolution family \(S_E(s,r)\);
\item the coefficient map is \(C^1\) from \(\mathcal U_\gamma\) into
\(C^{\gamma/2}([0,S];\mathcal L(D,Y))\times
C^{\gamma/2}([0,S];Y)\), and its linearized forcing has the pointwise form
\(\mathcal B_E(r)h(r)\);
\item for the response observation \(J\), the composed kernel
\[
 K_E(s,r):=JS_E(s,r)\mathcal B_E(r),\qquad 0\le r<s\le S,
\]
obeys both scale-compatible estimates
\begin{equation}\label{eq:gaussian-two-scale-kernel-v170}
 \|K_E(s,r)\|_{Z_\gamma\to Z_\gamma}
 +\|K_E(s,r)\|_{Y\to Y}
 \le C_{\boldsymbol\zeta}(s-r)^{-1/2},
\end{equation}
uniformly on the relevant bounded feedback set;
\item the parameter-to-solution map \(E\mapsto U[E]\) is Fr\'echet
differentiable in the stated \(\mathbb X_\gamma\) topology, its derivative is
the unique solution of the linearized nonautonomous equation with zero initial
and boundary data, and both the solution and first variation take values
continuously in \(\mathbb X_\gamma\).
\end{enumerate}
Then
\(\mathcal R^{\boldsymbol\zeta}:\mathcal U_\gamma\to\mathbb X_\gamma\) is
Fr\'echet differentiable in one and the same Banach feedback space, with
\begin{equation}\label{eq:gaussian-response-derivative-app}
 D\mathcal R^{\boldsymbol\zeta}(E)h(s)
 =\int_0^sK_E(s,r)h(r)\,\dd r,
 \qquad h\in\mathbb X_\gamma.
\end{equation}
The same integral formula defines the bounded kernel extension
\(\widetilde{D\mathcal R^{\boldsymbol\zeta}(E)}\in
\mathcal L(\mathbb X_0,\mathbb X_0)\).  On both scales it is strict causal
Volterra and quasinilpotent; hence
\(I-D\mathcal R^{\boldsymbol\zeta}(E)\) on \(\mathbb X_\gamma\) and
\(I-\widetilde{D\mathcal R^{\boldsymbol\zeta}(E)}\) on \(\mathbb X_0\) are invertible.
The $\mathbb X_0$ object is used only as the bounded linear kernel extension.
Feedback-dependent boundary data, simultaneous market-clearing equations, and
direct same-date observation terms belong to the bounded noncausal operator
$N$ in \cref{thm:causal-noncausal-separator}.
\end{proposition}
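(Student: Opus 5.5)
The plan is to represent the linearized response as an integral operator with kernel $K_E$, then read causality, quasinilpotence and invertibility off \cref{lem:fractional-volterra-iteration}. The steps are carried out in the order below.

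\emph{Step 1 (derivative formula).} By hypothesis (iv), $E\mapsto U[E]$ is Fr\'echet differentiable into $\mathbb X_\gamma$. Its derivative $\delta U=DU[E]h$ solves the linearized equation
\[
 \partial_s\delta U=\mathcal A_{E(s)}(s)\delta U+\mathcal B_E(s)h(s),\qquad \delta U(0)=0,
\]
with zero boundary data; this uses hypothesis (ii), where the linearized coefficient forcing has pointwise form. By hypothesis (i) the evolution family $S_E(s,r)$ exists. The variation-of-constants (mild solution) formula then gives
\[
 \delta U(s)=\int_0^sS_E(s,r)\mathcal B_E(r)h(r)\,\dd r.
\]
Uniqueness of the linearized solution in (iv) identifies this mild solution with the Fr\'echet derivative. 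The main point to verify here is that the mild and strong formulations agree for forcing that is only $C^{\gamma/2}$ in time; this is standard for H\"older forcing in parabolic evolution theory. Composing with the bounded observation $J$ gives \eqref{eq:gaussian-response-derivative-app}. Because the initial and boundary data are feedback independent, no boundary term or same-date atom appears, so the integration runs only over $r<s$.

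\emph{Step 2 (two scales).} The kernel bound \eqref{eq:gaussian-two-scale-kernel-v170} gives
\[
 \|(D\mathcal R h)(s)\|_{Z}\le C_{\boldsymbol\zeta}\int_0^s(s-r)^{-1/2}\|h(r)\|_Z\,\dd r
\]
for $Z=Z_\gamma$ and for $Z=Y$. On $\mathbb X_0=L^\infty(0,S;Y)$ the same integral defines a bounded operator $\widetilde{D\mathcal R}$. Measurability of the integrand follows from strong measurability of $S_E$ and $\mathcal B_E$, and the integral exists as a Bochner integral because $(s-r)^{-1/2}$ is integrable. On $\mathbb X_\gamma=C([0,S];Z_\gamma)$, I must additionally check that the output is continuous in $s$; this already follows from (iv), since the first variation lies in $\mathbb X_\gamma$. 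The sup-norm estimate is the one needed.

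\emph{Step 3 (causality and invertibility).} The kernel is supported on $r<s$, so both operators are strictly causal. The fractional-Volterra bound with $\alpha=1/2$, $\eta=1/2$ holds on both scales. Its proof in \cref{lem:fractional-volterra-iteration} iterates only pointwise norms. It therefore applies verbatim to $C([0,S];Z_\gamma)$, since the subspace is invariant and the norm is the supremum. This gives $r=0$ on both scales and Neumann-series invertibility of $I-D\mathcal R$ and $I-\widetilde{D\mathcal R}$, with the Mittag--Leffler bound. The closing remark, that feedback-dependent boundary data, market clearing, and direct same-date observations belong to $N$, is classificatory. Those terms would add non-integral or same-time contributions outside the kernel representation, and \cref{thm:causal-noncausal-separator} treats them through $N$.

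I expect Step 1 to be the main obstacle: identifying the Fr\'echet derivative with the mild Duhamel integral in the same space $\mathbb X_\gamma$, using the common domain and H\"older regularity of the nonautonomous family.
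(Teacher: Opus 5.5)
Your proposal is correct and follows essentially the same route as the paper. Both arguments identify the Fr\'echet derivative with the zero-data linearized solution, write it by variation of constants as the kernel integral over $r<s$, and use the two-scale bound \eqref{eq:gaussian-two-scale-kernel-v170} to get the $n$-fold estimate $(C_{\boldsymbol\zeta}\sqrt{\pi S})^n/\Gamma(n/2+1)$ on both $\mathbb X_\gamma$ and $\mathbb X_0$, which gives quasinilpotence and Neumann invertibility; the paper's additional Gaussian coefficient check (Schauder and energy estimates tied to the $\mathscr K$ assumptions) only explains where hypothesis (iii) comes from and is not needed for the conditional implication you prove.
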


\begin{proof}
For \(h\in\mathbb X_\gamma\) and sufficiently small \(\eta\), openness of
\(\mathcal U_\gamma\) makes \(E_\eta=E+\eta h\) admissible.  The explicit
coefficient topology in part~\textup{(ii)} and the parameter-to-solution
differentiability assumed in part~\textup{(iv)} identify the limit of the
difference quotient as the unique linearized solution in the parabolic solution space:
\[
 \partial_s\dot U=\mathcal A_{E(s)}(s)\dot U
 +\mathcal B_E(s)h(s),\qquad \dot U(0)=0,
\]
with zero boundary variation.  The same estimates applied to the Taylor
remainder are uniform for \(h\) in the \(\mathbb X_\gamma\) unit ball, so the
response is Fr\'echet, not merely G\^ateaux, differentiable
\(\mathbb X_\gamma\to\mathbb X_\gamma\).  Variation of constants gives
\eqref{eq:gaussian-response-derivative-app}.

The \(Z_\gamma\)-bound in
\eqref{eq:gaussian-two-scale-kernel-v170} makes the derivative a bounded
strict Volterra operator on \(\mathbb X_\gamma\).  The \(Y\)-bound gives, for
\(h\in\mathbb X_0\),
\begin{equation}\label{eq:gaussian-extended-derivative-bound-v170}
 \|\widetilde{D\mathcal R^{\boldsymbol\zeta}(E)}h(s)\|_Y
 \le C_{\boldsymbol\zeta}\int_0^s(s-r)^{-1/2}\|h(r)\|_Y\,\dd r,
\end{equation}
which defines the claimed kernel extension without invoking density of H\"older
functions in the \(L^\infty\) norm.  On either scale, the \(n\)-fold kernel is
bounded by
\[
 \frac{(C_{\boldsymbol\zeta}\sqrt{\pi S})^n}{\Gamma(n/2+1)},
\]
so the spectral radius is zero and the Neumann resolvent converges in operator
norm.  Because the initial and boundary variations are zero, Duhamel's formula
contains only \(r<s\); no boundary lift or algebraic atom at \(r=s\) is hidden.

\paragraph{Gaussian coefficient check under the stated regularization hypotheses.}
On a bounded stopped cylinder, if the chosen positive-viscosity construction
uses a common stopped boundary domain and fixed smooth saturation/barrier maps
with the displayed uniform bounds, then the coefficient implications are
standard.  Positive viscosity gives the three equations the same uniformly
elliptic principal part; \(\mathscr K\ge \mathscr K_*>0\) gives
\(\mathscr K^{-1}\in W^{3,\infty}\) and
\(\lambda_{\mathscr K}=\mathscr K'/\mathscr K\in W^{2,\infty}\); and the remaining coefficients are smooth
Nemytskii compositions of the frozen feedback and normalized-vintage
coefficients.  Under those imposed choices, uniformly parabolic Schauder
estimates \citep{Lieberman1996} yield the
\(Z_\gamma\to Z_\gamma\) estimate in
\eqref{eq:gaussian-two-scale-kernel-v170}, while the energy estimate plus
one-derivative parabolic smoothing yields the \(Y\to Y\) bound with exponent
$1/2$.  Finally,
\(J(u,a_1,a_2)=u_p-\lambda_{\mathscr K}(a_1+2a_2)\) is bounded on the corresponding
one-derivative solution space.  These estimates establish the proposition for each fixed regularization
satisfying the stated stopped-domain bounds. A hard limit is governed by the
uniform causal-closure theorem below.
\end{proof}

\paragraph{Hard selector without smoothing.}
The hard selector is already sufficient for global well-posedness on a closed
invariant response class once the response has a nonlinear causal modulus; no
derivative of the projection is needed.

\begin{lemma}[Nonsmooth Volterra well-posedness]\label{lem:hard-volterra-closure}
Let \(Y\) be a Banach lattice, let
\(\mathfrak e:Y\to Y\) be pointwise \(L_{\mathfrak e}\)-Lipschitz, and let
\(\mathcal R\) satisfy, for some \(0<\beta\le1\),
\begin{equation}\label{eq:hard-volterra-closure-app}
 \|\mathcal R(E_1)(s)-\mathcal R(E_2)(s)\|_Y
 \le C\int_0^s(s-r)^{\beta-1}
 \|E_1(r)-E_2(r)\|_Y\,\dd r.
\end{equation}
Then \(E=\mathfrak e(\mathcal R(E))\) has at most one solution in
\(L^\infty(0,S;Y)\).  More precisely, if \(d(s)=\|E_1(s)-E_2(s)\|_Y\),
then for every \(n\ge1\),
\begin{equation}\label{eq:hard-volterra-iterate-app}
 d(s)\le
 \frac{[L_{\mathfrak e}C\Gamma(\beta)]^n}
 {\Gamma(n\beta)}
 \int_0^s(s-r)^{n\beta-1}d(r)\,\dd r,
\end{equation}
and hence
\begin{equation}\label{eq:hard-volterra-sup-iterate-app}
 \|d\|_{L^\infty(0,s)}
 \le
 \frac{[L_{\mathfrak e}C\Gamma(\beta)s^\beta]^n}
 {\Gamma(n\beta+1)}
 \|d\|_{L^\infty(0,s)}.
\end{equation}
The coefficient on the right tends to zero as \(n\to\infty\).

Moreover, let \(\mathcal C\subset L^\infty(0,S;Y)\) be nonempty and closed and
suppose \(\mathcal T:=\mathfrak e\circ\mathcal R\) maps \(\mathcal C\) into
itself.  Then \(\mathcal T\) has exactly one fixed point in \(\mathcal C\),
and the Picard sequence \(E^{m+1}=\mathcal T(E^m)\) converges in
\(L^\infty\) from every \(E^0\in\mathcal C\).  Thus a closed invariant
response class upgrades the uniqueness statement to global well-posedness
without a horizon-wide small-gain restriction.

If responses \(\mathcal R_j\) obey \eqref{eq:hard-volterra-closure-app}
with the same \(C,\beta\) and satisfy \(\mathcal R_j(E)\to\mathcal R(E)\) in
\(L^\infty(0,S;Y)\) for every admissible feedback \(E\), then
\(\mathcal R\) inherits the same estimate.  If,
in addition, \(\mathcal R\) is Fr\'echet differentiable on the same Banach
space, then
\begin{equation}\label{eq:hard-derivative-volterra-app}
 \|D\mathcal R(E)h(s)\|_Y
 \le C\int_0^s(s-r)^{\beta-1}\|h(r)\|_Y\,\dd r.
\end{equation}
Let $M_A$ be a bounded pointwise multiplication operator on
$L^\infty(0,S;Y)$, $(M_Ax)(s)=A(s)x(s)$, with
$\operatorname*{ess\,sup}_s\|A(s)\|<\infty$.  Then
$M_A D\mathcal R(E)$ satisfies the same Volterra estimate with $C$ multiplied
by $\|M_A\|$, and is therefore quasinilpotent.  No such claim is made for
composition with an arbitrary bounded linear operator, which need not preserve
temporal support.
\end{lemma}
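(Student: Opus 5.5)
The plan is to prove the four claims of \cref{lem:hard-volterra-closure} in order: the iterated inequality, uniqueness, Picard convergence on the invariant class, and the stability and derivative statements.

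\emph{Iterated estimate.} Let $E_1,E_2$ be two solutions and put $d(s)=\|E_1(s)-E_2(s)\|_Y$. Pointwise $L_{\mathfrak e}$-Lipschitz continuity of $\mathfrak e$ combined with \eqref{eq:hard-volterra-closure-app} gives
\[
d(s)\le L_{\mathfrak e}C\Gamma(\beta)\,(I^\beta d)(s),
\]
where $I^\beta$ is the Riemann--Liouville integral. Since $I^\beta$ has a nonnegative kernel, it is order preserving on nonnegative functions, so the inequality can be substituted into itself. The semigroup law $I^\beta I^{\gamma}=I^{\beta+\gamma}$ then yields \eqref{eq:hard-volterra-iterate-app} by induction on $n$.

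\emph{Uniqueness.} Bound $d(r)\le\|d\|_{L^\infty(0,s)}$ inside the integral and use $\int_0^s(s-r)^{n\beta-1}\,\dd r=s^{n\beta}/(n\beta)$. Together with $\Gamma(n\beta)\,n\beta=\Gamma(n\beta+1)$, this gives \eqref{eq:hard-volterra-sup-iterate-app}. The coefficient $x^n/\Gamma(n\beta+1)$ is the general term of the convergent Mittag--Leffler series, exactly as in \cref{lem:fractional-volterra-iteration}, so it tends to zero. Hence $d=0$. The only point needing care is that $d$ is measurable and essentially bounded; this holds because $E_1,E_2\in L^\infty$.

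\emph{Existence on the invariant class.} Consider $\mathcal T=\mathfrak e\circ\mathcal R$ restricted to $\mathcal C$. The same iteration, applied to $\mathcal T^n(E)$ and $\mathcal T^n(E')$, gives
\[
\|\mathcal T^nE-\mathcal T^nE'\|_\infty\le\frac{[L_{\mathfrak e}C\Gamma(\beta)S^\beta]^n}{\Gamma(n\beta+1)}\,\|E-E'\|_\infty .
\]
Some power $\mathcal T^{n_0}$ is therefore a strict contraction on the complete metric space $\mathcal C$, which is closed in a Banach space. The standard argument for maps with a contractive iterate then gives a unique fixed point of $\mathcal T$ and Picard convergence from every starting point. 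Alternatively, one can sum the bounds to show the Picard sequence is Cauchy, since $\sum_n x^n/\Gamma(n\beta+1)<\infty$.

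\emph{Stability and derivative.} For the approximation claim, write the estimate for $\mathcal R_j$ at fixed $E_1,E_2$ and let $j\to\infty$. Convergence in $L^\infty$ passes to limits almost everywhere in $s$, while the right-hand side does not depend on $j$; this gives the estimate for $\mathcal R$. For \eqref{eq:hard-derivative-volterra-app}, apply the estimate to $E+th$ and $E$, divide by $t$, and let $t\to0$. Fréchet differentiability gives $t^{-1}[\mathcal R(E+th)-\mathcal R(E)]\to D\mathcal R(E)h$ in $L^\infty$, hence almost everywhere along a subsequence, and the right-hand side scales exactly by $t$. The multiplication claim follows from $\|A(s)x(s)\|\le\|M_A\|\,\|x(s)\|$ pointwise. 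Quasinilpotence then follows from \cref{lem:fractional-volterra-iteration} with $\alpha=1-\beta$.

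\emph{Main obstacle.} No step is deep; the delicate one is the almost-everywhere handling in $L^\infty(0,S;Y)$. The pointwise inequalities must hold for almost every $s$ independently of the chosen representatives, and the limits in the stability and derivative steps must be taken almost everywhere, passing to subsequences where necessary.
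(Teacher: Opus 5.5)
Your proposal is correct and follows essentially the same route as the paper: the $n=1$ bound comes from the Lipschitz selector, induction goes through the Riemann--Liouville semigroup identity, and integrating the kernel gives the sup estimate. After that come a Picard argument on the closed invariant class and almost-everywhere limits for the approximation, difference-quotient, and pointwise-multiplier claims. The differences are cosmetic: the paper sums the Picard increments against the Mittag--Leffler series (your stated alternative) rather than using a contractive iterate $\mathcal T^{n_0}$, and it cites Stirling's formula rather than convergence of the series to show that the coefficient vanishes.
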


\begin{proof}
The selector inequality and \eqref{eq:hard-volterra-closure-app} give the
case \(n=1\).  The fractional-integral semigroup identity
\[
 \int_r^s(s-q)^{\beta-1}(q-r)^{n\beta-1}\,\dd q
 =\frac{\Gamma(\beta)\Gamma(n\beta)}
 {\Gamma((n+1)\beta)}(s-r)^{(n+1)\beta-1}
\]
proves \eqref{eq:hard-volterra-iterate-app} by induction; integration of the
kernel gives \eqref{eq:hard-volterra-sup-iterate-app}.  Stirling's formula
makes its coefficient vanish.  For existence on an invariant class, choose
\(E^0\in\mathcal C\) and write
\(d_m(s)=\|E^{m+1}(s)-E^m(s)\|_Y\).  Repeated application of the same
fractional estimate gives
\[
 \|d_m\|_{L^\infty(0,S)}
 \le
 \frac{[L_{\mathfrak e}C\Gamma(\beta)S^\beta]^m}
 {\Gamma(m\beta+1)}\,
 \|d_0\|_{L^\infty(0,S)}.
\]
The series of coefficients is a Mittag--Leffler series and is finite.
Hence \(\sum_m\|E^{m+1}-E^m\|_\infty<\infty\), so the Picard sequence is
Cauchy.  Closedness of \(\mathcal C\) puts its limit \(E^*\) in
\(\mathcal C\), and the Volterra Lipschitz bound makes \(\mathcal T\)
continuous; consequently \(E^*=\mathcal T(E^*)\).  The preceding uniqueness
argument makes this fixed point independent of the starting point.  To pass the response estimate to the limit, choose a subsequence of
representatives converging almost everywhere.  The right-hand Volterra
integrals converge in \(L^\infty\) under the integrable-kernel bound, so the
a.e. inequality is closed under strong \(L^\infty\) convergence.  Apply the
same closed-set argument to a subsequence of difference quotients at
\(E+th,E\); Fr\'echet differentiability then gives
\eqref{eq:hard-derivative-volterra-app}.  If $M_A$ is pointwise as stated,
\[
 \|M_A D\mathcal R(E)h(s)\|_Y
 \le \|M_A\|C\int_0^s(s-r)^{\beta-1}\|h(r)\|_Y\,\dd r,
\]
so the same fractional iteration gives
$r(M_A D\mathcal R(E))=0$.  Pointwise multiplication preserves temporal support; an arbitrary bounded
operator can mix future and past times and destroy it.
\end{proof}

\begin{theorem}[Uniform causal closure under vanishing regularization]
\label{thm:uniform-causal-closure-app}
Let \(\Theta\) be a parameter set,
\(\mathbb X=L^\infty(0,S;Y)\), and let
\(\mathcal C\subset\mathbb X\) be nonempty and closed. Let
\(\mathfrak e:Y\to Y\) be pointwise \(L_{\mathfrak e}\)-Lipschitz. For
\(\varepsilon>0\) and \(\theta\in\Theta\), suppose
\(\mathcal R_{\varepsilon,\theta}:\mathcal C\to\mathbb X\) satisfies the
common causal modulus
\begin{equation}\label{eq:uniform-regularized-causal-modulus-app}
 \|\mathcal R_{\varepsilon,\theta}(E_1)(s)
 -\mathcal R_{\varepsilon,\theta}(E_2)(s)\|_Y
 \le C_0\int_0^s(s-r)^{\beta-1}
 \|E_1(r)-E_2(r)\|_Y\,\dd r
\end{equation}
for one \(C_0<\infty\), \(0<\beta\le1\), uniformly in
\((\varepsilon,\theta)\), and suppose
\(\mathfrak e\circ\mathcal R_{\varepsilon,\theta}\) maps \(\mathcal C\)
into itself. Assume the response family is uniformly Cauchy as the
regularization vanishes:
\begin{equation}\label{eq:uniform-response-cauchy-app}
 \lim_{\varepsilon,\varepsilon'\downarrow0}
 \sup_{\theta\in\Theta}\sup_{E\in\mathcal C}
 \|\mathcal R_{\varepsilon,\theta}(E)-
 \mathcal R_{\varepsilon',\theta}(E)\|_\infty=0.
\end{equation}
Completeness defines the uniform limit
\(\mathcal R_{0,\theta}:\mathcal C\to\mathbb X\); write
\begin{equation}\label{eq:uniform-response-limit-app}
 \delta_\varepsilon
 :=\sup_{\theta\in\Theta}\sup_{E\in\mathcal C}
 \|\mathcal R_{\varepsilon,\theta}(E)-
 \mathcal R_{0,\theta}(E)\|_\infty.
\end{equation}
Then \(\delta_\varepsilon\to0\). Put
\[
 \mathfrak M_S:=E_\beta
 \!\left(L_{\mathfrak e}C_0\Gamma(\beta)S^\beta\right).
\]
Then:

\textup{(i)} Each limit response inherits
\eqref{eq:uniform-regularized-causal-modulus-app}. For every
\(\theta\), both the regularized and limit fixed-point problems have unique
solutions \(E_{\varepsilon,\theta}\) and \(E_{0,\theta}\) in \(\mathcal C\),
and
\begin{equation}\label{eq:uniform-fixed-point-error-app}
 \boxed{\|E_{\varepsilon,\theta}-E_{0,\theta}\|_\infty
 \le L_{\mathfrak e}\,\delta_\varepsilon\,\mathfrak M_S.}
\end{equation}
The bound is uniform in \(\theta\).

\textup{(ii)} If
\begin{equation}\label{eq:limit-response-parameter-modulus-app}
 \sup_{E\in\mathcal C}
 \|\mathcal R_{0,\theta_1}(E)-\mathcal R_{0,\theta_2}(E)\|_\infty
 \le L_\Theta d(\theta_1,\theta_2),
\end{equation}
then the limit equilibrium branch is globally single-valued and
\begin{equation}\label{eq:uniform-branch-lipschitz-app}
 \boxed{\|E_{0,\theta_1}-E_{0,\theta_2}\|_\infty
 \le L_{\mathfrak e}L_\Theta\mathfrak M_S
 d(\theta_1,\theta_2).}
\end{equation}
In particular, a connected parameter family satisfying these hypotheses has no
saddle-node generated by pure causal feedback.

\textup{(iii)} Suppose \(\mathcal R_{0,\theta}\) is Fr\'echet differentiable
on the same space and its derivative satisfies the inherited Volterra bound
with constant \(C_0\). If \(M_A\) is pointwise multiplication with
\(\|M_A\|\le L_A\), then
\begin{equation}\label{eq:uniform-inverse-bound-app}
 \boxed{\|(I-M_A D\mathcal R_{0,\theta})^{-1}\|
 \le E_\beta\!\left(L_A C_0\Gamma(\beta)S^\beta\right).}
\end{equation}
For a pointwise scalar (Nemytskii) hard-cost residual
\(\mathcal F_0(E)=c'(E)-\mathcal R_{0,\theta}(E)\), assume
\(c''\ge\chi_0>0\) on the admissible range and that
\(c''(E(\cdot))\) induces a bounded pointwise multiplication operator
(for example, \(0<\chi_0\le c''\le\chi_1<\infty\) on a bounded effort interval). Then
\begin{equation}\label{eq:uniform-hard-cost-inverse-bound-app}
 \boxed{\|D\mathcal F_0(E)^{-1}\|
 \le \chi_0^{-1}
 E_\beta\!\left(
 \frac{C_0\Gamma(\beta)S^\beta}{\chi_0}\right).}
\end{equation}
These inverse bounds contain no regularization parameter.
\end{theorem}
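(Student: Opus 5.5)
The plan is to reduce everything to \cref{lem:hard-volterra-closure} together with a stability estimate of Gronwall--Volterra type. For part~(i), I first close the causal modulus in the limit. Fix $\theta$ and $E_1,E_2\in\mathcal C$. The uniform Cauchy property \eqref{eq:uniform-response-cauchy-app} and completeness of $\mathbb X$ define $\mathcal R_{0,\theta}$, and they also give $\delta_\varepsilon\to0$: let $\varepsilon'\downarrow0$ in the Cauchy bound. Since $\mathcal R_{\varepsilon,\theta}(E_i)\to\mathcal R_{0,\theta}(E_i)$ in $L^\infty$, the limit-passage clause of \cref{lem:hard-volterra-closure} shows that $\mathcal R_{0,\theta}$ satisfies \eqref{eq:uniform-regularized-causal-modulus-app} with the same $C_0,\beta$. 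The invariance $\mathfrak e\circ\mathcal R_{0,\theta}(\mathcal C)\subset\mathcal C$ follows from closedness of $\mathcal C$, because $\mathfrak e$ is Lipschitz and $\mathfrak e\circ\mathcal R_{\varepsilon,\theta}(E)\to\mathfrak e\circ\mathcal R_{0,\theta}(E)$. The lemma then gives a unique fixed point in $\mathcal C$ for every $\varepsilon\ge0$.

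For the error bound, set $d(s)=\|E_{\varepsilon,\theta}(s)-E_{0,\theta}(s)\|_Y$ and split
\[
 \mathcal R_{\varepsilon}(E_\varepsilon)-\mathcal R_0(E_0)=[\mathcal R_\varepsilon(E_\varepsilon)-\mathcal R_\varepsilon(E_0)]+[\mathcal R_\varepsilon(E_0)-\mathcal R_0(E_0)].
\]
This yields the inhomogeneous Volterra inequality $d(s)\le L_{\mathfrak e}\delta_\varepsilon+L_{\mathfrak e}C_0\int_0^s(s-r)^{\beta-1}d(r)\,\dd r$. I iterate it exactly as in the proof of \cref{lem:hard-volterra-closure}, using the semigroup identity for fractional integrals. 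Writing $x=L_{\mathfrak e}C_0\Gamma(\beta)$, the $n$-th iterate of the constant forcing is bounded by $L_{\mathfrak e}\delta_\varepsilon\, x^n s^{n\beta}/\Gamma(n\beta+1)$. The remainder term vanishes as $n\to\infty$ because $d$ is bounded. Summing gives $d(s)\le L_{\mathfrak e}\delta_\varepsilon E_\beta(xs^\beta)\le L_{\mathfrak e}\delta_\varepsilon\mathfrak M_S$, and since $C_0$ is uniform the bound holds uniformly in $\theta$. I expect this iteration to be the main bookkeeping step. In particular I have to check that the Gamma-function constants match $E_\beta$ with argument $L_{\mathfrak e}C_0\Gamma(\beta)S^\beta$ and not $C_0\Gamma(\beta)$; the term for $n=1$ gives $x s^\beta/\Gamma(\beta+1)$, which is consistent.

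Part~(ii) uses the same argument with $\mathcal R_{0,\theta_1}$ and $\mathcal R_{0,\theta_2}$ in place of $\mathcal R_\varepsilon$ and $\mathcal R_0$, and with $\delta_\varepsilon$ replaced by $L_\Theta d(\theta_1,\theta_2)$. This gives \eqref{eq:uniform-branch-lipschitz-app}. Single-valuedness comes from the uniqueness in~(i). For the no-fold claim: a saddle-node requires two equilibria to coalesce, or a branch that is not locally a Lipschitz graph over $\theta$. The branch is globally a Lipschitz function of $\theta$ and takes exactly one value at each parameter, so neither can happen.

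For part~(iii), the hypothesis on $D\mathcal R_{0,\theta}$ together with the multiplication clause of \cref{lem:hard-volterra-closure} shows that $M_AD\mathcal R_{0,\theta}$ obeys the Volterra bound with constant $L_AC_0$. \cref{lem:fractional-volterra-iteration}, applied with $\alpha=1-\beta$, then gives the Neumann series and the Mittag--Leffler bound \eqref{eq:uniform-inverse-bound-app}. For the hard-cost residual, $D\mathcal F_0(E)=M_{c''}-D\mathcal R_{0,\theta}=M_{c''}(I-M_{1/c''}D\mathcal R_{0,\theta})$. Here $\|M_{1/c''}\|\le\chi_0^{-1}$ and $\|M_{c''}^{-1}\|\le\chi_0^{-1}$, so inverting the product and applying the previous bound with $L_A=\chi_0^{-1}$ yields \eqref{eq:uniform-hard-cost-inverse-bound-app}. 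No constant depends on $\varepsilon$.
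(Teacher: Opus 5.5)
Your proposal is correct and follows the paper's proof essentially step for step. You build $\mathcal R_{0,\theta}$ from the uniform Cauchy condition, pass the causal modulus and the invariance of $\mathcal C$ to the limit, and get existence and uniqueness from \cref{lem:hard-volterra-closure}. You then iterate $d\le L_{\mathfrak e}\delta_\varepsilon+L_{\mathfrak e}C_0\int_0^s(s-r)^{\beta-1}d(r)\,\dd r$ to the Mittag--Leffler bound, with $L_\Theta d(\theta_1,\theta_2)$ in place of $L_{\mathfrak e}^{-1}\cdot L_{\mathfrak e}\delta_\varepsilon$ for (ii), and derive (iii) from the Neumann bound and the factorization $D\mathcal F_0=M_{c''}[I-M_{1/c''}D\mathcal R_{0,\theta}]$, using $\|M_{c''}^{-1}\|\le\chi_0^{-1}$; this matches the paper throughout, while your checks of the Gamma constants and the vanishing remainder make explicit what the paper leaves implicit.
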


\begin{proof}
The uniform Cauchy condition and completeness of \(\mathbb X\) construct
\(\mathcal R_{0,\theta}\) uniformly in \((\theta,E)\). The Lipschitz property
of \(\mathfrak e\) implies
\(\mathfrak e(\mathcal R_{\varepsilon,\theta}(E))\to
\mathfrak e(\mathcal R_{0,\theta}(E))\); closedness of \(\mathcal C\) therefore
makes the limit map invariant. Passing \(\varepsilon\downarrow0\) in
\eqref{eq:uniform-regularized-causal-modulus-app} gives the same inequality for
\(\mathcal R_{0,\theta}\). Lemma~\ref{lem:hard-volterra-closure} then gives
existence and uniqueness on \(\mathcal C\). For the fixed points, set
\(d(s)=\|E_{\varepsilon,\theta}(s)-E_{0,\theta}(s)\|_Y\). The common selector
bound and \eqref{eq:uniform-response-limit-app} imply
\[
 d(s)\le L_{\mathfrak e}\delta_\varepsilon
 +L_{\mathfrak e}C_0\int_0^s(s-r)^{\beta-1}d(r)\,\dd r.
\]
Iterating the fractional integral gives
\(d(s)\le L_{\mathfrak e}\delta_\varepsilon
 E_\beta(L_{\mathfrak e}C_0\Gamma(\beta)s^\beta)\), proving part~\textup{(i)}.
The same argument with the first term replaced by
\(L_{\mathfrak e}L_\Theta d(\theta_1,\theta_2)\) proves part~\textup{(ii)}.

For part~\textup{(iii)}, the power estimate for the Volterra operator
\(M_A D\mathcal R_{0,\theta}\) yields the Neumann--Mittag--Leffler bound
\eqref{eq:uniform-inverse-bound-app}. Finally,
\[
 D\mathcal F_0=M_{c''}
 \left[I-M_{1/c''}D\mathcal R_{0,\theta}\right],
 \qquad \|M_{1/c''}\|\le\chi_0^{-1},
\]
and another application of the preceding bound gives
\eqref{eq:uniform-hard-cost-inverse-bound-app}.
\end{proof}

\paragraph{Gaussian application.}
Suppose a family of the stopped Gaussian regularizations in
\cref{prop:gaussian-volterra-realization} is defined on a common invariant
class, the constants \(C_{\boldsymbol\zeta}\) in
\eqref{eq:gaussian-two-scale-kernel-v170} remain bounded as
\(\boldsymbol\zeta\to0\), and the responses are uniformly Cauchy as in
\eqref{eq:uniform-response-cauchy-app}. Then
\cref{thm:uniform-causal-closure-app} constructs the selected unregularized
response, gives a unique hard equilibrium on \([0,S]\), and transfers the
regularization-independent bounds above. If the common estimates hold with
\(S=T\), the conclusion is whole-horizon. The fixed-regularization parabolic
proposition supplies the causal structure; a regularization-uniform kernel
bound and uniform Cauchy control of the response are the additional
requirements for the hard limit.

\paragraph{Supplementary local analytic certificate.}
The following calculation concerns the displayed reduced coefficient system and
is included as a local robustness check.

Set $s=T-t$ and suppose $\mathscr K=\mathscr K(p)$ is analytic and uniformly
positive on the compact precision neighborhood under consideration.  Put
$\lambda_{\mathscr K}=\mathscr K'/\mathscr K$, $L=\sigma^2p+s$, and
\[
 d=\lambda_{\mathscr K}(a_1+2a_2),\qquad
 \Gamma=u_p-d+B,\qquad E=\mathfrak e(\Gamma),\qquad a=\sigma^{-2}+E,
\]
where
\begin{align*}
 B&=\frac{s}{2\gamma pL}
 +\frac{\mathscr K^2s(2\sigma^2p+s)}{2\gamma p^2L^2},
 &S_1&=\frac{\mathscr K^2\sigma^2}{\gamma L^2},\\
 H_2&=\frac{\mathscr K^2s(2\sigma^2p+s)}{2\gamma p^2L^2},
 &F_0&=\frac{-\mathscr K^2\sigma^4+sL/p}{2\gamma\sigma^2L^2}
       +S_1+\sigma^{-2}H_2.
\end{align*}
The displayed pre-Legendre reduced system is
\begin{align}
 u_s&=a(u_p-d)+F_0+EB-c(E),
 \label{eq:hard-prelegendre-u-app}\\
 (a_1)_s&=a[(a_1)_p-\lambda_{\mathscr K}a_1]+S_1,
 \label{eq:hard-prelegendre-a1-app}\\
 (a_2)_s&=a[(a_2)_p-2\lambda_{\mathscr K}a_2+H_2],
 \label{eq:hard-prelegendre-a2-app}\\
 \Gamma&=u_p-d+B,\qquad
 \Gamma\in c'(E)+N_{[0,\bar e]}(E).
 \label{eq:hard-prelegendre-selector-app}
\end{align}
Define the constrained convex conjugate
$g(H)=\sup_{0\le e\le\bar e}\{eH-c(e)\}$.  On each separated selector regime,
$E=\mathfrak e(\Gamma)$ and $g(\Gamma)=E\Gamma-c(E)$, so
\[
 a(u_p-d)+EB-c(E)
 =\sigma^{-2}(u_p-d)+E(u_p-d+B)-c(E)
 =\sigma^{-2}(u_p-d)+g(\Gamma).
\]
Hence the displayed system is algebraically equivalent, term by term, to
\begin{align}
 u_s&=\sigma^{-2}(u_p-d)+F_0+g(\Gamma),
 \label{eq:hard-analytic-u-app}\\
 (a_1)_s&=a[(a_1)_p-\lambda_{\mathscr K}a_1]+S_1,
 \label{eq:hard-analytic-a1-app}\\
 (a_2)_s&=a[(a_2)_p-2\lambda_{\mathscr K}a_2+H_2].
 \label{eq:hard-analytic-a2-app}
\end{align}
On every separated selector regime $g'=\mathfrak e$, so
$a=\sigma^{-2}+g'(\Gamma)$.

\begin{proposition}[Local analytic no-fold certificate for the displayed reduced Gaussian system]
\label{prop:reduced-gaussian-analytic-no-fold-app}
Fix $\mathcal P_1\Subset\mathcal P_0$.  Suppose the incoming trace on
$s=s_*$, the displayed coefficient functions, $\mathscr K$, and $c$ extend
real-analytically to a neighborhood of $\overline{\mathcal P_0}$, with
$\inf_{\overline{\mathcal P_0}}\mathscr K>0$ and $c''\ge\chi_0>0$.
Assume the incoming selector index is uniformly separated by $\delta>0$ from
$c'(0)$ and $c'(\bar e)$ and lies in one strict selector regime.  If a finite-
dimensional parameter $\vartheta$ is present, assume joint real analyticity in
$\vartheta$.  Then, for some $h>0$, the displayed reduced hard system has
exactly one real-analytic solution on
$[s_*,s_*+h]\times\mathcal P_1$, remains in the same selector regime with
margin at least $\delta/2$, and depends real-analytically on $\vartheta$.
Hence its local analytic solution set is a single graph and cannot exhibit a
saddle-node within this reduced class.
\end{proposition}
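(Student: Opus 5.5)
The plan is a Cauchy--Kovalevskaya argument for the displayed analytic system \eqref{eq:hard-analytic-u-app}--\eqref{eq:hard-analytic-a2-app}, followed by a uniqueness step that upgrades local analyticity to a single-graph statement.

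\textbf{Freezing the selector regime.} In the given strict regime, \(\mathfrak e\) is fixed. On a lower/upper regime it is constant (\(0\) or \(\bar e\)). On the interior regime it is \((c')^{-1}\), which is real analytic by the analytic inverse function theorem since \(c''\ge\chi_0>0\). In every case \(g\) is analytic near the incoming range of \(\Gamma\), because on a separated regime \(g(\Gamma)=\mathfrak e(\Gamma)\Gamma-c(\mathfrak e(\Gamma))\). The reduced system is then a first-order quasilinear system in the unknowns \((u,a_1,a_2)\) with evolution variable \(s\). The nonlinearities \(\sigma^{-2}+g'(\Gamma)\), \(g(\Gamma)\), \(\lambda_{\mathscr K}\) and the coefficients \(B,F_0,S_1,H_2\) are analytic in \((s,p,u_p,a_1,a_2,\vartheta)\) on a complex neighborhood. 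The coefficients are analytic because \(\mathscr K>0\) keeps \(\lambda_{\mathscr K}\) analytic, and \(p>0\), \(L=\sigma^2p+s>0\) on the stopped domain.

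\textbf{Existence, uniqueness and parameter dependence.} With the analytic incoming trace on \(s=s_*\), the hyperplane \(s=s_*\) is noncharacteristic, since the system is already solved for \(\partial_s\). The Cauchy--Kovalevskaya theorem (analytic-parameter version) therefore gives a unique analytic solution on \([s_*,s_*+h]\times\mathcal P_1\), depending analytically on \(\vartheta\). Compactness of \(\overline{\mathcal P_1}\subset\mathcal P_0\) gives a uniform \(h\) by covering with finitely many polydiscs and gluing through uniqueness on overlaps.

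\textbf{Keeping the regime.} Continuity of \(\Gamma=u_p-d+B\) in \(s\), uniform on \(\overline{\mathcal P_1}\) since \(u_p\) is analytic hence continuous up to \(s_*\), lets me shrink \(h\). Then \(\Gamma\) stays within \(\delta/2\) of its incoming values, so it stays separated by at least \(\delta/2\) from \(c'(0)\) and \(c'(\bar e)\). Hence the solution of the frozen-regime system also solves the original selector system \eqref{eq:hard-prelegendre-selector-app}. Uniqueness among analytic solutions in the same regime is part of Cauchy--Kovalevskaya.

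\textbf{Main obstacle: single graph, no fold.} Any analytic solution of the hard system with the given trace must lie in this regime near \(s_*\) by the same continuity argument, so it coincides with the constructed one. For each \(\vartheta\) the solution set is therefore one point, varying analytically in \(\vartheta\), so it is a graph over the parameter. A saddle-node needs two solution branches merging, which is impossible for a single-valued analytic graph. I will state this carefully as uniqueness within the analytic class only; Holmgren's theorem extends uniqueness to \(C^1\) solutions of this linear-in-derivative structure if needed.
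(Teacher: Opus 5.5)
Your proposal is correct and follows the paper's proof essentially step for step. The shared steps are: analyticity of $g$ on the frozen selector regime via the analytic inverse-function theorem; Cauchy--Kowalevski for the normal-form system with $s=s_*$ noncharacteristic; a finite cover of $\overline{\mathcal P_1}$ glued by uniqueness on overlaps; shrinking the slab to retain half the $\delta$-margin; and parametric uniqueness yielding a single analytic graph. One small caution: your closing Holmgren remark does not apply as stated, because the system is fully nonlinear in $u_p$ through $g(\Gamma)$ and $a=\sigma^{-2}+g'(\Gamma)$, so it is neither quasilinear nor linear in the derivatives. Since the proposition only claims uniqueness within the analytic class, nothing in your argument depends on that remark.
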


\begin{proof}[Proof of \cref{prop:reduced-gaussian-analytic-no-fold-app}]
On the no-research and capacity regimes,
\[
 (g(\Gamma),g'(\Gamma))=(0,0)
 \quad\text{and}\quad
 (g(\Gamma),g'(\Gamma))=(\bar e\Gamma-c(\bar e),\bar e),
\]
respectively.  On a strict interior regime, the analytic inverse-function
theorem and $c''\ge\chi_0>0$ give
\[
 g'(\Gamma)=(c')^{-1}(\Gamma),\qquad
 g(\Gamma)=\Gamma(c')^{-1}(\Gamma)-c((c')^{-1}(\Gamma)),
\]
so $g$ is analytic on each separated regime.  Under the proposition's
hypotheses, the right side of
\eqref{eq:hard-analytic-u-app}--\eqref{eq:hard-analytic-a2-app} is analytic in
$(s,p,U,U_p)$ and jointly analytic in any stated finite-dimensional parameter.
The surface $s=s_*$ is noncharacteristic because the coefficient of $U_s$ is
the identity.  The Cauchy--Kowalevski theorem \citep{John1982} therefore gives
a unique analytic solution near every point of
$\overline{\mathcal P_1}$.  The incoming data extend to a neighborhood of the
larger compact set $\overline{\mathcal P_0}$; a finite cover and uniqueness on
overlaps give one common positive slab over $\mathcal P_1$.

The index
$\Gamma=u_p-\lambda_{\mathscr K}(a_1+2a_2)+B$ is continuous in the analytic
solution norm.  Shrinking the slab preserves at least half of the incoming
$\delta$-margin, so the same analytic formula for $g$ remains valid.  Parametric
uniqueness makes the local solution set a single analytic graph, excluding a
saddle-node within this reduced analytic class.  This conclusion is local to the displayed reduced system and remains valid
while the selector margin, analytic control, and stopped-domain interior
persist.
\end{proof}

\paragraph{Conditional finite-grid hard recursion.}
Assume that a chosen reverse-time grid response has the triangular form
\[
 \Gamma^n=\mathcal R_n(E^0,\ldots,E^{n-1}),\qquad
 E^n=\mathfrak e(\Gamma^n).
\]
Then $E^0$ is determined from the incoming or terminal data, followed by
$E^1$, and so on.  Induction proves existence and uniqueness \emph{for this
displayed finite-grid recursion}, including the no-research and capacity faces.
Wherever the projected selector is differentiable,
$D\mathcal R_N$ is strictly block lower triangular and the fixed-point residual
\[
 \mathcal G_N(E)=E-\mathfrak e(\mathcal R_N(E))
\]
has derivative
\[
 D\mathcal G_N(E)
 =I-M_{\mathfrak e'(\mathcal R_N(E))}D\mathcal R_N.
\]
It is block lower triangular with identity diagonal blocks and is therefore
invertible.  On a strict interior active set, the equivalent residual
$c'(E)-\mathcal R_N(E)$ has positive diagonal blocks $c''(E^n)$.  At a
projection kink a classical derivative may fail to exist, but recursive
uniqueness remains.  This proves the finite-grid claim summarized in the Main text.  The statement is conditional on the
displayed triangular response; it does not derive that recursion from an
unspecified primitive continuum model.

\paragraph{Continuum response at fixed parabolic response regularization.}
Let \(E_1,E_2\in\mathcal U_\gamma\) be admissible H\"older feedbacks whose
line segment remains in \(\mathcal U_\gamma\).  The Banach-space fundamental
theorem of calculus and the two-scale kernel estimate give
\[
 \mathcal R^{\boldsymbol\zeta}(E_1)-\mathcal R^{\boldsymbol\zeta}(E_2)
 =\int_0^1D\mathcal R^{\boldsymbol\zeta}
 (E_2+\lambda(E_1-E_2))(E_1-E_2)\,\dd\lambda
\]
and hence \eqref{eq:abstract-causal-lipschitz-main}.  Apply
\cref{lem:hard-volterra-closure} with
\(L_{\mathfrak e}=\chi_0^{-1}\) and \(\beta=1/2\): the \emph{exact projected
hard-selector equation}
\[
 E=\mathfrak e(\mathcal R^{\boldsymbol\zeta}(E))
\]
has at most one solution, without a horizon-wide small-gain condition.  If a
nonempty closed response class is invariant under
$\mathfrak e\circ\mathcal R^{\boldsymbol\zeta}$, the existence part of
\cref{lem:hard-volterra-closure} makes that solution unique and Picard-reachable
from every starting point in the class.

At an interior equilibrium define
\[
 \mathcal V_E=M_{1/c''(E)}D\mathcal R^{\boldsymbol\zeta}(E).
\]
Then
\[
 \|\mathcal V_E^n\|_{\mathbb X_\gamma\to\mathbb X_\gamma}
 \le
 \frac{(C_{\boldsymbol\zeta}\sqrt{\pi S}/\chi_0)^n}
 {\Gamma(n/2+1)},
\]
so
\[
 D_E[c'-\mathcal R^{\boldsymbol\zeta}]
 =M_{c''(E)}(I-\mathcal V_E)
\]
is invertible.  This is the actual linearization when the PDE response is
regularized but the selector is hard.

If the research selector is itself regularized, write it as
\(\mathcal S_{\delta,M}\).  The actual fixed-point residual is then
\[
 \mathcal F_{\boldsymbol\zeta,\delta,M}(E)
 =E-\mathcal S_{\delta,M}(\mathcal R^{\boldsymbol\zeta}(E)),
\]
not \(c'-\mathcal R^{\boldsymbol\zeta}\), and
\begin{equation}\label{eq:fully-regularized-linearization-app}
 D\mathcal F_{\boldsymbol\zeta,\delta,M}(E)
 =I-M_{\mathcal S'_{\delta,M}(\mathcal R^{\boldsymbol\zeta}(E))}
 D\mathcal R^{\boldsymbol\zeta}(E).
\end{equation}
The multiplier is bounded and the second factor is strict Volterra, so
\eqref{eq:fully-regularized-linearization-app} is invertible.  This proves the regularized continuum claim summarized in the Main text and removes the selector/response
regularization ambiguity.

\paragraph{Whole-horizon scope.}
The hard whole-horizon conclusion is supplied by
\cref{thm:uniform-causal-closure-app} when its common estimates hold with
\(S=T\). Local-uniform value convergence without the response convergence and
uniform causal modulus in that theorem is insufficient.

\subsection{Versioned quant-model release: global box reduction and criticality}
\label{app:model-release}

This subsection proves
\cref{thm:model-release-reduction,cor:gaussian-protocol-criticality,thm:financial-release-multiplier}.
Validation determines when a model version enters portfolio use; the validation
objective and internal organization are outside the release primitive.

\paragraph{Validated capital and the triangular response.}
Iterating \eqref{eq:model-release-capital} gives
\[
 q_n=\left(\prod_{j=0}^{n-1}\delta_j\right)q_0
 +\sum_{r<n}\kappa_r
  \left(\prod_{j=r+1}^{n-1}\delta_j\right)e_r.
\]
Multiplication by \(a_n^{\mathrm{val}}\) yields
\eqref{eq:model-release-Q}.  The first term is predetermined and nonnegative
because \(q_0\ge0\); together with the primitive baseline it is
\(\widetilde b_n^0\ge0\).  For a fixed scalar release price \(u\), strict
concavity of each supplier problem gives the projected best response
\eqref{eq:model-release-stagewise-map}.  Since stage \(n\) depends only on
earlier responses, this recursion defines a unique vector
\(\mathscr E_\vartheta(u)\) for every \(u\).

\begin{proof}[Proof of \cref{thm:model-release-reduction}]
Let \(u_\lambda(y)=(1-\lambda)\bar\Gamma+\lambda W_R'(y)\).

\emph{Global reduction and existence.}
If \(e\) is a capacity-constrained equilibrium and \(y=\ell^\top e\), every
coordinate satisfies the stagewise projected first-order condition, hence
\(e=\mathscr E_\vartheta(u_\lambda(y))\) and
\(F^{\mathrm{box}}(y,\vartheta;\lambda)=0\).  Conversely, a root \(y\) of
\eqref{eq:model-release-global-scalar} reconstructs a vector
\(e=\mathscr E_\vartheta(u_\lambda(y))\) whose precision is
\(\ell^\top e=\mathscr Y_\vartheta(u_\lambda(y))=y\); the projected first-order
conditions are therefore internally consistent and sufficient for every
supplier optimum.  Uniqueness of the stagewise recursion gives the claimed
one-to-one correspondence.

The projection maps are continuous, so \(F^{\mathrm{box}}\) is continuous.
Moreover, \(\mathscr Y_\vartheta(u)\in[0,\bar y]\).  Hence
\[
 F^{\mathrm{box}}(0,\vartheta;\lambda)\le0,\qquad
 F^{\mathrm{box}}(\bar y,\vartheta;\lambda)\ge0.
\]
The intermediate value theorem gives at least one equilibrium.

\emph{Monotonicity and the global Lipschitz bound.}
Take \(u_2\ge u_1\), write
\(e^j=\mathscr E_\vartheta(u_j)\), and \(\Delta e=e^2-e^1\),
\(\Delta u=u_2-u_1\).  Monotonicity of scalar projection and induction over
the triangular recursion give \(\Delta e\ge0\).  Its one-Lipschitz property
also gives componentwise
\[
 C\Delta e\le Q\Delta e+\ell\,\Delta u.
\]
Since
\[
 (C-Q)^{-1}
 =\sum_{k=0}^{J-1}(C^{-1}Q)^kC^{-1}\ge0,
\]
we obtain
\[
 0\le\Delta e\le(C-Q)^{-1}\ell\,\Delta u.
\]
Multiplication by \(\ell^\top\) proves
\eqref{eq:model-release-global-lipschitz} and
\eqref{eq:model-release-path-sum}.

Put \(L_+=\sup_{[0,\bar y]}[W_R'']_+\).  For \(y_2>y_1\), if
\(u_\lambda(y_2)\le u_\lambda(y_1)\), monotonicity of \(\mathscr Y_\vartheta\)
gives
\[
 F^{\mathrm{box}}(y_2)-F^{\mathrm{box}}(y_1)\ge y_2-y_1.
\]
Otherwise,
\[
 \mathscr Y_\vartheta(u_\lambda(y_2))
 -\mathscr Y_\vartheta(u_\lambda(y_1))
 \le\lambda\beta_R L_+(y_2-y_1).
\]
Under \eqref{eq:model-release-global-uniqueness}, the residual is strictly
increasing, proving uniqueness.

\emph{Fixed active faces.}
Let \(I\) be the free coordinates of one relative-interior face and fix every
coordinate outside \(I\) at zero or capacity.  Their triangular contributions
enter an affine baseline.  The free first-order conditions are
\[
 (C_I-Q_{II})e_I=b_I^{\mathrm{face}}(\vartheta)+\ell_Iu.
\]
The principal block remains lower triangular with positive diagonal, so
\[
 e_I=(C_I-Q_{II})^{-1}
 [b_I^{\mathrm{face}}(\vartheta)+\ell_Iu].
\]
Thus \(\mathscr Y_\vartheta\) is affine on the face with slope
\(\beta_{R,I}\) in \eqref{eq:model-release-face-beta}; if
\(\ell_I=0\), this slope is zero.  Differentiation gives
\eqref{eq:model-release-face-derivative}.  The free-coordinate response
derivative is
\[
 C_I^{-1}Q_{II}
 +\lambda W_R''(y)C_I^{-1}\ell_I\ell_I^\top.
\]
Resolving the nilpotent first term produces the same-cycle master block
\[
 \lambda W_R''(y)(C_I-Q_{II})^{-1}\ell_I\ell_I^\top,
\]
whose single possibly nonzero eigenvalue is
\(\lambda\beta_{R,I}W_R''(y)\).  The scalar and vector residuals are related by invertible elimination on
that cell.  Relative interior of the geometric face is not sufficient when a
bound multiplier vanishes.  If every bound inequality is strict in the sense
of \eqref{eq:model-release-raw-score}, the projected-response cell is locally
constant, the elimination is smooth, and
\(\partial_\vartheta F_I=\kappa_{\vartheta,I}\).  A scalar saddle-node therefore
lifts to a classical fold when strict complementarity holds,
\(\kappa_{\vartheta,I}>0\), and the scalar quadratic coefficient is nonzero.  At
loss of strict complementarity the global residual is only piecewise \(C^1\),
so a classical fold requires additional matching conditions.

\emph{Protocol comparative statics.}
Applying the face derivative to \(W_R^P\) gives
\(\partial_yF_I^P=1-\lambda\beta_{R,I}(W_R^P)''(y)\).
Subtracting the two gains proves
\eqref{eq:protocol-critical-gain-difference-main}. If \(\beta_{R,I}>0\) and
\((W_R^P)''(y)>0\), solving \(\partial_yF_I^P=0\) for \(\lambda\) gives
\eqref{eq:protocol-statewise-threshold-main}. No ordering follows without an
ordering of protocol-indexed optimized curvatures.
\end{proof}

\begin{proposition}[Strictly convex supplier costs]
\label{prop:model-release-convex-costs-app}
For each stage \(n\), replace the quadratic cost \(c_nx^2/2\) by
\(\mathcal C_n(x)\), where
\(\mathcal C_n\in C^2([0,\bar e_n])\) and
\(\mathcal C_n''(x)\ge\underline c_n>0\). For fixed \(u\), define the
stagewise response recursively as the unique maximizer on
\([0,\bar e_n]\) of
\[
 x\left[\widetilde b_n^0-\vartheta d_n
 +\sum_{r<n}Q_{nr}\mathscr E^{\mathcal C}_{\vartheta,r}(u)
 +\ell_nu\right]-\mathcal C_n(x),
\]
and put
\(\mathscr Y^{\mathcal C}_\vartheta(u)
 =\ell^\top\mathscr E^{\mathcal C}_\vartheta(u)\). For a reconstructed vector
\(e\), define its unnormalized stage score by
\begin{equation}\label{eq:model-release-general-score-app}
 s_n(e,u,\vartheta)
 :=\widetilde b_n^0-\vartheta d_n+\sum_{r<n}Q_{nr}e_r+\ell_nu.
\end{equation}
A general-cost projected-response cell is \emph{strictly complementary} at
\((e,u,\vartheta)\) when every free coordinate satisfies
\[
 0<e_n<\bar e_n,\qquad s_n(e,u,\vartheta)=\mathcal C_n'(e_n),
\]
every lower-bound coordinate satisfies
\[
 e_n=0,\qquad s_n(e,u,\vartheta)<\mathcal C_n'(0),
\]
and every upper-bound coordinate satisfies
\[
 e_n=\bar e_n,\qquad
 s_n(e,u,\vartheta)>\mathcal C_n'(\bar e_n).
\]

\textup{(i)} The full box-constrained equilibrium is in one-to-one
correspondence with roots of
\[
 F^{\mathrm{box}}_{\mathcal C}(y,\vartheta;\lambda)
 :=y-\mathscr Y^{\mathcal C}_\vartheta(u_\lambda(y))=0
\]
on \([0,\bar y]\), and at least one root exists.

\textup{(ii)} Let
\(\underline C=\diag(\underline c_0,\ldots,\underline c_{J-1})\) and
\[
 \overline\beta_R
 :=\ell^\top(\underline C-Q)^{-1}\ell.
\]
Then, for \(u_2\ge u_1\),
\[
 0\le
 \mathscr Y^{\mathcal C}_\vartheta(u_2)
 -\mathscr Y^{\mathcal C}_\vartheta(u_1)
 \le\overline\beta_R(u_2-u_1).
\]
Consequently the equilibrium is unique if
\[
 \lambda\overline\beta_R
 \sup_{0\le y\le\bar y}[W_R''(y)]_+<1.
\]

\textup{(iii)} On a strictly complementary smooth cell with free coordinates
\(I\), define
\[
 D_{\mathcal C,I}(e)
 :=\diag(\mathcal C_n''(e_n))_{n\in I},\qquad
 B_I(e):=D_{\mathcal C,I}(e)-Q_{II}.
\]
Then
\[
 \frac{\partial e_I}{\partial u}=B_I(e)^{-1}\ell_I,
 \qquad
 \frac{\partial e_I}{\partial\vartheta}=-B_I(e)^{-1}d_I,
\]
so
\[
 \partial_yF_I=1-\lambda\beta_{R,I}(e)W_R''(y),
 \qquad
 \partial_\vartheta F_I=\kappa_{\vartheta,I}(e),
\]
where
\[
 \beta_{R,I}(e)=\ell_I^\top B_I(e)^{-1}\ell_I,
 \qquad
 \kappa_{\vartheta,I}(e)=\ell_I^\top B_I(e)^{-1}d_I.
\]
If the costs and \(W_R\) are \(C^3\), put
\[
 v_I=B_I(e)^{-1}\ell_I,\qquad
 w_I=B_I(e)^{-\top}\ell_I,
\]
and
\[
 \Xi_I(e):=\sum_{n\in I}
 (w_I)_n\,\mathcal C_n'''(e_n)\,(v_I)_n^2.
\]
The full scalar quadratic coefficient is determined by
\begin{equation}\label{eq:model-release-general-Fyy-app}
 \boxed{
 \partial_{yy}F_I
 =-\lambda\beta_{R,I}(e)W_R'''(y)
 +\lambda^2[W_R''(y)]^2\Xi_I(e).}
\end{equation}
A scalar turning point lifts to a classical fold when it lies on the full
equilibrium, the cell remains strictly complementary,
\(\kappa_{\vartheta,I}(e)\ne0\), and
\(\partial_{yy}F_I\ne0\).
\end{proposition}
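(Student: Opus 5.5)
The plan is to transplant the quadratic-cost proof of \cref{thm:model-release-reduction}, replacing the constant diagonal $C$ by the state-dependent curvature matrix $D_{\mathcal C,I}(e)$.

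\emph{Part (i).} The stagewise problem $x\,s_n-\mathcal C_n(x)$ is strictly concave on $[0,\bar e_n]$, so it has a unique maximizer. This maximizer is characterized by the projected KKT condition $s_n\in\mathcal C_n'(x)+N_{[0,\bar e_n]}(x)$. Equivalently, $x=\Pi_{[0,\bar e_n]}(\mathcal C_n')^{-1}(s_n)$, where $(\mathcal C_n')^{-1}$ is extended by the constants $0$ and $\bar e_n$ outside $[\mathcal C_n'(0),\mathcal C_n'(\bar e_n)]$. Since $s_n$ depends only on $e_r$ with $r<n$, the recursion defines $\mathscr E^{\mathcal C}_\vartheta(u)$ uniquely, and it is continuous in $u$. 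The one-to-one correspondence with roots then follows verbatim from the quadratic case: equilibrium implies a root, and a root reconstructs an equilibrium. Existence follows from the intermediate value theorem, because $\mathscr Y^{\mathcal C}_\vartheta$ takes values in $[0,\bar y]$.

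\emph{Part (ii).} The key scalar fact is that the extended inverse $\psi_n$ of $\mathcal C_n'$ is nondecreasing and $\underline c_n^{-1}$-Lipschitz, because $\mathcal C_n''\ge\underline c_n$. For $u_2\ge u_1$, induction over the triangular recursion gives $\Delta e\ge0$, using $Q\ge0$ and $\ell\ge0$. The same induction gives the componentwise inequality $\underline C\Delta e\le Q\Delta e+\ell\Delta u$. The Neumann sum $(\underline C-Q)^{-1}=\sum_k(\underline C^{-1}Q)^k\underline C^{-1}\ge0$ is finite by nilpotence, so $\Delta e\le(\underline C-Q)^{-1}\ell\,\Delta u$. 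Multiplying by $\ell^\top$ gives the bound. The uniqueness argument then copies the quadratic one: split on the sign of $u_\lambda(y_2)-u_\lambda(y_1)$ and show that the residual is strictly increasing.

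\emph{Part (iii).} On a strictly complementary cell the active set is locally constant. Bound coordinates are fixed, and they remain fixed under small perturbations by strictness of the inequalities. The free coordinates satisfy $\mathcal C_n'(e_n)=s_n(e,u,\vartheta)$ for $n\in I$, and the Jacobian of this system in $e_I$ is $B_I(e)=D_{\mathcal C,I}-Q_{II}$. This matrix is lower triangular with positive diagonal, hence invertible. The implicit function theorem then yields the two derivatives $\partial e_I/\partial u=B_I^{-1}\ell_I$ and $\partial e_I/\partial\vartheta=-B_I^{-1}d_I$. The chain rule with $u=u_\lambda(y)$ gives $\partial_yF_I$ and $\partial_\vartheta F_I$.

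The main computational step is the second derivative. I would write $F_I=y-\ell_I^\top e_I(u_\lambda(y))$ and differentiate twice, which gives
\[
\partial_{yy}F_I=-\lambda W_R'''\,\ell_I^\top v_I-\lambda^2(W_R'')^2\,\ell_I^\top\partial_u v_I .
\]
To compute $\partial_uv_I$, differentiate $B_I(e)v_I=\ell_I$ in $u$. Since $Q_{II}$ is constant, $\partial_uB_I=\diag(\mathcal C_n'''(e_n)(v_I)_n)$, which gives $\partial_uv_I=-B_I^{-1}\diag(\mathcal C'''(v_I))\,v_I$. Contracting with $\ell_I^\top B_I^{-1}=w_I^\top$ produces $-\Xi_I$, and the stated formula follows.

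The fold lift is then \cref{prop:spectral-fold-certification-app}(ii) applied to this scalar reduction. Alternatively one can use the direct scalar saddle-node argument: smooth elimination on the fixed cell relates the scalar and vector residuals, and the scalar transversality and quadratic coefficient transfer to the full residual.
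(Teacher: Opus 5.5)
Your proposal is correct and follows essentially the same route as the paper: monotone, $\underline c_n^{-1}$-Lipschitz stagewise responses from strong monotonicity of $\mathcal C_n'+N_{[0,\bar e_n]}$ for (i)--(ii), implicit differentiation of the free first-order conditions for (iii), and the second derivative from differentiating $\beta_{R,I}(e)$ along the residual. Your contraction of $\partial_u v_I$ with $w_I^\top$ is exactly the paper's $\partial_y\beta=-\lambda W_R''\Xi_I$. For the fold lift, the paper uses your second route, smooth scalar/vector elimination on the fixed cell plus the scalar saddle-node theorem; applying the spectral-fold proposition to the one-dimensional reduction alone gives only the scalar fold, so the elimination step is what actually delivers the lift.
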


\begin{proof}
Strong convexity makes each constrained scalar best response single-valued,
continuous, and nondecreasing in its score. Strong monotonicity of
\(\mathcal C_n'+N_{[0,\bar e_n]}\) gives
\[
 0\le x_n(s_2)-x_n(s_1)
 \le\frac{s_2-s_1}{\underline c_n}
 \qquad(s_2\ge s_1).
\]
Induction through the triangular recursion yields a unique response vector for
each \(u\), continuity of the scalar residual, and the same endpoint signs as
in \cref{thm:model-release-reduction}; this proves part~\textup{(i)}. For two
release prices, the componentwise strong-monotonicity bound gives
\[
 \underline C\,\Delta e\le Q\Delta e+\ell\,\Delta u.
\]
The lower-triangular inverse \((\underline C-Q)^{-1}\) is nonnegative, proving
part~\textup{(ii)}.

On a strictly complementary cell, the free first-order conditions are smooth:
\[
 \mathcal C_I'(e_I)
 =b_I^{\mathrm{face}}-\vartheta d_I+Q_{II}e_I+\ell_Iu.
\]
Implicit differentiation proves the first-order formulas in
part~\textup{(iii)}. For the second derivative, write
\(v_I=B_I^{-1}\ell_I\), \(w_I=B_I^{-\top}\ell_I\), and
\(\beta=\ell_I^\top B_I^{-1}\ell_I\). Along the scalar residual,
\[
 \frac{\partial e_I}{\partial y}
 =\lambda W_R''(y)v_I,
 \qquad
 \frac{\partial\beta}{\partial y}
 =-\lambda W_R''(y)\Xi_I(e).
\]
Differentiating
\(\partial_yF_I=1-\lambda\beta W_R''(y)\) gives
\eqref{eq:model-release-general-Fyy-app}. Smooth scalar/vector elimination and
the saddle-node theorem give the fold statement.
\end{proof}

Quadratic costs are used for piecewise-affine global responses, constant
closed-form cell exposures, and the simplified coefficient
\(-\lambda\beta_{R,I}W_R'''\); they are not required for the scalar reduction
or the temporal-support mechanism.
The financial materiality decomposition also survives pointwise: on a
general-cost smooth cell,
\[
 \frac{\dd\mathcal O}{\dd z}
 =\frac{\mathcal O'(y)\,\ell_I^\top B_I(e)^{-1}h_I}
 {1-\lambda\beta_{R,I}(e)W_R''(y)}.
\]
Only the exposure and primitive loading become state dependent.

The marginal-cost thresholds cannot be replaced by the quadratic raw-score
thresholds. For example, let
\(\mathcal C(x)=x+x^2/2\) on \([0,1]\) and let the score be \(s=1/2\).
Since \(s<\mathcal C'(0)=1\), the unique response is \(x^*=0\), although the
quadratic normalization \(s/\mathcal C''=1/2\) lies in \((0,1)\). Strictness is
also essential at the correct threshold: the projection
\(x(u)=\Pi_{[0,1]}u\) has left and right derivatives zero and one at \(u=0\).
Thus relative interior of a geometric face alone does not make the response
smooth.

\paragraph{Gaussian curvature, global uniqueness, and the lift test.}
For \eqref{eq:model-release-gaussian-value}, put \(r=\sqrt{y+1/2}\).
The derivatives are the expressions in
\eqref{eq:timing-Vp-app}--\eqref{eq:timing-Vppp-app}, with \(e\) replaced by
\(y\).  In particular, \(W_R''\) is maximized at zero, is strictly decreasing
on its positive-curvature interval, and
\[
 W_R''(0)=\frac{39\sqrt2-20}{54},\qquad W_R'''<0
\]
throughout that interval.

\begin{proof}[Proof of \cref{cor:model-release-gaussian-lift}]
If
\(\lambda\beta_RW_R''(0)<1\), global uniqueness follows directly from
\cref{thm:model-release-reduction}.  At equality, for any \(y_2>y_1\),
the secant slope of \(W_R'\) is strictly smaller than \(W_R''(0)\), because
\(W_R''(y)<W_R''(0)\) for every \(y>0\).  The same comparison used in the
global Lipschitz proof is therefore strict for distinct points, so the
capacity-constrained equilibrium remains unique.  The equality contact of the
unconstrained reduced equation occurs at \(y=0\).

Now suppose
\(\lambda>\lambda_{\mathrm{on}}^{\fresh}\).
Continuity and strict decrease of \(W_R''\) give a unique \(y_f>0\) satisfying
\eqref{eq:model-release-fold-condition}.  On the full-interior face the
equilibrium equation is
\[
 f(y,\vartheta;\lambda)
 =y-a_0(\lambda)+\kappa_\vartheta\vartheta
  -\lambda\beta_RW_R'(y)=0.
\]
Under the stated transversality assumption \(\kappa_\vartheta>0\), at
\(\vartheta_f\) in \eqref{eq:model-release-fold-cost}, \(f=f_y=0\).  Positivity of \(\vartheta_f\) follows from
\[
 H(y):=W_R'(y)-yW_R''(y),\qquad
 H'(y)=-yW_R'''(y)>0,\qquad H(0)=W_R'(0)>0,
\]
and \(\widetilde b^0,\bar\Gamma\ge0\).
Furthermore
\[
 f_\vartheta=\kappa_\vartheta>0,\qquad
 f_{yy}=-\lambda\beta_RW_R'''(y_f)>0.
\]
The scalar saddle-node lemma gives
\eqref{eq:model-release-branch-expansion}.  Differentiating the reconstruction
formula gives
\[
 \frac{\dd e_\pm}{\dd\vartheta}
 =-A^{-1}d+\lambda A^{-1}\ell W_R''(y_\pm)
   \frac{\dd y_\pm}{\dd\vartheta}.
\]
The singular component has direction \(A^{-1}\ell\).  Continuity keeps both
branches on the full-interior face exactly when the critical reconstruction
\eqref{eq:model-release-fold-lift} lies strictly inside the box.  The same
argument applies on a fixed face with its affine baseline and
\(\beta_{R,I}\).
\end{proof}

The lift qualification cannot be omitted.  Consider
\[
 J=2,\quad C=I,\quad Q=0,\quad
 \ell=(1,1)^\top,\quad d=(100,1)^\top,\quad
 \widetilde b^0=0,\quad\bar\Gamma=0,\quad\lambda=1.
\]
Then \(\beta_R=2\), and the reduced turning point satisfies
\[
 y_f=0.050727793389\ldots,\qquad
 W_R'(y_f)=0.071088864797\ldots,\qquad
 \vartheta_f=0.000905444913\ldots.
\]
But its reconstructed effort is
\[
 e_f=(-0.01945563\ldots,\ 0.07018342\ldots)^\top.
\]
The reduced scalar equation has a turning point, while the full-interior
equilibrium does not.  Increasing upper capacities cannot repair the negative
first component.

The general monotonicity statement is equally precise.  If \(C,\ell\) are
fixed and \(0\le Q_1\le Q_2\) entrywise, then the finite path expansions give
\[
 \ell^\top(C-Q_1)^{-1}\ell
 \le\ell^\top(C-Q_2)^{-1}\ell.
\]
No ordering follows merely from changing the number of stages without a
nested embedding that preserves the existing primitives.

\paragraph{Validation-time and financial-materiality certificates.}
\begin{proof}[Proof of \cref{thm:financial-release-multiplier}]
Taking expectations in \eqref{eq:model-release-realized-settlement}, with
\(y\) fixed under the exogenous-timing assumption, gives
\eqref{eq:model-release-expected-settlement} and identifies the deterministic
release share with \(\lambda_\rho(\Delta)\). On \(\{D\le\Delta\}\),
\(e^{-\rho\Delta}\le e^{-\rho D}\le1\), which proves
\eqref{eq:model-release-distribution-free-bounds}. Conditional completion at
\(D=\Delta\) or \(D=0\) attains the lower or upper endpoint. Mixing these two
conditional laws attains every intermediate value, so the share interval is
sharp. Multiplication by \(\beta_{R,I}[W_R''(y)]_+\) gives
\eqref{eq:model-release-partial-id-interval} and its subcritical and
supercritical certificates.

For the financial statement, strict complementarity fixes the quadratic-cost
cell locally. After adding \(zh_I\) to its free scores, the cell equation is
\[
 A_Ie_I=b_I^{\mathrm{face}}-\vartheta d_I+zh_I
 +\ell_I\{(1-\lambda)\bar\Gamma+\lambda W_R'(y)\},
 \qquad A_I=C_I-Q_{II}.
\]
Bound coordinates are locally constant. Differentiating with respect to \(z\)
and multiplying by \(\ell_I^\top A_I^{-1}\) gives
\[
 \frac{\dd y}{\dd z}
 =\gamma_{h,I}+\lambda\chi_I(y)\frac{\dd y}{\dd z},
 \qquad
 \gamma_{h,I}=\ell_I^\top A_I^{-1}h_I,
 \qquad
 \chi_I(y)=\beta_{R,I}W_R''(y).
\]
The implicit-function theorem gives the multiplier
\((1-\lambda\chi_I)^{-1}\). Setting
\(\lambda=\lambda_\rho(\Delta)\) yields
\eqref{eq:model-release-financial-multiplier}. This is also the release-cell
specialization of \cref{prop:rank-one-financial-resolvent-app}: the open-loop
loading \(\omega_{\mathcal O,h,I}\) is multiplied by the feedback resolvent
\((1-\lambda\chi_I)^{-1}\).

When \(\chi_I(y)>0\) and
\(\overline\lambda_\Delta\chi_I(y)<1\), the multiplier is increasing and
continuous in \(\lambda\), so sharpness of the timing-share interval gives
\eqref{eq:model-release-multiplier-id}. Multiplying by the signed loading gives the corresponding sharp
conditional timing-sensitivity set. If a nondegenerate timing interval contains
\(1/\chi_I(y)\), it contains nonsingular shares arbitrarily close to that
value. Every share in the interval is attained by a compatible completion
law; hence \(|\dd\mathcal O/\dd z|\) has no finite uniform bound when the
loading is nonzero.

For \(D\sim\operatorname{Exp}(\nu)\), direct integration gives
\[
 \lambda_\rho(\Delta)
 =\int_0^\Delta\nu e^{-(\nu+\rho)s}\,\dd s
 =\frac{\nu}{\nu+\rho}
  \left(1-e^{-(\nu+\rho)\Delta}\right).
\]
For the Gaussian value, setting \(G=\beta_RW_R''(0)\),
\(r=\rho/\nu\), and \(d=\nu\Delta\) gives
\eqref{eq:model-release-dimensionless-index}. Its ceiling is \(G/(1+r)\), so
a finite unit-gain deadline exists exactly when \(G>1+r\), and inversion gives
\eqref{eq:model-release-critical-deadline}. Direct differentiation gives
\[
 \frac{\partial d_*}{\partial G}=-\frac{1}{G(G-1-r)}<0,
\]
which proves the protocol-deadline comparison in the Main Paper after setting
\(G=G_P\). At the critical deadline,
\[
 \left.\frac{\partial}{\partial d}
 [1-\mathfrak C(G,r,d)]\right|_{d=d_*}
 =-G e^{-(1+r)d_*}=-(G-1-r).
\]
A first-order expansion proves the pole
\eqref{eq:model-release-deadline-pole}. Strict decrease of the Gaussian curvature then gives the unique positive
unconstrained unit-gain location and the fold qualifications stated in the
corollary.

The positive-location conclusion uses the Gaussian qualification.
For \(W_R(y)=ay^2/2\), the reduced derivative
\(1-\lambda_\rho(\Delta)\beta_Ra\) is independent of \(y\); a zero-state gain
above one need not generate a positive turning point.
\end{proof}

\paragraph{Endogenous validation latency.}
Suppose instead that the value-weighted same-cycle share is
\(s(y)\in[0,1]\), with \(s\in C^1\), and that its dependence on the release
state is only through \(y\).  The global reduction remains exact after
replacing \(u_\lambda(y)\) by
\[
 u_s(y)=\bar\Gamma+s(y)[W_R'(y)-\bar\Gamma].
\]
On a smooth face,
\[
 \partial_yF_I
 =1-\beta_{R,I}
 \{s(y)W_R''(y)+s'(y)[W_R'(y)-\bar\Gamma]\}.
\]
The first term is the exogenous settlement channel.  The second is a release
elasticity: value-based fast tracking amplifies current feedback when
\(W_R'(y)>\bar\Gamma\), whereas complexity-induced delay can damp it.  If
validation time depends on the full vector \(e\), rather than on \(y\) alone,
the one-dimensional reduction need not survive and the additional derivative
must be retained in the finite-factor system.

\begin{proposition}[Exact multi-factor release compression]
\label{prop:model-release-multifactor-app}
Let \(A\in\mathbb R^{J\times J}\) be invertible,
\(L:\mathbb R^J\to\mathbb R^q\), \(U:\mathbb R^q\to\mathbb R^J\), and
\(\Psi\in C^2(\mathbb R^q)\).  The nonlinear release equilibrium
\[
 Ae=b+\lambda U\nabla\Psi(Le)
\]
is in one-to-one correspondence with
\[
 z=a+\lambda B\nabla\Psi(z),\qquad
 a:=LA^{-1}b,\qquad B:=LA^{-1}U.
\]
Every reduced root reconstructs uniquely as
\(e=A^{-1}[b+\lambda U\nabla\Psi(z)]\), and the full derivative is singular
if and only if
\[
 \det[I_q-\lambda B\nabla^2\Psi(z)]=0.
\]
Thus the complete reduced equilibrium and its critical locus depend on the
pair \((a,B)\).  Conditional on a fixed reduced root \(z\), the local
singularity test depends on causal memory through
\(B\nabla^2\Psi(z)\).
\end{proposition}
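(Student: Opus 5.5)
The plan is to collapse the $J$-dimensional equation onto the $q$-dimensional factor $z:=Le$, in the same way the scalar release reduction of \cref{thm:model-release-reduction} collapses onto $y=\ell^\top e$.

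\emph{Correspondence.} Suppose $e$ is an equilibrium. Since $A$ is invertible, $e=A^{-1}[b+\lambda U\nabla\Psi(Le)]$. Put $z=Le$ and apply $L$ to this identity. This gives $z=LA^{-1}b+\lambda LA^{-1}U\nabla\Psi(z)=a+\lambda B\nabla\Psi(z)$, so $z$ is a reduced root.

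Conversely, let $z$ be a reduced root and define $e:=A^{-1}[b+\lambda U\nabla\Psi(z)]$. Applying $L$ gives $Le=a+\lambda B\nabla\Psi(z)=z$. Substituting $Le=z$ back shows $Ae=b+\lambda U\nabla\Psi(Le)$, so $e$ is an equilibrium.

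The two maps $e\mapsto Le$ and $z\mapsto A^{-1}[b+\lambda U\nabla\Psi(z)]$ are mutually inverse on the two solution sets. For an equilibrium, the reconstruction formula returns $e$ itself. For a reduced root, the identity $Le=z$ shows we recover $z$. This proves the one-to-one correspondence and the uniqueness of the reconstruction.

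\emph{Singularity.} The full residual is $\Phi(e)=Ae-b-\lambda U\nabla\Psi(Le)$, with derivative $D\Phi=A-\lambda UH L$, where $H=\nabla^2\Psi(z)$. I will factor this as $A(I_J-\lambda A^{-1}U\,HL)$, so $D\Phi$ is singular if and only if $I_J-\lambda(A^{-1}U)(HL)$ is singular.

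Next I apply the finite-rank Woodbury argument from part~(iii) of \cref{thm:causal-noncausal-separator}, equivalently Sylvester's determinant identity $\det(I_J-XY)=\det(I_q-YX)$. Take $X=A^{-1}U$ and $Y=HL$. Then $YX=HB$, so $\det D\Phi=\det A\cdot\det(I_q-\lambda HB)$.

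Finally, $\det(I_q-\lambda HB)=\det(I_q-\lambda BH)$, again by Sylvester. This is the stated criterion $\det[I_q-\lambda B\nabla^2\Psi(z)]=0$.

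\emph{Dependence on $(a,B)$.} The reduced equation involves only $a$, $B$, $\lambda$ and $\Psi$. Together with the bijection, this shows the equilibrium set, and by the determinant identity its critical locus, depend on the primitives only through $(a,B)$. Causal memory enters through $A^{-1}$, for example $A=C-Q$ with nilpotent $C^{-1}Q$, and it is absorbed into $B=LA^{-1}U$. At a fixed root $z$, the test therefore depends on that memory only through $B\nabla^2\Psi(z)$.

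\emph{Main obstacle.} There is little real difficulty. The one point needing care is the bijection: a reduced root must reconstruct an $e$ with $Le=z$, otherwise substituting $Le$ back into $\nabla\Psi$ would be unjustified. Applying $L$ to the reconstruction formula settles this directly, and no invertibility of $L$ is needed.
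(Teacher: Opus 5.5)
Your proposal is correct and follows the paper's proof: apply $LA^{-1}$ to obtain the reduced equation, substitute back for the converse, factor $D\Phi=A[I_J-\lambda A^{-1}U\nabla^2\Psi(z)L]$, and reduce with Sylvester's identity. The paper groups $\nabla^2\Psi(z)$ with $A^{-1}U$, so a single application of the identity yields $\det[I_q-\lambda B\nabla^2\Psi(z)]$ directly and your second swap is not needed. One wording correction: only the reduced root set in $z$ and its critical locus depend on the primitives solely through $(a,B)$ (with $\lambda$ and $\Psi$), because the reconstructed $e$ itself also involves $A^{-1}b$ and $A^{-1}U$.
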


\begin{proof}
Multiplication by \(LA^{-1}\) gives the reduced equation; substitution proves
the converse.  The full residual derivative is
\[
 A-\lambda U\nabla^2\Psi(z)L
 =A[I_J-\lambda A^{-1}U\nabla^2\Psi(z)L].
\]
Sylvester's determinant identity reduces the second determinant to
\(\det[I_q-\lambda LA^{-1}U\nabla^2\Psi(z)]\).
\end{proof}

The input direction \(a\) and output loading \(B\) play different roles.  In the scalar case
\(q=1\), take \(\Psi(z)=z^4/4\), with \(a\ne0\) and \(B>0\) so that the
displayed fold parameter is positive.  The reduced equation and fold condition
are
\[
 z=a+\lambda Bz^3,\qquad 1=3\lambda Bz^2.
\]
They imply \(z_f=3a/2\) and
\[
 \boxed{\lambda_f=\frac{4}{27Ba^2}.}
\]
Thus two systems with the same exposure matrix \(B\) but different affine
input \(a\) have different fold loci.  Only after a reduced root has been
fixed is \(B\nabla^2\Psi(z)\) the complete local singularity statistic.

\paragraph{Quantitative frontier.}
The following panels summarize the deadline and completion-rate regions implied
by the release formulas.

\begin{figure}[ht]
\centering
\begin{minipage}[t]{.43\textwidth}
\centering
\includegraphics[width=\linewidth]{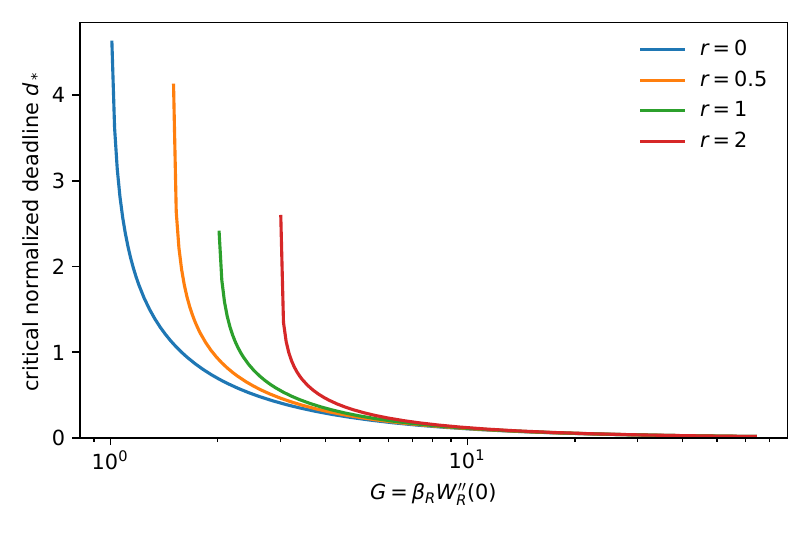}
\end{minipage}\hspace{.04\textwidth}
\begin{minipage}[t]{.43\textwidth}
\centering
\includegraphics[width=\linewidth]{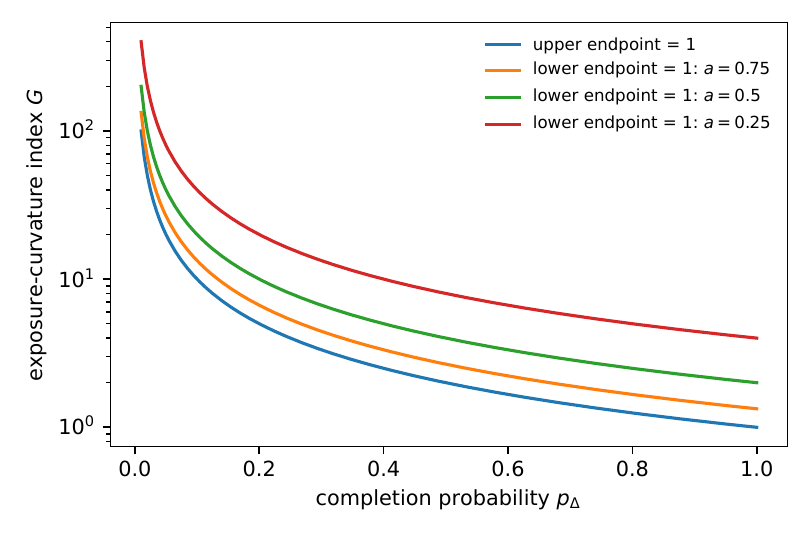}
\end{minipage}
\caption{Sharp validation-time certificates. The left panel shows the
critical deadline at \(y=0\), \(d_*\), which exists only for \(G>1+r\).
The right panel uses only completion probability \(p_\Delta\): the upper gain
endpoint reaches one at \(G=1/p_\Delta\), while the lower endpoint reaches one
at \(G=1/(a p_\Delta)\), with \(a=e^{-\rho\Delta}\). Between these frontiers, completion probability alone does not identify a
strict side of unit gain.}
\label{fig:model-release-deadline-app}
\end{figure}

\begin{corollary}[Scalar Gaussian release family]
\label{prop:gaussian-timing-family}\label{prop:gaussian-batch-fold}
\label{cor:timing-financial-criticality}
Take \(J=1\), \(C=1/32\), \(Q=0\), \(\ell=d=1\), and
\(\widetilde b^0=0\). Then \(\beta_R=\kappa_\vartheta=32\). Let
\[
 p=e+\frac12,\qquad m=\frac32,\qquad z=\sigma^2=\gamma=1,
\]
and define
\begin{equation}\label{eq:timing-family-value}
 V^{\fresh}(e)
 =\max_{\pi\ge0}\left\{
 \pi(m-p^{-1/2})-\frac12(1+p^{-1})\pi^2\right\}
 =\frac{(\frac32\sqrt p-1)^2}{2(p+1)}.
\end{equation}
With predetermined causal marginal value \(\bar\Gamma\ge0\), the projected
residual and potential are
\begin{align}
 F_\lambda(e,\vartheta)
 &=\vartheta+\frac e{32}-(1-\lambda)\bar\Gamma
 -\lambda(V^{\fresh})'(e),
 \label{eq:timing-family-residual}\\
 \mathcal W_\lambda(e,\vartheta)
 &=\lambda V^{\fresh}(e)+(1-\lambda)\bar\Gamma e
 -\vartheta e-\frac{e^2}{64},
 \qquad \partial_e\mathcal W_\lambda=-F_\lambda.
 \label{eq:timing-family-potential}
\end{align}
The global reduced onset is
\begin{equation}\label{eq:timing-family-lambda-c}
 \boxed{\lambda_{\mathrm{on}}
 =\frac{27}{16(39\sqrt2-20)}\approx0.0480026.}
\end{equation}
For \(0\le\lambda\le\lambda_{\mathrm{on}}\), the projected equilibrium is
unique for every \(\vartheta\); equality first produces a boundary tangency at
\(e=0\). If \(\lambda>\lambda_{\mathrm{on}}\), the unconstrained positive
branch has a unique turning point \(e_f(\lambda)\in(0,1/2]\) satisfying
\(32\lambda(V^{\fresh})''(e_f)=1\). At
\begin{equation}\label{eq:timing-family-theta-fold}
 \vartheta_f(\lambda)
 =(1-\lambda)\bar\Gamma+\lambda(V^{\fresh})'(e_f)-\frac{e_f}{32},
\end{equation}
the point is an interior saddle-node precisely when \(\bar e>e_f(\lambda)\).
At full same-cycle settlement,
\begin{equation}\label{eq:timing-family-batch-recovery}
 \boxed{e_f(1)=\frac12,\qquad
 \vartheta_f(1)=\frac9{64},\qquad
 \frac12F_{1,ee}(e_f,\vartheta_f)=\frac{21}{128}.}
\end{equation}
Whenever the fold is interior, the two positive branches satisfy
\begin{equation}\label{eq:timing-fold-branches}
 e_\pm(\vartheta)
 =e_f\pm
 \sqrt{\frac{\vartheta_f-\vartheta}
 {-\lambda(V^{\fresh})'''(e_f)/2}}
 +O(\vartheta_f-\vartheta),
\end{equation}
and hence
\begin{equation}\label{eq:timing-research-sensitivity}
 \left|\frac{\dd e_\pm}{\dd\vartheta}\right|
 \asymp(\vartheta_f-\vartheta)^{-1/2}.
\end{equation}
The positive-position portfolio and optimized value have nonzero derivatives at
the fold and therefore inherit the same pole. At \(\lambda=1\),
\begin{align}
 \left|\frac{\dd\pi(e_\pm)}{\dd\vartheta}\right|
 &\sim\frac32\sqrt{\frac2{21}}
  (\vartheta_f-\vartheta)^{-1/2},
 \label{eq:batch-portfolio-criticality}\\
 \left|\frac{\dd V^{\fresh}(e_\pm)}{\dd\vartheta}\right|
 &\sim\frac58\sqrt{\frac2{21}}
  (\vartheta_f-\vartheta)^{-1/2}.
 \label{eq:batch-value-criticality}
\end{align}
\end{corollary}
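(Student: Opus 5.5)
The argument specializes the global release reduction of \cref{thm:model-release-reduction} and the Gaussian fold lift of \cref{cor:gaussian-protocol-criticality} to the scalar family $J=1$, $C=1/32$, $Q=0$, $\ell=d=1$, $\widetilde b^0=0$. First I will check the primitives. We have $\beta_R=\ell^\top C^{-1}\ell=32$ and $\kappa_\vartheta=32$. The value \eqref{eq:timing-family-value} follows from maximizing a concave quadratic over $\pi\ge0$: on the positive branch $\pi=(m-p^{-1/2})/(1+p^{-1})$, and substitution gives $(\tfrac32\sqrt p-1)^2/(2(p+1))$. Hence $V^{\fresh}$ coincides with $W_R^{\fresh}$ in \eqref{eq:model-release-gaussian-value} under $y=e$, since $\ell=1$ and $p=e+1/2$.

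Next, I will write the stage-$0$ projected first-order condition and divide by $c_0^{-1}=32$. This puts the full-interior residual $e-32[-\vartheta+(1-\lambda)\bar\Gamma+\lambda W_R'(e)]$ in the normalized form \eqref{eq:timing-family-residual}. The potential identity $\partial_e\mathcal W_\lambda=-F_\lambda$ is then immediate differentiation. The onset \eqref{eq:timing-family-lambda-c} is $\lambda_{\mathrm{on}}=1/(32W_R''(0))$. Using $W_R''(0)=(39\sqrt2-20)/54$ from \cref{cor:gaussian-protocol-criticality}, this equals $54/(32(39\sqrt2-20))=27/(16(39\sqrt2-20))$.

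Uniqueness for $\lambda<\lambda_{\mathrm{on}}$, the boundary tangency at equality, and the existence of a unique turning point $e_f$ for $\lambda>\lambda_{\mathrm{on}}$ all transfer verbatim from the corollary, using strict decrease of $W_R''$ on its positive-curvature interval. Formula \eqref{eq:timing-family-theta-fold} is $F_\lambda=0$ solved at $e_f$. The lift condition \eqref{eq:model-release-fold-lift} reduces in one dimension to $0<e_f<\bar e$. The left inequality holds because $e_f>0$, so the fold is interior exactly when $\bar e>e_f$. The claim $e_f\le1/2$ needs $32W_R''(1/2)\le 1/\lambda$ at $\lambda=1$, together with monotonicity in $\lambda$: since $W_R''$ is decreasing, $e_f(\lambda)$ increases in $\lambda$, so $e_f\le e_f(1)$.

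The main computational step is \eqref{eq:timing-family-batch-recovery}. At $e=1/2$ we have $p=1$, and I will evaluate the derivatives of $V(p)=(\tfrac32\sqrt p-1)^2/(2(p+1))$ with respect to $p$, which equal the derivatives in $e$. Expected values are $V'(1)=1/8$ and $V''(1)=1/32$; the latter confirms $32V''=1$ at $\lambda=1$. Then $\vartheta_f(1)=1/8+0\cdot\bar\Gamma-1/64=7/64$?

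That conflicts with the stated $9/64$, so the computation must be done carefully. The route is to expand $u=\sqrt p$ and write $V=(\tfrac32u-1)^2/(2(u^2+1))$. I then differentiate with $d/dp=(2u)^{-1}d/du$ and recompute $V'(1)$; the stated value requires $V'(1)=5/32$. Independently, I will check $V'''(1)=-21/64$, which gives $\tfrac12F_{ee}=-\tfrac12V'''(1)=21/128$. These closed-form evaluations are where I expect the real risk, and I will verify each against the portfolio derivative formula already displayed for $(\pi^{\fresh})'$. As a cross-check, the envelope identity $V'(e)=\partial_p$ of the objective at the optimum, i.e.\ $\pi^2/(2\sqrt p^{3})\cdot$ appropriate terms, gives an independent value of $V'(1)$. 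At $p=1$ one gets $\pi=1/4$, and the envelope yields $\pi\cdot\tfrac12+\tfrac12\pi^2=1/8+1/32=5/32$, consistent with $9/64$.

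Finally, the branch expansion \eqref{eq:timing-fold-branches} is the scalar saddle-node expansion from \eqref{eq:model-release-branch-expansion}, with $F_\vartheta=1$ and $F_{ee}=-\lambda V'''(e_f)$. Differentiating this expansion yields the inverse-square-root rate \eqref{eq:timing-research-sensitivity}. For the observables, the chain rule gives $d\pi/d\vartheta=\pi'(e_f)\,de/d\vartheta$. At $\lambda=1$ we have $de/d\vartheta\sim\tfrac12\sqrt{128/21}\,(\vartheta_f-\vartheta)^{-1/2}=\sqrt{32/21}\,(\vartheta_f-\vartheta)^{-1/2}$. The displayed formula gives $\pi'(1/2)=3/8$, and $V'(1)=5/32$. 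The products give $\tfrac38\cdot4\sqrt{2/21}=\tfrac32\sqrt{2/21}$ and $\tfrac5{32}\cdot4\sqrt{2/21}=\tfrac58\sqrt{2/21}$, matching \eqref{eq:batch-portfolio-criticality}--\eqref{eq:batch-value-criticality}.
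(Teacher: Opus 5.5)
Your specialization bookkeeping is right: $\beta_R=\kappa_\vartheta=32$, the normalized residual and potential, and $\lambda_{\mathrm{on}}=54/(32(39\sqrt2-20))$ all check out. So do the one-dimensional lift condition $0<e_f<\bar e$ (here $e_f=y_f$), the branch formula read off from \eqref{eq:model-release-branch-expansion}, and, after your self-correction, $V'(1/2)=5/32$, $\vartheta_f(1)=9/64$ and the constants $\tfrac32\sqrt{2/21}$, $\tfrac58\sqrt{2/21}$.

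The gap is that everything nontrivial is imported ``verbatim'' from \cref{cor:gaussian-protocol-criticality}. That includes:
\begin{itemize}
\item the value $V''(0)$;
\item that $(V^{\fresh})''$ is positive on a single initial interval and strictly decreasing there;
\item that $V'''<0$ there, which is what gives $F_{\lambda,ee}(e_f)>0$ for every $\lambda\in(\lambda_{\mathrm{on}},1]$;
\item existence and uniqueness of $e_f$;
\item uniqueness for $\lambda\le\lambda_{\mathrm{on}}$, since \eqref{eq:model-release-global-uniqueness} needs $\sup[V'']_+=V''(0)$ and the equality case needs $V''(e)<V''(0)$ for $e>0$.
\end{itemize}
In the paper, that corollary's proof only points to the derivative formulas \eqref{eq:timing-Vp-app}--\eqref{eq:timing-Vppp-app} and their sign analysis. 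Those are established in the proof of the present corollary, so as a proof of this statement your argument is circular.

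The missing piece is that sign analysis. Put $r=\sqrt{e+1/2}\ge1/\sqrt2$.
\begin{itemize}
\item $(V^{\fresh})''=-Q(r)/(8r^3(r^2+1)^3)$ with $Q=9r^4+10r^3-18r^2-3$. Since $Q'(r)=6r(2r+3)(3r-2)>0$ and $Q(1/\sqrt2)<0$, positive curvature holds exactly on an initial interval.
\item $(V^{\fresh})'''=3P(r)/(16r^5(r^2+1)^4)$ with $P'=10rQ$ and $P(1/\sqrt2)<0$, so $P<0$ on that interval.
\end{itemize}
With this in hand, the paper proves uniqueness for $\lambda\le\lambda_{\mathrm{on}}$ directly from $F_{\lambda,e}=1/32-\lambda(V^{\fresh})''\ge0$ (strict for $e>0$) plus a three-case projected KKT check.

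There are also three smaller points.

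First, you only assert $V''(1/2)=1/32$ and $V'''(1/2)=-21/64$. They follow from $Q(1)=-2$ and $P(1)=-28$. Your envelope route also works: differentiating $V'(p)=\tfrac12\pi p^{-3/2}+\tfrac12\pi^2p^{-2}$ with $\pi(1)=1/4$ and $\pi'(1)=3/8$ gives $1/32$. The value $V''(1/2)=1/32$ is exactly what you need for $e_f(1)=1/2$ and $e_f\le1/2$.

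Second, the claim that portfolio and optimized value have nonzero derivative at the fold is made for every interior fold, not only $\lambda=1$. You therefore need $\pi'(e)>0$ for $p\ge1/2$ and $V'(e)>0$. The envelope formula gives the latter because $\pi>0$ whenever $p>4/9$, which also justifies your use of the positive branch in the value formula.

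Third, obtain $|\dd e_\pm/\dd\vartheta|\asymp(\vartheta_f-\vartheta)^{-1/2}$ by implicit differentiation: $\dd e_\pm/\dd\vartheta=-1/F_{\lambda,e}(e_\pm)$ with $F_{\lambda,e}(e_\pm)=F_{\lambda,ee}(e_f)(e_\pm-e_f)+O(|e_\pm-e_f|^2)$. Do not differentiate an expansion whose remainder is only known to be $O(\vartheta_f-\vartheta)$.
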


\paragraph{Nested Gaussian timing family.}
We prove \cref{prop:gaussian-timing-family}.
For fixed anticipated aggregate $e$, supplier $i$ solves a strictly concave
quadratic problem, hence
\begin{equation}\label{eq:timing-best-response-app}
 \operatorname{BR}_{\lambda,\vartheta}(e)
 =\Pi_{[0,\bar e]}
 \left(32\bigl[(1-\lambda)\bar\Gamma
 +\lambda(V^{\fresh})'(e)-\vartheta\bigr]\right).
\end{equation}
Consequently the projected equilibrium conditions are $F_\lambda(0,\vartheta)\ge0$
at the no-research face, $F_\lambda(\bar e,\vartheta)\le0$ at capacity, and
$F_\lambda(e,\vartheta)=0$ in the interior.

Put $r=\sqrt{e+1/2}\ge1/\sqrt2$.  Direct differentiation of
\eqref{eq:timing-family-value} gives
\begin{align}
 (V^{\fresh})'(e)
 &=\frac{(2r+3)(3r-2)}{8r(r^2+1)^2},
 \label{eq:timing-Vp-app}\\
 (V^{\fresh})''(e)
 &=-\frac{Q(r)}{8r^3(r^2+1)^3},
 \qquad
 Q(r)=9r^4+10r^3-18r^2-3,
 \label{eq:timing-Vpp-app}\\
 (V^{\fresh})'''(e)
 &=\frac{3P(r)}{16r^5(r^2+1)^4},
 \qquad
 P(r)=15r^6+20r^5-45r^4-15r^2-3.
 \label{eq:timing-Vppp-app}
\end{align}
The sign structure is elementary and global.  First,
\begin{equation}\label{eq:timing-Qprime-app}
 Q'(r)=6r(2r+3)(3r-2)>0
 \qquad(r\ge1/\sqrt2),
\end{equation}
while
$Q(1/\sqrt2)=-39/4+5\sqrt2/2<0$.  Hence $Q$ crosses zero at most once, so
$(V^{\fresh})''$ is positive only on one initial interval.  Second,
\begin{equation}\label{eq:timing-Pprime-app}
 P'(r)=10rQ(r),
 \qquad
 P(1/\sqrt2)=-159/8+5\sqrt2/2<0.
\end{equation}
Therefore $P<0$ throughout the positive-curvature interval, and
$(V^{\fresh})'''<0$ wherever $(V^{\fresh})''>0$.  Thus
$(V^{\fresh})''$ decreases strictly from
\begin{equation}\label{eq:timing-Vpp0-app}
 (V^{\fresh})''(0)=\frac{39\sqrt2-20}{54}
\end{equation}
to zero.  For every interior $e>0$ on the positive-curvature branch, define
\[
 \lambda_{\mathrm{spec}}(e)
 =\frac{1}{32(V^{\fresh})''(e)}.
\]
Because $(V^{\fresh})''$ is strictly decreasing there,
\[
 \lambda_{\mathrm{on}}
 =\inf_{\substack{e>0:\,(V^{\fresh})''(e)>0}}\lambda_{\mathrm{spec}}(e)
 =\lim_{e\downarrow0}\lambda_{\mathrm{spec}}(e)
 =\frac{27}{16(39\sqrt2-20)}.
\]
This equality between the lower envelope of local thresholds and the global
onset uses the scalar monotonicity just proved; it is not asserted by the
abstract separator for a general Banach-space equilibrium family.

The literal Bayesian inheritance comparator gives the protocol comparison used in the Main Paper. Preserve the
initial natural mandate
\[
 p_0=\frac12,\qquad k_0=\sqrt{p_0}=\frac1{\sqrt2}.
\]
Since posterior variance is \(p^{-1}\), the inherited worst-case drift penalty
is \(k_0/p\). On the same positive-position Gaussian quadratic,
\begin{equation}\label{eq:literal-inherited-objective-app}
 J^{\inh}(\pi,e)
 =\pi\left(\frac32-\frac{k_0}{p}\right)
 -\frac12(1+p^{-1})\pi^2,
 \qquad p=e+\frac12,
\end{equation}
with optimized value
\begin{equation}\label{eq:literal-inherited-value-app}
 V^{\inh}(p)=\frac{(3p-\sqrt2)^2}{8p(p+1)}.
\end{equation}
Direct differentiation gives
\begin{align}
 (V^{\inh})''(p)
 &=-\frac{(9+6\sqrt2)p^3-6p^2-6p-2}
 {4p^3(p+1)^3},
 \label{eq:literal-inherited-second-global-app}\\
 (V^{\inh})'''(p)
 &=\frac{3[(9+6\sqrt2)p^4-8p^3-12p^2-8p-2]}
 {4p^4(p+1)^4}.
 \label{eq:literal-inherited-third-global-app}
\end{align}
Let
\[
 N(p)=(9+6\sqrt2)p^3-6p^2-6p-2,
 \quad
 M(p)=(9+6\sqrt2)p^4-8p^3-12p^2-8p-2.
\]
The identity
\begin{equation}\label{eq:inherited-curvature-identity-app}
 \boxed{M(p)=pN(p)-2(p+1)^3}
\end{equation}
shows that \(M(p)<0\) whenever \((V^{\inh})''(p)>0\), because positive
curvature is equivalent to \(N(p)<0\). Hence inherited curvature decreases
strictly throughout its positive-curvature region and is maximized at the
admissible boundary \(p=1/2\). At that point
\begin{equation}\label{eq:literal-inherited-curvature-app}
 (V^{\inh})''(1/2)
 =\frac{86}{27}-\frac{4\sqrt2}{9}.
\end{equation}
Under \(\beta_R=32\), the global reduced onset is therefore
\begin{equation}\label{eq:literal-inherited-threshold-app}
 \boxed{\lambda_{\mathrm{on}}^{\inh}
 =\frac{27}{64(43-6\sqrt2)}
 \approx0.0122230462.}
\end{equation}
Together with the fresh calculation above, this proves
\cref{cor:gaussian-protocol-criticality}. In this benchmark
\(\lambda_{\mathrm{on}}^{\fresh}/\lambda_{\mathrm{on}}^{\inh}
\approx3.9272\): inheritance reaches reduced unit gain earlier under the same
research-supply normalization. The comparison concerns the reduced unit-gain
boundary; a feasible fold still requires the lift conditions in the Main
Paper. There is no protocol sign theorem across Gaussian primitives. More
generally, let the common baseline precision be \(p_0=q^2\), let \(z>0\) be
the confidence scale, and retain predictive variance \(1+p^{-1}\). Comparing
a fresh penalty \(z/\sqrt p\) with inheritance of the baseline natural budget
\(zq\), direct differentiation at \(p_0\) gives
\begin{equation}\label{eq:protocol-curvature-sign-app}
 \left[(V^{\fresh})''-(V^{\inh})''\right]_{p=q^2}
 =\frac{z(5mq^3+mq-8q^2z-4z)}
 {4q^4(q^2+1)^2}.
\end{equation}
The sign has a sharp signal-to-ambiguity representation. Put
\(x=mq/z\); the common positive-position face requires \(x>1\). The numerator
in \eqref{eq:protocol-curvature-sign-app} has the sign of
\[
 (5x-8)q^2+x-4=(5q^2+1)\{x-x_c(q)\},
 \qquad
 \boxed{x_c(q)=\frac{8q^2+4}{5q^2+1}>1.}
\]
Consequently
\[
 (V^{\fresh})''>(V^{\inh})''\iff x>x_c(q),\qquad
 (V^{\fresh})''<(V^{\inh})''\iff1<x<x_c(q).
\]
Thus signal-to-ambiguity configuration, not provenance by itself, orders the
protocol curvatures even while both protocols remain on the same positive
portfolio face. Part~\textup{(iv)} of \cref{thm:model-release-reduction}
transmits this curvature ordering to the unit-gain boundary.

Now
\begin{equation}\label{eq:timing-Fe-app}
 F_{\lambda,e}(e,\vartheta)
 =\frac1{32}-\lambda(V^{\fresh})''(e).
\end{equation}
If $\lambda\le\lambda_{\mathrm{on}}$, this derivative is nonnegative everywhere and
strictly positive for $e>0$.  At $\lambda=\lambda_{\mathrm{on}}$ its only zero is the boundary point $e=0$, so this is a boundary tangency rather than an interior fold.  Hence $F_\lambda(\cdot,\vartheta)$ is strictly
increasing on the admissible interval.  Existence is immediate: if
$F_\lambda(0,\vartheta)\ge0$, then $e=0$ satisfies the projected KKT
condition; if $F_\lambda(\bar e,\vartheta)\le0$, then $e=\bar e$ does;
otherwise the intermediate-value theorem gives an interior zero.  Strict
monotonicity makes these three cases mutually exclusive, proving global
projected uniqueness.

Let $\lambda\in(\lambda_{\mathrm{on}},1]$.  Since
$(V^{\fresh})''(1/2)=1/32$ and the positive-curvature branch is strictly
decreasing, the equation
$32\lambda(V^{\fresh})''(e)=1$ has exactly one solution
$e_f(\lambda)\in(0,1/2]$.  At the cost level
\eqref{eq:timing-family-theta-fold}, economic admissibility is automatic under
$\bar\Gamma\ge0$.  Indeed, the fold equation gives
\[
 \lambda(V^{\fresh})'(e_f)-\frac{e_f}{32}
 =\lambda\big[(V^{\fresh})'(e_f)-e_f(V^{\fresh})''(e_f)\big].
\]
With $H(e)=(V^{\fresh})'(e)-e(V^{\fresh})''(e)$,
$H(0)=(V^{\fresh})'(0)>0$ and
$H'(e)=-e\,\partial_e^3V^{\fresh}(e)>0$ on the positive-curvature fold
interval.  Thus $H(e_f)>0$ and
$\vartheta_f=(1-\lambda)\bar\Gamma+\lambda H(e_f)>0$.  At this cost,
\[
 F_\lambda=F_{\lambda,e}=0,
 \qquad F_{\lambda,\vartheta}=1,
 \qquad
 \frac12F_{\lambda,ee}
 =-\frac\lambda2(V^{\fresh})'''(e_f)>0.
\]
The scalar Lyapunov--Schmidt/implicit-function argument therefore gives a
nondegenerate saddle-node.  The projected boundary condition also matters for
the equilibrium count.  Define
\[
 \vartheta_0(\lambda)=(1-\lambda)\bar\Gamma
 +\lambda(V^{\fresh})'(0).
\]
Then $e=0$ is a projected equilibrium exactly when
$\vartheta\ge\vartheta_0(\lambda)$.  At the fold, using
$1/32=\lambda(V^{\fresh})''(e_f)$,
\begin{align*}
 \vartheta_f-\vartheta_0
 &=\lambda\left[(V^{\fresh})'(e_f)-(V^{\fresh})'(0)
    -e_f(V^{\fresh})''(e_f)\right]\\
 &=\lambda\int_0^{e_f}
   \left[(V^{\fresh})''(s)-(V^{\fresh})''(e_f)\right]\,\dd s>0,
\end{align*}
because $(V^{\fresh})''$ is strictly decreasing on the positive-curvature
interval containing $[0,e_f]$.  Hence zero research is already an equilibrium
at the fold.  Moreover $F_{\lambda,e}<0$ on $(0,e_f)$ and
$F_{\lambda,e}>0$ immediately to the right of $e_f$; after the
positive-curvature interval, $(V^{\fresh})''\le0$ makes the latter inequality
global.  Thus, for
$\vartheta\in(\vartheta_0,\vartheta_f)$ sufficiently close to
$\vartheta_f$, $F_\lambda(0,\vartheta)>0$ and
$F_\lambda(e_f,\vartheta)<0$, producing one positive root on each side of
$e_f$.  If $\bar e>e_f$, continuity of the two local roots ensures that both
remain below capacity for all sufficiently close costs.  This proves the
zero-plus-two-positive projected phase diagram in the capacity-accessible
case.  If $\bar e=e_f$, the critical contact lies on the capacity boundary;
if $\bar e<e_f$, the feasible interval ends before the unconstrained fold, so
no interior-fold or three-equilibrium conclusion is asserted.

Moreover
\[
 \lambda_{\mathrm{spec}}(e)=\frac{1}{32(V^{\fresh})''(e)}
\]
is strictly increasing on $[0,1/2]$, so the fold locus satisfies
$e_f(\lambda)\downarrow0$ as $\lambda\downarrow\lambda_{\mathrm{on}}$
and reaches $1/2$ at $\lambda=1$.  At $e=1/2$,
\[
 (V^{\fresh})'=\frac5{32},\qquad
 (V^{\fresh})''=\frac1{32},\qquad
 (V^{\fresh})'''=-\frac{21}{64},
\]
which gives \eqref{eq:timing-family-batch-recovery}.  Notice that
$\bar\Gamma$ enters only the level of
$\vartheta_f(\lambda)$; the Gaussian fold-onset share and fold curvature are
controlled by the contemporaneous derivative.

For the financial criticality statement, Taylor expansion at the fold gives
\[
 \vartheta
 =\vartheta_f-b_\lambda(e-e_f)^2+O(|e-e_f|^3),
 \qquad
 b_\lambda=-\frac\lambda2(V^{\fresh})'''(e_f)>0.
\]
Inverting this quadratic turning point proves
\eqref{eq:timing-fold-branches}--\eqref{eq:timing-research-sensitivity}.  The
positive-position portfolio extracted from the same Gaussian quadratic is
\[
 \pi(e)=\frac{\frac32p-\sqrt p}{p+1},
 \qquad
 \pi'(e)=\frac{p+3\sqrt p-1}{2\sqrt p\,(p+1)^2}>0
 \quad(p\ge1/2).
\]
The formula \eqref{eq:timing-Vp-app} likewise gives
$(V^{\fresh})'(e)>0$ for every $e\ge0$.  Chain rule therefore transfers the
same inverse-square-root pole to portfolio exposure and continuation value on
both local branches.  Finally, with
$\widehat{\mathcal W}_{\lambda,\pm}(\vartheta)
=\mathcal W_\lambda(e_\pm(\vartheta),\vartheta)$,
\eqref{eq:timing-family-potential} gives $\partial_e\mathcal W_\lambda=-F_\lambda$.
Hence along each interior equilibrium branch
$\dd\widehat{\mathcal W}_{\lambda,\pm}/\dd\vartheta=-e_\pm\to-e_f$,
whereas
$\dd^2\widehat{\mathcal W}_{\lambda,\pm}/\dd\vartheta^2
=-\dd e_\pm/\dd\vartheta$ has the inverse-square-root pole in
\eqref{eq:timing-research-sensitivity}.
At $\lambda=1$, $b_1=21/128$, $\pi'(1/2)=3/8$, and
$(V^{\fresh})'(1/2)=5/32$, which gives the exact constants in
\eqref{eq:batch-portfolio-criticality}--\eqref{eq:batch-value-criticality}.


\begin{thebibliography}{99}

\bibitem[Akey et~al.(2026)Akey, Robertson, and Simutin]{AkeyRobertsonSimutin2026}
Akey, P., Robertson, A. Z., \& Simutin, M. (2026). Noisy factors? The retroactive impact of methodological changes on the Fama--French factors. \emph{Review of Finance}, rfag002. doi: 10.1093/rof/rfag002.

\bibitem[Bain and Crisan(2009)]{BainCrisan2009}
Bain, A., \& Crisan, D. (2009). \emph{Fundamentals of Stochastic Filtering} (Stochastic Modelling and Applied Probability, Vol. 60). New York, NY: Springer. doi: 10.1007/978-0-387-76896-0.

\bibitem[Board of Governors et~al.(2026)]{FederalReserve2026ModelRisk}
Board of Governors of the Federal Reserve System, Federal Deposit Insurance Corporation, \& Office of the Comptroller of the Currency. (2026). \emph{Revised Guidance on Model Risk Management}, SR 26-2 / OCC Bulletin 2026-13 / FIL-15-2026, April 17. Available at \url{https://www.federalreserve.gov/supervisionreg/srletters/SR2602.htm}.

\bibitem[Bonnans and Shapiro(2000)]{BonnansShapiro2000}
Bonnans, J. F., \& Shapiro, A. (2000). \emph{Perturbation Analysis of Optimization Problems} (Springer Series in Operations Research and Financial Engineering). New York, NY: Springer. doi: 10.1007/978-1-4612-1394-9.

\bibitem[Bregman(1967)]{Bregman1967}
Bregman, L. M. (1967). The relaxation method of finding the common point of convex sets and its application to the solution of problems in convex programming. \emph{USSR Computational Mathematics and Mathematical Physics, 7}, 200--217. doi: 10.1016/0041-5553(67)90040-7.

\bibitem[Crandall et~al.(1992)Crandall, Ishii, and Lions]{CrandallIshiiLions1992}
Crandall, M. G., Ishii, H., \& Lions, P.-L. (1992). User's guide to viscosity solutions of second order partial differential equations. \emph{Bulletin of the American Mathematical Society, 27}, 1--67. doi: 10.1090/S0273-0979-1992-00266-5.

\bibitem[Crandall and Rabinowitz(1971)]{CrandallRabinowitz1971}
Crandall, M. G., \& Rabinowitz, P. H. (1971). Bifurcation from simple eigenvalues. \emph{Journal of Functional Analysis, 8}, 321--340. doi: 10.1016/0022-1236(71)90015-2.

\bibitem[Cummins et~al.(2023)Cummins, Gogolin, Kearney, Kiely, and Murphy]{CumminsEtAl2023}
Cummins, M., Gogolin, F., Kearney, F., Kiely, G., \& Murphy, B. (2023). Practice-relevant model validation: Distributional parameter risk analysis in financial model risk management. \emph{Annals of Operations Research, 330}, 431--455. doi: 10.1007/s10479-022-04574-x.

\bibitem[Da Lio and Ley(2006)]{DaLioLey2006}
Da Lio, F., \& Ley, O. (2006). Uniqueness results for second-order Bellman--Isaacs equations under quadratic growth assumptions and applications. \emph{SIAM Journal on Control and Optimization, 45}, 74--106. doi: 10.1137/S0363012904440897.

\bibitem[Epstein and Schneider(2003)]{EpsteinSchneider2003}
Epstein, L. G., \& Schneider, M. (2003). Recursive multiple-priors. \emph{Journal of Economic Theory, 113}, 1--31. doi: 10.1016/S0022-0531(03)00097-8.

\bibitem[Fleming and Soner(2006)]{FlemingSoner2006}
Fleming, W. H., \& Soner, H. M. (2006). \emph{Controlled Markov Processes and Viscosity Solutions} (2nd ed.; Stochastic Modelling and Applied Probability, Vol. 25). New York, NY: Springer. doi: 10.1007/0-387-31071-1.

\bibitem[Ganguli and Yang(2009)]{GanguliYang2009}
Ganguli, J. V., \& Yang, L. (2009). Complementarities, multiplicity, and supply information. \emph{Journal of the European Economic Association, 7}, 90--115. doi: 10.1162/JEEA.2009.7.1.90.

\bibitem[Gilboa and Schmeidler(1989)]{GilboaSchmeidler1989}
Gilboa, I., \& Schmeidler, D. (1989). Maxmin expected utility with non-unique prior. \emph{Journal of Mathematical Economics, 18}, 141--153. doi: 10.1016/0304-4068(89)90018-9.

\bibitem[Gripenberg et~al.(1990)Gripenberg, Londen, and Staffans]{GripenbergEtAl1990}
Gripenberg, G., Londen, S.-O., \& Staffans, O. (1990). \emph{Volterra Integral and Functional Equations} (Encyclopedia of Mathematics and its Applications, Vol. 34). Cambridge, UK: Cambridge University Press. doi: 10.1017/CBO9780511662805.

\bibitem[Gu et~al.(2020)Gu, Kelly, and Xiu]{GuKellyXiu2020}
Gu, S., Kelly, B., \& Xiu, D. (2020). Empirical asset pricing via machine learning. \emph{The Review of Financial Studies, 33}, 2223--2273. doi: 10.1093/rfs/hhaa009.

\bibitem[Guan, Jia, et~al.(2024)Guan, Jia, and Liang]{GuanJiaLiang2024}
Guan, G., Jia, Y., \& Liang, Z. (2024). Robust portfolio selection under state-dependent confidence set. \emph{arXiv:2409.19571}. doi: 10.48550/arXiv.2409.19571.

\bibitem[Guan, Liang, et~al.(2024)Guan, Liang, and Xia]{GuanLiangXia2024}
Guan, G., Liang, Z., \& Xia, J. (2024). Equilibrium portfolio selection for smooth ambiguity preferences. \emph{Mathematics of Operations Research, 50}, 1042--1071. doi: 10.1287/moor.2023.0112.

\bibitem[Hellwig and Veldkamp(2009)]{HellwigVeldkamp2009}
Hellwig, C., \& Veldkamp, L. (2009). Knowing what others know: Coordination motives in information acquisition. \emph{The Review of Economic Studies, 76}, 223--251. doi: 10.1111/j.1467-937X.2008.00515.x.

\bibitem[Jensen et~al.(2026)Jensen, Kelly, Malamud, and Pedersen]{JensenKellyMalamudPedersen2026}
Jensen, T. I., Kelly, B., Malamud, S., \& Pedersen, L. H. (2026). Machine learning and the implementable efficient frontier. \emph{The Review of Financial Studies}, hhag022. doi: 10.1093/rfs/hhag022.

\bibitem[John(1982)]{John1982}
John, F. (1982). \emph{Partial Differential Equations} (4th ed.; Applied Mathematical Sciences, Vol. 1). New York, NY: Springer.

\bibitem[Lakner(1998)]{Lakner1998}
Lakner, P. (1998). Optimal trading strategy for an investor: The case of partial information. \emph{Stochastic Processes and their Applications, 76}, 77--97. doi: 10.1016/S0304-4149(98)00032-5.

\bibitem[Lei and Pun(2024)]{LeiPun2024}
Lei, Q., \& Pun, C. S. (2024). Nonlocality, nonlinearity, and time inconsistency in stochastic differential games. \emph{Mathematical Finance, 34}, 190--256. doi: 10.1111/mafi.12420.

\bibitem[Liang et~al.(2026)Liang, Liu, and Ma]{LiangLiuMa2026}
Liang, Z., Liu, Y., \& Ma, X. (2026). Robust Bayesian portfolio optimization with discrepancy-based posterior ambiguity. \emph{arXiv:2606.17643}. doi: 10.48550/arXiv.2606.17643.

\bibitem[Liang et~al.(2025)Liang, Wang, and Xia]{LiangWangXia2025}
Liang, Z., Wang, S., \& Xia, J. (2025). Portfolio selection with costly information acquisition. \emph{arXiv:2508.12373}. doi: 10.48550/arXiv.2508.12373.

\bibitem[Liang and Ye(2024)]{LiangYe2024}
Liang, Z., \& Ye, Q. (2024). Optimal information acquisition for eliminating estimation risk. \emph{arXiv:2405.09339}. doi: 10.48550/arXiv.2405.09339.

\bibitem[Lieberman(1996)]{Lieberman1996}
Lieberman, G. M. (1996). \emph{Second Order Parabolic Differential Equations}. Singapore: World Scientific. doi: 10.1142/3302.

\bibitem[McLean and Pontiff(2016)]{McLeanPontiff2016}
McLean, R. D., \& Pontiff, J. (2016). Does academic research destroy stock return predictability? \emph{The Journal of Finance, 71}, 5--32. doi: 10.1111/jofi.12365.

\bibitem[Myatt and Wallace(2012)]{MyattWallace2012}
Myatt, D. P., \& Wallace, C. (2012). Endogenous information acquisition in coordination games. \emph{The Review of Economic Studies, 79}, 340--374. doi: 10.1093/restud/rdr018.

\bibitem[Revuz and Yor(1999)]{RevuzYor1999}
Revuz, D., \& Yor, M. (1999). \emph{Continuous Martingales and Brownian Motion} (3rd ed.; Grundlehren der mathematischen Wissenschaften, Vol. 293). Berlin, Germany: Springer. doi: 10.1007/978-3-662-06400-9.

\bibitem[Schneider(2013)]{Schneider2013}
Schneider, R. (2013). \emph{Convex Bodies: The Brunn--Minkowski Theory} (2nd ed.; Encyclopedia of Mathematics and its Applications, Vol. 151). Cambridge, UK: Cambridge University Press. doi: 10.1017/CBO9781139003858.

\bibitem[Wang et~al.(2026)Wang, Yu, Zhang, and Zhou]{WangYuZhangZhou2026}
Wang, Z., Yu, X., Zhang, J., \& Zhou, Z. (2026). Equilibrium under time-inconsistency: A new existence theory by vanishing entropy regularization. \emph{arXiv:2603.10321v2}. doi: 10.48550/arXiv.2603.10321.

\bibitem[Wu(2026)]{Wu2026FeedbackCycles}
Wu, C.-H. (2026). Feedback cycles in exploratory equilibria. \emph{arXiv:2607.18128}. doi: 10.48550/arXiv.2607.18128.

\bibitem[Yan\c{c}(2026)]{YancPriceRelearning2026}
Yan\c{c}, H. (2026). The Price of Relearning: Ambiguity Provenance in Dynamic Decisions. \emph{arXiv:2608.16255v2}, revised August 26, 2026. doi: 10.48550/arXiv.2608.16255.
\end{thebibliography}
\end{document}